\newcommand\numberthis{\addtocounter{equation}{1}\tag{\theequation}}
\newcommand{\E}{\mathbb{E}}
\newtheorem{Thm}{Theorem}
\newtheorem{Lem}[Thm]{Lemma}
\newtheorem{Def}{Definition}
\newtheorem{Exm}{Example}
\newtheorem{Rem}{Remark}
\newtheorem{Ntn}{Notation}
\title{Private Stream Aggregation Revisited
\thanks{The research was supported by the DFG Research Training Group GRK $1817/1$}
}
\author{
Filipp Valovich, Francesco Ald\`{a}\\ Horst G\"{o}rtz Institute for IT Security\\ Faculty of Mathematics\\ Ruhr-Universit\"{a}t Bochum, Universit\"{a}tsstrasse 150, 44801 Bochum, Germany\\ Email: \{filipp.valovich, francesco.alda\}@rub.de
}
\date{}
\begin{document}

\maketitle

\pagenumbering{gobble}
\pagestyle{empty}


\begin{abstract} \noindent In this work, we investigate the problem of private statistical analysis in the distributed and semi-honest setting. In particular, we study properties of Private Stream Aggregation schemes, first introduced by Shi et al. \cite{2}. These are computationally secure protocols for the aggregation of data in a network and have a very small communication cost. 
We show that such schemes can be built upon any key-homomorphic \textit{weak} pseudo-random function. 
Thus, in contrast to the aforementioned work, our security definition can be achieved in the \textit{standard model}. In addition, we give a computationally efficient instantiation of this protocol based on the Decisional Diffie-Hellman problem. 
Moreover, we show that every mechanism which preserves $(\epsilon,\delta)$-differential privacy provides \textit{computational} $(\epsilon,\delta)$-differential privacy when it is executed through a Private Stream Aggregation scheme. Finally, we introduce a novel perturbation mechanism based on the \textit{Skellam distribution} that is suited for the distributed setting, and compare its performances with those of previous solutions.
\end{abstract}

\section{Introduction}

The framework of statistical disclosure control aims at providing strong privacy guarantees for the records stored in a database while enabling accurate statistical analyses to be performed. In recent years, \textit{differential privacy} has become one of the most important paradigms for privacy-preserving statistical analyses. Generally, the notion of differential privacy is considered in the centralised setting where we assume the existence of a \textit{trusted curator} \cite{16, 24, 8, 9} who collects data in the clear, perturbs it properly (e.g. by adding Laplace noise) and publishes it. In this way, the output statistics are not significantly influenced by the presence (resp. absence) of a particular record in the database.\\
In this work, we study how to preserve differential privacy when we cannot rely on a trusted curator. In this so-called \textit{distributed setting}, the users have to send their data to an untrusted aggregator. Preserving differential privacy and achieving high accuracy in the distributed setting is of course harder than in the centralised setting, since the users have to execute a perturbation mechanism on their own. In order to achieve the same accuracy as provided by well-known techniques in the centralised setting, the work by Shi et al. \cite{2} introduces the \textit{Private Stream Aggregation} (PSA) scheme, a cryptographic protocol which enables each user to securely send encrypted time-series data to an aggregator. The aggregator is then able to decrypt the aggregate of all data in each time step, but cannot retrieve any further information about the individual data. Using such a protocol, the task of perturbation can be split among the users, such that differential privacy is preserved \textit{and} high accuracy is guaranteed. For a survey of applications of this protocol, we refer to \cite{2}.\\
In \cite{2}, a PSA scheme for sum queries is provided and some strong security guarantees under the Decisional Diffie-Hell\-man assumption are shown. However, this instantiation has some limitations. First, the security only holds in the random oracle model; second, its decryption algorithm requires the solution of the discrete logarithm in a given range, which can be very time-consuming if the number of users and the plaintext space are large. 
Moreover, since a PSA scheme provides \textit{computational} security, the perturbation mechanism in use can only provide a computational version of differential privacy, a notion first introduced by Mironov et al. \cite{15}. In \cite{2}, however, a connection between the security of a PSA scheme and differential privacy is not explicitly shown. In a subsequent work by Chan et al. \cite{3}, this connection is still not completely established, since the polynomial-time reduction between an attacker against a PSA scheme and a database distinguisher is missing.\\

\noindent\textbf{Objectives.} In order to overcome the limitations of the construction in \cite{2}, in this work we address the following problems.
\begin{itemize}
 \item We want to give \textit{sufficient conditions} for a PSA scheme which has a security guarantee in the \textit{standard model}.
 \item According to these conditions, we want to construct a concrete instantiation consisting of efficient algorithms, even when the number of users and the plaintext space become large.
 \item We aim at showing that an information-theoretical differentially private mechanism preserves \textit{computational differential privacy} when it is executed through a (computationally) secure PSA scheme. 
 \item We want to investigate differentially private mechanisms suitable for an execution through a PSA scheme.
\end{itemize}

\noindent\textbf{Contributions.} We achieve the aforementioned goals in the following manner. In order to derive sufficient conditions for PSA schemes with a certain security guarantee, we lower the requirements of Aggregator Obliviousness from \cite{2} by abrogating the attacker's possibility to \textit{adaptively compromise} users during the execution of a PSA scheme with time-series data. We show that a PSA scheme for achieving this lower security level can be built upon any \textit{key-homomorphic weak pseudo-random function}. Since weak pseudo-randomness can be achieved in the standard model, this condition also enables secure schemes in the standard model. In particular, we can build a key-homomorphic weak pseudo-random function based on the Decisional Diffie-Hell\-man assumption in the group of quadratic residues modulo a \textit{squared safe prime}. This function is used for the construction of a PSA scheme for sum queries. By comparing the running times and practical performances of our PSA scheme and the one given by Shi et al. \cite{2} at the same security level, we find that our solution provides a significant speed-up for decryption when the plaintext space is large while decelerating the encryption only by a constant factor. 

Reduction-based security proofs for cryptographic schemes usually require an attacker in the corresponding security game to send two different plaintexts (or plaintext collections) to a challenger. The adversary receives then back a ciphertext which is the encryption of one of these collections and has to guess which one it is. In any security definition for a PSA scheme, these collections must satisfy a particular requirement (i.e. they must lead to the same aggregate), since the attacker has the capability to decrypt them (different aggregates would make the adversary's task trivial). In general, however, this requirement cannot be satisfied in the context of differential privacy. Introducing a novel kind of security reduction which deploys a \textit{biased coin} flip, we can show that, whenever a randomised perturbation procedure is involved in a PSA scheme, the requirement of having collections with equal aggregate can be abolished. This result can be generalised to any cryptographic scheme with such a requirement. Using this property, we are able to show that if a mechanism preserves differential privacy, then it preserves computational differential privacy when it is used as a randomised perturbation procedure in a PSA scheme.


Finally, we compare three mechanisms: the Geometric mechanism from \cite{2}, the Binomial mechanism from \cite{14} and the \textit{Skellam mechanism} introduced in this work. All three mechanisms preserve differential privacy and make use of discrete probability distributions. Therefore, they are well-suited for an execution through a PSA scheme. For generating the right amount of noise among all users, these mechanisms apply two different approaches. While in the Geometric mechanism, with high probability, only one user generates the noise necessary for differential privacy, the Binomial and Skellam mechanisms allow all users to generate noise of small variance, that sums up to the required value for privacy. We show that for high privacy levels, the theoretical error bound of the Skellam mechanism is slightly better than that of the other two. At the same time, we provide experimental results showing that the Geometric and Skellam mechanisms have a comparable accuracy in practice, while beating the one of the Binomial mechanism.\\

\noindent\textbf{Related Work.} 
As pointed out above, our contributions are mostly related to the work of Shi et al. \cite{2} and Chan et al. \cite{3}.
Privacy-preserving aggregation of time-series data in the presence of an untrusted aggregator has also been studied in various other works, e.g. \cite{4, 37, 33, 34, 38}. Beimel et al. \cite{4} and Eigner et al. \cite{37} show that secure multi-party computation techniques can be used for data aggregation under differential privacy. These techniques usually have a high communication cost, whereas PSA requires each user to send exactly one message per time-step. 
The protocol given by Rastogi et al. \cite{19} is based on the threshold Paillier cryptosystem. It requires an extra round of interaction between the users and the aggregator in every time-step in order to decrypt the sum queries. In contrast, PSA requires the users to interact with the aggregator only for sending the ciphertexts. \'{A}cs et al. \cite{36} use an additive homomorphic encryption scheme for sending time-series data, but it requires the generation of a pair of encryption/decryption keys for each pair of users. Moreover, reuse of key pairs for different time-steps potentially leads to security breaches. In a PSA scheme each user gets only one encryption key, which can be securely used for \textit{every} time-step. Using additive homomorphic encryption, Rieffel et al. \cite{35} construct a scheme which does not require extra rounds of interaction, but is not fully resistant against collusions and the cost of computation and storage is roughly equal to the number of compromised users that is tolerable by the system. 
Li et al. \cite{33,34} use the homomorphic encryption scheme given by Castelluccia et al. \cite{31,32} in order to construct an efficient protocol for sending data in mobile sensing applications. 
This scheme is resistant against collusions, but each user has to store multiple keys, depending on the number of compromised users in the network. Moreover, for encryption and decryption the scheme requires the computation of as many pseudo-random values as the number of keys in the network, making the computational effort for the analyst rather high. Thus, the costs of this scheme depend on the number of compromised users. 
A PSA scheme is fully resistant against \textit{any} number of collusions and furthermore, we provide a solution, where the computation and storage costs are independent of the number of users. Joye et al. \cite{38} provide a protocol with the same security guarantees as in \cite{2} in the random oracle model. The security of their scheme relies on the DCR assumption (rather than DDH as in \cite{2}) and as a result, in the security reduction they can remove a factor which is cubic in the number of users. However, their scheme involves a trusted party for setting some public parameters. In this work we provide an instantiation of our \textit{generic} PSA construction, which is similar to the one in \cite{38} but relies on the DDH assumption. While in our generic security reduction we cannot avoid the cubic factor in the number of users, our construction \textit{does not} involve any trusted party and has security guarantees in the standard model.

Another series of works deals with a distributed generation of noise for preserving differential privacy. Dwork et al. \cite{14} consider the Gaussian distribution for splitting the task of noise generation among all users. Their proposed scheme requires more interactions between the users than our solution. In \cite{36}, privacy-preserving data aggregation is applied to smart metering and the generation of Laplace noise is performed in a distributed manner, since each meter simply generates the difference of two Gamma distributed random variables as a share of a Laplace distributed random variable. In \cite{19} each user generates a share of Laplace noise by generating a vector of four Gaussian random variables. For a survey of the mechanisms given in \cite{19} and \cite{36}, we refer to \cite{21}. However, the aforementioned mechanisms generate noise drawn according to continuous distributions, but for the use in a PSA scheme, discrete noise is required. Therefore, we consider proper discrete distributions and compare their performances for private statistical analyses.

\section{Preliminaries}

\subsection{Problem statement}

In this work, we consider a distributed and semi-honest setting where $n$ users are asked to participate in some statistical analyses but do not trust the data analyst (or aggregator), who is assumed to be honest but curious. Therefore, the users cannot provide their own data in the clear. Moreover, they communicate solely and independently with the untrusted aggregator, who wants to analyse the users data by means of queries in time-series and aims at obtaining answers as accurate as possible. More specifically, assume that the data items belong to a data universe $\mathcal{D}$. For a sequence of time-steps $t\in T$, where $T$ is a discrete time period, the analyst sends queries which are answered by the users in a distributed manner. Each query is modeled as a function $f:\mathcal{D}^n\to \mathcal{O}$ for a finite or countably infinite set of possible outputs (i.e. answers to the query) $\mathcal{O}$.\\ 
We also assume that some users may act in order to compromise the privacy of the other participants. More precisely, we assume the existence of a publicly known constant $\gamma\in(0,1]$ which is the a priori estimate of the lower bound on the fraction 
of non-compromised users who honestly follow the protocol and want to release useful information about their data (with respect to a particular query $f$), while preserving $(\epsilon,\delta)$-differential privacy. 
The remaining $(1-\gamma)$-fraction of users is assumed to be compromised. Compromised users honestly follow the protocol but are aimed at violating the privacy of non-compromised users.
For that purpose, these users form a coalition with the analyst and send her auxiliary information, e.g. their own data in the clear.\\ 
For computing the answers to the aggregator's queries, a special cryptographic protocol, called Private Stream Aggregation (first introduced in \cite{2}), is used by \textit{all} users. In connection with a perturbation mechanism, this scheme assures that the analyst is only able to learn a noisy aggregate of the users' data (as close as possible to the real answer $f(D)$) and nothing else. In contrast to common secure multi-party techniques \cite{25, 6, 20}, this protocol requires each user to send to the analyst only one message per query. 

\subsection{Definitions}

We consider a database as an element $D\in\mathcal{D}^n$, where $\mathcal{D}$ is the data universe and $n$ is the number of users. Since $D$ may contain sensitive information, the users want to protect their privacy. Therefore, a privacy-preserving mechanism must be applied. Unless stated differently, we always assume that a mechanism is applied in the distributed setting. Differential privacy \cite{8} is a well-established notion for privacy-preserving statistical analyses. We recall that a randomised mechanism preserves differential privacy if its application on two adjacent databases, i.e. databases which differ in one entry only, leads to close distributions of the output.

\begin{Def}[Differential Privacy~\cite{8}]
Let $\mathcal{R}$ be a (possibly infinite) set and let $n\in\mathbb{N}$. A randomised mechanism $\mathcal{A}:\mathcal{D}^n\to\mathcal{R}$ preserves $(\epsilon,\delta)$-differential privacy (short: \mbox{\upshape\sffamily DP}), if for all adjacent databases $D_0, D_1\in\mathcal{D}^n$ and all $R\subseteq\mathcal{R}$:
\[\Pr[\mathcal{A}(D_0)\in R]\leq e^\epsilon\cdot \Pr[\mathcal{A}(D_1)\in R]+\delta.\]
The probability space is defined over the randomness of $\mathcal{A}$.
\end{Def}

The additional parameter $\delta$ is necessary for mechanisms which cannot preserve $\epsilon$-\mbox{\upshape\sffamily DP} (i.e. $(\epsilon,0)$-\mbox{\upshape\sffamily DP}) for certain cases. However, if the probability that these cases occur is bounded by $\delta$, then the mechanism preserves $(\epsilon,\delta)$-\mbox{\upshape\sffamily DP}.

In the literature, there are well-established mechanisms for preserving differential privacy, e.g. the \textit{Laplace mechanism} \cite{8} and the \textit{Exponential mechanism} \cite{9}. In order to privately evaluate a query, these mechanisms draw noisy values according to some distribution depending on the query's global sensitivity.

\begin{Def}[Global Sensitivity~\cite{8}]
The global sensitivity $S(f)$ of a query\linebreak $f:\mathcal{D}^n\to\mathbb{R}^k$ is defined as
\[S(f)=\max_{D_0,D_1\mbox{\scriptsize \,\, adjacent}}||f(D_0)-f(D_1)||_1.\]
\end{Def}

In particular, we will consider sum queries $f_{\mathcal{D}}:\mathcal{D}^n\to\mathbb{Z}$ defined as $f_{\mathcal{D}}(D):=\sum_{i=1}^n d_i$,\linebreak for $D=(d_1,\ldots,d_n)\in\mathcal{D}^n$ and $\mathcal{D}\subseteq\mathbb{Z}$.\\
For measuring how well the output of a mechanism estimates the real data with respect to a particular query, we use the notion of $(\alpha,\beta)$-accuracy.

\begin{Def}[Accuracy~\cite{17}]
The output of a mechanism $\mathcal{A}$ achieves $(\alpha,\beta)$-accuracy for a query $f:\mathcal{D}^n\to\mathbb{R}$ if for all $D\in\mathcal{D}^n$:
\[\Pr[|\mathcal{A}(D)-f(D)|\leq\alpha]\geq 1-\beta.\]
The probability space is defined over the randomness of $\mathcal{A}$.
\end{Def}

The use of a cryptographic protocol for transferring data 
provides a computational security level. If such a protocol is applied for preserving differential privacy, this implies that only a computational level of differential privacy can be provided. Our definition of computational differential privacy follows the notion of Chan et al. \cite{3}.

\begin{Def}[Computational Differential Privacy~\cite{3}] 
\mbox{\,\,} Let $\kappa$ be a security parameter and $n\in\mathbb{N}$ with $n=\text{poly}(\kappa)$. A randomised mechanism $\mathcal{A}:\mathcal{D}^n\to\mathcal{R}$ preserves computational $(\epsilon,\delta)$-differential privacy (short: \mbox{\upshape\sffamily CDP}), if for all adjacent databases $D_0, D_1\in\mathcal{D}^n$ and all probabilistic polynomial-time distinguishers $\mathcal{D}_{\mbox{\scriptsize\upshape\sffamily CDP}}$:
\[\Pr[\mathcal{D}_{\mbox{\scriptsize\upshape\sffamily CDP}}(1^\kappa,\mathcal{A}(D_0))=1]\leq e^\epsilon\cdot\Pr[\mathcal{D}_{\mbox{\scriptsize\upshape\sffamily CDP}}(1^\kappa,\mathcal{A}(D_1))=1]+\delta+\text{\upshape\sffamily neg}(\kappa),\]
where $\text{\upshape\sffamily neg}(\kappa)$ is a negligible function in $\kappa$. The probability space is defined over the randomness of $\mathcal{A}$ and $\mathcal{D}_{\mbox{\scriptsize\upshape\sffamily CDP}}$.
\end{Def}

The notion of computational differential privacy is a natural computational-\linebreak indistinguishability-extension of the infor\-mation-theoretical definition. The advantage is that preserving differential privacy only against bounded attackers helps to substantially reduce the error of the answer provided by the mechanism. In Section \ref{psa}, we investigate how to obtain a computationally secure protocol which allows the analyst to compute only the aggregate of all users' data and no further information. The scheme for sum queries we are going to construct uses a special mapping into a group, which we define formally.

\begin{Def}[$v$-isomorphic embedding] An \,\,\,injec\-tive mapping $\varphi:\{-v,\ldots,v\}\to V$, where $(V,\circ)$ is a group, is a $v$-isomorphic embedding if for all $n\in\mathbb{N}$ and all finite sequences $(a_i)_{i=1,\ldots,n}$ of values in $\{-v,\ldots,v\}$ with $\left|\sum_i a_i\right|\leq v$:
\[\varphi\left(\sum_{i=1}^n a_i\right)=\varphi(a_1)\circ\ldots\circ\varphi(a_n).\]
\end{Def}

From this definition it is clear that a $v$-isomorphic embedding is also $v^\prime$-isomorphic for every integer $0<v^\prime\leq v$. In the analysis of the secure protocol, we furthermore make use of the following definition.

\begin{Def}[Weak PRF~\cite{26}] Let $\kappa$ be a security parameter. Let $A,B,C$ be sets. A family of functions 
\[\mathcal{F}=\{\text{\upshape\sffamily F}_a\,|\,\text{\upshape\sffamily F}_a: B\to C\}_{a\in A}\] 
is called a weak pseudo-random function (PRF) family if for all probabilistic polynomial-time algorithms $\mathcal{D}_{\mbox{\scriptsize PRF}}^{\mathcal{O}(\cdot)}$ with oracle access to $\mathcal{O}(\cdot)$ (where $\mathcal{O}(\cdot)\in\{\text{\upshape\sffamily F}_a(\cdot),\text{\upshape\sffamily rand}(\cdot)\}$) on any polynomial number of uniformly chosen inputs, we have:
\[|\Pr[\mathcal{D}_{\mbox{\scriptsize PRF}}^{\text{\upshape\sffamily F}_a(\cdot)}(\kappa)=1]-\Pr[\mathcal{D}_{\mbox{\scriptsize PRF}}^{\text{\upshape\sffamily rand}(\cdot)}(\kappa)=1]|\leq\text{\upshape\sffamily neg}(\kappa),\]
where $a\in_R A$ and $\text{\upshape\sffamily rand}\in_R\{f\,|\,f:B\to C\}$ is a random mapping from $B$ to $C$.
\end{Def}

\subsection{Mechanism overview}\label{mechov}

In this work we prove the following result by showing the connection between a key-homomorphic weak pseudo-random function and a differentially private mechanism for sum queries.


\begin{Thm}\label{mainthm} Let $\epsilon>0$, $w<w^\prime\in\mathbb{Z}$, $m,n\in\mathbb{N}$ with $\max\{|w|,|w^\prime|\}<m$. Let\linebreak $\mathcal{D}=\{w,\ldots,w^\prime\}$ and $f_{\mathcal{D}}$ be a sum query. If there exist groups $G^\prime\subseteq G$, a key-homomorphic weak pseudo-random function family mapping into $G^\prime$ and an efficiently computable and efficiently invertible $mn$- isomorphic embedding \[\varphi:\{-mn,\ldots,mn\}\to G,\] then there exists an efficient mechanism for $f_{\mathcal{D}}$ that preserves $(\epsilon,\delta)$-\mbox{\upshape\sffamily CDP} for any constant $0<\delta<1$ with an error bound of $O(S(f_{\mathcal{D}})/\epsilon)$ and requires each user to send exactly one message.
\end{Thm}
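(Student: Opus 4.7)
The plan is to assemble the mechanism in three layers: a PSA scheme built from the given key-homomorphic weak PRF and embedding, a differentially private discrete perturbation run by each user, and a reduction showing that the composition preserves computational differential privacy with the claimed accuracy.

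First, I would construct the PSA scheme. Let $\mathcal{F}=\{\mathrm{F}_a\}_{a\in A}$ denote the key-homomorphic weak PRF family into $G'$ with key space $A$. In the setup phase, a one-time key-distribution step chooses keys $s_1,\dots,s_n$ in $A$ uniformly and sets the aggregator key $s_0$ so that $s_0 \oplus s_1\oplus\dots\oplus s_n$ equals the identity (using the key-homomorphism on $A$). At time-step $t$, user $i$ holding datum $d_i\in\mathcal{D}$ first perturbs it to an integer $\tilde d_i$ (see below) and outputs the ciphertext $c_{i,t}= \varphi(\tilde d_i)\circ \mathrm{F}_{s_i}(t)$ in $G$. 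The aggregator multiplies all $c_{i,t}$ together with $\mathrm{F}_{s_0}(t)$; by key-homomorphism the PRF factors collapse to the identity, so the product equals $\varphi(\tilde d_1)\circ\dots\circ\varphi(\tilde d_n)$. Provided the perturbed aggregate $\sum_i \tilde d_i$ stays inside $[-mn,mn]$, the $mn$-isomorphic property of $\varphi$ lets the aggregator recover $\sum_i\tilde d_i$ by efficient inversion of $\varphi$. Each user sends exactly one group element per time-step.

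Second, I would instantiate the perturbation as a distributed differentially private mechanism producing integer noise, e.g.\ the Skellam mechanism advertised in the introduction: each user adds an independent integer noise $r_i$ so that $\sum_i r_i$ is distributed (or closely approximates) the noise distribution needed to mask a single change of magnitude $S(f_{\mathcal{D}})$. Calibrating its scale to $S(f_{\mathcal{D}})/\epsilon$ yields information-theoretic $(\epsilon,\delta_0)$-\mbox{\upshape\sffamily DP} for the ideal centralised output $f_{\mathcal{D}}(D)+\sum_i r_i$ with any fixed $\delta_0<\delta$, and standard tail bounds give $(\alpha,\beta)$-accuracy with $\alpha=O(S(f_{\mathcal{D}})/\epsilon)$. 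The same bounds ensure that $\bigl|f_{\mathcal{D}}(D)+\sum_i r_i\bigr|\le mn$ except on an event we will absorb into $\delta$, so the embedding inversion succeeds.

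Third, I would show that routing this mechanism through the PSA scheme only downgrades privacy to the computational variant. This requires two lemmas that I would prove separately and then compose: (a)~a PSA security lemma stating that Aggregator Obliviousness (without adaptive compromise) reduces to the weak pseudo-randomness of $\mathcal{F}$, since the inputs $t$ on which the PRF is queried are fresh and can be simulated as uniform challenges to the weak PRF oracle, and key-homomorphism lets us aggregate compromised-user keys in the reduction; (b)~a reduction lemma saying that executing any $(\epsilon,\delta')$-\mbox{\upshape\sffamily DP} randomised procedure through an Aggregator Oblivious PSA scheme yields $(\epsilon,\delta'+\mathrm{neg}(\kappa))$-\mbox{\upshape\sffamily CDP}. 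The hard part, as the introduction signals, will be lemma (b): the natural reduction from a \mbox{\upshape\sffamily CDP} distinguisher to a PSA attacker needs challenge plaintext collections with identical aggregate, which fails once each user independently randomises. I would address this with the biased-coin reduction the authors advertise, namely flipping a coin whose bias is tuned to $\Pr[\sum_i r_i^{(0)} = \sum_i r_i^{(1)}]$ (or to the statistical overlap of the two perturbed-aggregate distributions) so that, conditioned on the coin, the two plaintext collections handed to the PSA challenger really do share an aggregate, and the advantage carries through up to the differential-privacy slack $\delta$ and the cryptographic slack $\mathrm{neg}(\kappa)$.

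Combining the two lemmas with the accuracy analysis from the second step yields an efficient mechanism for $f_{\mathcal{D}}$ that is $(\epsilon,\delta)$-\mbox{\upshape\sffamily CDP} with error $O(S(f_{\mathcal{D}})/\epsilon)$ and one message per user, proving the theorem. Efficiency of setup, encryption, and decryption follows from the assumed efficient computability and invertibility of $\varphi$ and of $\mathrm{F}_a$, together with the fact that the decryption step requires only $n+1$ group operations and one inversion of $\varphi$, independent of the plaintext space size.
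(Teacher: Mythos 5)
Your decomposition is exactly the paper's: the theorem is proved by composing a PSA scheme built from the key-homomorphic weak PRF and the $mn$-isomorphic embedding (Theorem \ref{PSATHEOREM}, instantiated as in Example \ref{DDHEXM}), a discrete \mbox{\upshape\sffamily DP} perturbation such as the Skellam mechanism with its accuracy bound (Theorems \ref{privthm} and \ref{errorthm}), and a reduction showing that any \mbox{\upshape\sffamily DP} mechanism executed through a secure PSA scheme preserves \mbox{\upshape\sffamily CDP} (Theorem \ref{cdptheorem}). The PSA construction, the role of key-homomorphism in collapsing the PRF factors, the weak-PRF hybrid reduction, and the Skellam calibration all match the paper.

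The one place your sketch deviates from what actually works is the biased-coin step. The bias you propose first, $\Pr[\sum_i r_i^{(0)}=\sum_i r_i^{(1)}]$, is the collision probability of two independent noise draws and plays no role in the argument; it would not let you hand the PSA challenger two collections with equal aggregate. The paper's reduction instead \emph{first samples the aggregate value} $y$ from the marginal distribution of the perturbed sum (under a uniform challenge bit), sets the coin bias to the \emph{posterior} $p=\Pr[B_{1/2}=0\mid Y=y]$, and then samples \emph{both} candidate plaintext collections from their conditional distributions given aggregate $y$ -- so the equal-aggregate constraint holds by construction for either outcome of the coin (Lemma \ref{EXPEQUIV} shows this resampling is statistically equivalent to the original experiment). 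The point your sketch misses is that differential privacy of the perturbation is precisely what bounds this posterior, $1/(1+e^\epsilon)\le p\le e^\epsilon/(1+e^\epsilon)$, and the ratio $p_{max}/p_{min}=e^\epsilon$ is where the $e^\epsilon$ factor in the \mbox{\upshape\sffamily CDP} guarantee comes from; one also needs the separate Lemma \ref{gamePoneone} showing that a secure PSA scheme remains secure against the $P$-biased version of the game. Your parenthetical ``statistical overlap'' gestures at the right object, but without the sample-the-aggregate-first conditioning the reduction does not go through.
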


As already described, we want the untrusted analyst to be able to learn some aggregated statistics $f_{\mathcal{D}}(D)$ but no additional information about each user's data. Assume that we can design a cryptographic protocol that achieves the aforementioned goal. If we furthermore aim at preserving $(\epsilon,\delta)$-\mbox{\upshape\sffamily DP}, it would be sufficient to add a single copy of (properly distributed) noise $Y$ to the value $f_{\mathcal{D}}(D)$. Since we cannot add such noise once the aggregate has been computed, the users have to generate and add noise to their original data in such a way that the sum of the perturbations has the same distribution as $Y$. For this purpose, we see two different approaches. In the first approach, there is a small probability (depending on the fixed parameter $\gamma$) for each user to add noise sufficient to preserve the privacy of the entire statistics. This probability is calibrated in such a way only one of the $n$ users is expected to add noise at all. Shi et al. \cite{2} investigate this method using the Geometric mechanism. In the second approach, each user generates noise of small variance (again depending on $\gamma$), such that the sum of all noisy terms has enough variance for preserving differential privacy. For this aim, we need discrete probability distributions which are closed under convolution and are known to provide differential privacy. The Binomial mechanism \cite{14} and the Skellam mechanism introduced in this work serve these purposes. In both approaches, the error which is introduced is reasonably small and similar theoretical bounds can be provided. For details, see Section \ref{dpmech}.

For a particular time-step, let the users' values be of the form $x_i=d_i+r_i$, $i=1,\ldots,n$, where $d_i\in\mathcal{D}$ is the original data of the user $i$ and $r_i$ is her noisy value. In the privacy analysis, it is reasonable to assume that $r_i=0$ for the $(1-\gamma)\cdot n$ compromised users, since this can only increase their chances to infer some information about the non-compromised users. In order to send the values to the data analyst, the users perform a PSA scheme. First, each user encrypts her own time-series data and sends the ciphertexts to the data analyst. After a distributed key exchange, the evaluation of a single query (i.e. a query analysed in one time-step) requires each user to send exactly one message. The data analyst appropriately aggregates the ciphertexts of all users for a particular time-step and then decrypts the sum of the users' values $\sum_{i=1}^n x_i$. From the ciphertexts, the data analyst is not able to leach any additional information about the values of the users, except for the auxiliary information obtained from the compromised users. In this way, there is no privacy-breach if only one user adds the entirely needed noise (first approach) or if the non-compromised users generate noise of low variance (second approach), since the single values are encrypted and the analyst cannot learn anything about them, except for their aggregate. Due to the use of a cryptographic protocol, the plaintexts have to be discrete. This is the reason why we use discrete distributions for generating the noisy values $r_i$.\\
The perturbation of data potentially yields larger values $x_i$ due to the (possibly) infinite domain of the underlying probability distribution. Depending on the variance, we therefore need to choose a sufficiently large interval $\widehat{\mathcal{D}}=\{-m,\ldots,m\}$ as plaintext space, where $m>\max\{|w|,|w^\prime|\}$ such that $|x_i|\leq m$ for all $i=1,\ldots,n$ with high probability. In the following, we always assume that $\mathcal{D}$ is a subinterval of $\widehat{\mathcal{D}}$.\\
Since the protocol used for the data transmission is computationally secure, the entire mechanism preserves $(\epsilon,\delta)$-\mbox{\upshape\sffamily CDP}.


\section{Private Stream Aggregation}\label{psa}

In this section, we define the Private Stream Aggregation scheme and give a security definition for it. Thereby, we mostly follow the concepts introduced by Shi et al. \cite{2}, though we deviate in a few points. Afterwards, we give a condition for the existence of secure PSA schemes. Moreover, we give a concrete and efficient instantiation of a secure PSA scheme in the standard model.

\subsection{The definition of Private Stream Aggregation and its security}

\noindent\textbf{Private Stream Aggregation.} A PSA scheme is a protocol for safe distributed time-series data transfer which enables the receiver to learn only the aggregate $f(\widehat{D})$ of a query $f:\widehat{\mathcal{D}}^n\to\mathcal{O}$ over some distributed (and possibly perturbed) database $\widehat{D}\in\widehat{\mathcal{D}}^n$. 
Such a scheme needs a key exchange protocol for all $n$ users together with the analyst as a precomputation, and requires each user to send exactly one message per query. For the definition of PSA, we follow \cite{2}.

\begin{Def}[Private Stream Aggregation]
Let $\kappa$ be a security parameter and $n\in\mathbb{N}$ with $n=\text{poly}(\kappa)$. A Private Stream Aggregation scheme $\Sigma=(\mbox{\upshape\sffamily Setup}, \mbox{\upshape\sffamily PSAEnc}, \mbox{\upshape\sffamily PSADec})$ is defined by three probabilistic polynomial-time Algorithms:
\begin{description}
\item \textbf{\mbox{\upshape \sffamily Setup}}: $(\mbox{\upshape\sffamily pp},T,s,s_1,\ldots,s_n)\leftarrow \mbox{\upshape\sffamily Setup}(1^\kappa)$, where $\mbox{\upshape\sffamily pp}$ are public parameters of the system, $T$ is a set of time-steps and $s,s_1,\ldots,s_n$ are private keys.
\item \textbf{\mbox{\upshape \sffamily PSAEnc}}: For time-step $t\in T$ and all $i=1,\ldots,n$: 
\[c_{i,t}\leftarrow \mbox{\upshape\sffamily PSAEnc}_{s_i}(t,x_i)\mbox{ for a data value } x_i\in\widehat{\mathcal{D}}.\]
\item \textbf{\mbox{\upshape \sffamily PSADec}}: For time-step $t\in T$, ciphertexts $c_{1,t},\ldots,c_{n,t}\in\mathcal{C}$, where $\mathcal{C}$ is the range of $\mbox{\upshape \sffamily PSAEnc}$, and a query $f:\widehat{\mathcal{D}}^n\to\mathcal{O}$ compute
\[f(x'_1,\ldots,x'_n)=\mbox{\upshape\sffamily PSADec}_{s}(t,c_{1,t},\ldots,c_{n,t}).\]
For all $t\in T$ and $x_1,\ldots,x_n\in\widehat{\mathcal{D}}$:
\begin{align*}
&\mbox{\upshape\sffamily PSADec}_{s}(t, \mbox{\upshape\sffamily PSAEnc}_{s_1}(t,x_1),\ldots,\mbox{\upshape\sffamily PSAEnc}_{s_n}(t,x_n))\\
=& f(x_1,\ldots,x_n).
\end{align*}
\end{description}
\end{Def}

The Setup-phase has to be carried out just once and for all, and can be performed with a secure multi-party protocol among all users and the analyst. In all other phases, no communication between the users is needed.\\
The system parameters $\mbox{\upshape\sffamily pp}$ are public and constant for all time-steps with the implicit understanding that they are used in $\Sigma$. Every user encrypts her data $x_i$ with her own secret key $s_i$ and sends the ciphertext to the analyst. If the analyst receives the ciphertexts of \textit{all} users for a time-step $t$, it can compute the aggregate, i.e. the evaluation of the query $f$, of the users' data with the decryption key~$s$.\\

\noindent\textbf{Security.} Since our model allows the analyst to compromise users, the aggregator can obtain auxiliary information about the data of the compromised users or their secret keys. Even then a secure PSA scheme should release no more information than the aggregate of the non-compromised users' data in a single time-step.\\
Informally, a PSA scheme $\Sigma$ is secure if every probabilistic polynomial-time algorithm, with knowledge of the analyst's and compromised users' keys and with adaptive encryption queries, has only negligible advantage in distinguishing between the encryptions of two databases $\widehat{D}_0, \widehat{D}_1$ of its choice, where $f(\widehat{D}_0)=f(\widehat{D}_1)$. We can assume that an adversary knows the secret keys of the entire compromised coalition. 
If the protocol is secure against such an attacker, then it is also secure against an attacker without the knowledge of every key from the coalition. Thus, in our security definition we consider the most powerful adversary. In what follows, let $f_{|_X}:\widehat{\mathcal{D}}^{|X|}\to\mathcal{O}$ denote the evaluation of a query $f:\widehat{\mathcal{D}}^n\to\mathcal{O}$ with input $\widehat{D}\in\widehat{\mathcal{D}}^n$ over a subset $X\subseteq[n]$ of users. The security definition is similar to the one in \cite{2}.

\begin{Def}[Security of PSA]\label{securitygame}
Let $\mathcal{T}$ be a probabilistic polynomial-time adversary for a PSA scheme $\Sigma=(\mbox{\upshape\sffamily Setup}, \mbox{\upshape\sffamily PSAEnc}, \mbox{\upshape\sffamily PSADec})$ and let $f:\widehat{\mathcal{D}}^n\to\mathcal{O}$ be a statistical query over the set $\widehat{\mathcal{D}}$. Let $T$ be the set of time-steps for possible data analyses. We define the following security game between a challenger and the adversary $\mathcal{T}$.
\begin{description}
 \item\textbf{Setup.} The challenger runs the \text{\upshape\sffamily Setup} algorithm on input security parameter $\kappa$ and returns public parameters $\mbox{\upshape\sffamily pp}$, time-steps $T$ with $|T|=\text{poly}(\kappa)$ and secret keys $s,s_1,\ldots,s_n$. It sends $\kappa,\mbox{\upshape\sffamily pp}, T, s$ to $\mathcal{T}$.
\item\textbf{Queries.} The challenger flips a random bit $b\in_R\{0,1\}$. $\mathcal{T}$ chooses $U\subseteq[n]$ and sends it to the challenger which returns $(s_i)_{i\in[n]\setminus U}$.
 $\mathcal{T}$ is allowed to query $(i,t,x_i)$ with $i\in U, t\in T, x_i\in\widehat{\mathcal{D}}$ and the challenger returns \[\mbox{\upshape\sffamily PSAEnc}_{s_i}(t,x_i).\]
\item\textbf{Challenge.} $\mathcal{T}$ chooses $t^*\in T$ such that no encryption query at $t^*$ was made. (If there is no such $t^*$ then the challenger simply aborts.) $\mathcal{T}$ queries two different tuples $(x_i^{[0]})_{i\in U},(x_i^{[1]})_{i\in U}$ with
\[f_{|_U}((x_i^{[0]})_{i\in U})=f_{|_U}((x_i^{[1]})_{i\in U}).\] 
For all $i\in U$ the challenger returns 
\[c_{i,t^*}\leftarrow\mbox{\upshape\sffamily PSAEnc}_{s_i}(t^*,x_i^{[b]}).\]
\item\textbf{Queries.} $\mathcal{T}$ is allowed to make the same type of queries as before with the restriction that no encryption query at $t^*$ can be made.
\item\textbf{Guess.} $\mathcal{T}$ outputs a guess about $b$.
\end{description}
The adversary wins the game if it correctly guesses $b$. A PSA scheme is secure if no probabilistic polynomial-time adversary $\mathcal{T}$ has more than negligible advantage (with respect to the parameter $\kappa$) in winning the above game.
\end{Def}

Encryption queries are made only for $i\in U$, since knowing the secret key for all $i\in[n]\setminus U$ the adversary can encrypt a value autonomously. If encryption queries in time-step $t^*$ were allowed, then no deterministic scheme would be secure. The adversary $\mathcal{T}$ can determine the original data of all $i\in[n]\setminus U$ for every time-step, since it knows $(s_i)_{i\in[n]\setminus U}$. Then $\mathcal{T}$ can compute the aggregate of the non-compromised users' data. For example, when $f=f_{\widehat{\mathcal{D}}}$ is a sum query we have 
$f_{|_U}(\widehat{D})=\mbox{\upshape\sffamily PSADec}_s(t,c_{1,t},\ldots,c_{n,t})-f_{|_{[n]\setminus U}}(\widehat{D})$,
where $\widehat{D}=(x_1,\ldots,x_n)$ and $c_{i,t}$ is the encryption of $x_i$ for all $i\in[n]$. On the other hand, if there is a user's ciphertext which $\mathcal{T}$ does not receive, then it cannot obtain the aggregate for the correspondent time-step.\\ 
The security definition indicates that $\mathcal{T}$ cannot distinguish between the encryptions of two different data collections $(x_i^{[0]})_{i\in U},(x_i^{[1]})_{i\in U}$ with the same aggregate at time-step $t^*$. For proving that a secure PSA scheme can be used for computing differentially private statistics with small error, we have to slightly modify the security game such that an adversary may choose adjacent (and non-perturbed) databases, as it is required in the definition of differential privacy. For details, see Section \ref{cdp}.\\
Definition \ref{securitygame} differs from the definition of Aggregator Obliviousness \cite{2} since we require the adversary to specify the set $U$ of non-compromised users \textit{before} making any query, i.e. we do not allow the adversary to determine $U$ adaptively.

\subsection{Feasibility of efficient and secure Private Stream Aggregation}

Using a secure PSA scheme, we ensure that the transmitted data of non-compromised users do not disclose sensitive information other than their aggregate. 
We now state the condition for the existence of secure (as in Definition \ref{securitygame}) PSA schemes for sum queries.

\begin{Thm}\label{PSATHEOREM}
Let $\kappa$ be a security parameter, and $m,n\in\mathbb{N}$ with $\log(m)=\text{poly}(\kappa),n=\text{poly}(\kappa)$. Let $(G,\cdot), (S,*)$ be finite groups and $G^\prime\subseteq G$. For some finite set $M$, let \[\mathcal{F}=\{\text{\upshape\sffamily F}_s\,|\,\text{\upshape\sffamily F}_s:M\to G^\prime\}_{s\in S}\] be a family of functions which are homomorphic over $S$ and \[\varphi:\{-mn,\ldots,mn\}\to G\] an $mn$-isomorphic embedding. If $\mathcal{F}$ is a weak PRF family, then the following PSA scheme $\Sigma=(\mbox{\upshape\sffamily Setup}, \mbox{\upshape\sffamily PSAEnc}, \mbox{\upshape\sffamily PSADec})$ is secure:
\begin{description}
\item \textbf{\mbox{\upshape \sffamily Setup}}: $(\mbox{\upshape\sffamily pp},T,s,s_1,\ldots,s_n)\leftarrow \mbox{\upshape\sffamily Setup}(1^\kappa)$, where $\mbox{\upshape\sffamily pp}$ are parameters of $G,G^\prime,S,M,\mathcal{F},\varphi$. The keys are $s_i\in_R S$ for all $i\in[n]$ with $s=(\bigast_{i=1}^n s_i)^{-1}$ and $T\subset M$ such that all $t\in T$ are chosen uniformly at random from $M$.
\item \textbf{\mbox{\upshape \sffamily PSAEnc}}: Compute $c_{i,t}=\text{\upshape\sffamily F}_{s_i}(t)\cdot \varphi(x_i)$ in $G$, where $x_i\in\widehat{\mathcal{D}}=\{-m,\ldots,m\}, t\in T$.
\item \textbf{\mbox{\upshape \sffamily PSADec}}: Compute $\varphi(\sum_{i=1}^n x_i)=\text{\upshape\sffamily F}_s(t)\cdot c_{1,t}\cdot\ldots\cdot c_{n,t}$ and invert.
\end{description}
\end{Thm}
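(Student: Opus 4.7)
The plan has two parts: correctness of decryption and computational security in the sense of Definition \ref{securitygame}.

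\emph{Correctness.} I would expand $\mbox{\upshape\sffamily PSADec}$ as
\[\text{\upshape\sffamily F}_s(t)\cdot\prod_{i=1}^n c_{i,t}=\text{\upshape\sffamily F}_s(t)\cdot\prod_{i=1}^n \text{\upshape\sffamily F}_{s_i}(t)\cdot\prod_{i=1}^n\varphi(x_i),\]
where the re-ordering is justified because $G^\prime$ (and hence the image of $\varphi$) is abelian, as is needed both for key-homomorphism of $\mathcal{F}$ and for the isomorphic-embedding identity to be order-independent. Key-homomorphism together with $s\ast\bigast_i s_i=e_S$ collapses the first two factors to $\text{\upshape\sffamily F}_{e_S}(t)=e_{G^\prime}$, and the $mn$-isomorphic embedding property (applicable because $|\sum_i x_i|\leq mn$) collapses $\prod_i\varphi(x_i)$ to $\varphi(\sum_i x_i)$. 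Efficient invertibility of $\varphi$ on its image recovers the aggregate.

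\emph{Security.} Given a probabilistic polynomial-time adversary $\mathcal{T}$ against $\Sigma$ with non-negligible advantage, I would build a distinguisher $\mathcal{D}_{\mbox{\scriptsize PRF}}$ against $\mathcal{F}$ via a hybrid argument over the $u:=|U|$ non-compromised users. In hybrid $H_j$, the values $\text{\upshape\sffamily F}_{s_{i_1}},\ldots,\text{\upshape\sffamily F}_{s_{i_j}}$ evaluated at the game's uniformly random time-steps are replaced by truly random outputs. In $H_u$, each challenge ciphertext $c_{i,t^*}$ for $i\in U$ has the form $R_i\cdot\varphi(x_i^{[b]})$ with $R_i$ uniformly random, subject only to the constraint that $\prod_{i\in U}R_i$ is the value fixed by $s$ and $(s_i)_{i\notin U}$. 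Since $f_{|_U}((x_i^{[0]})_{i\in U})=f_{|_U}((x_i^{[1]})_{i\in U})$, the isomorphic embedding ensures that the joint distribution of $(c_{i,t^*})_{i\in U}$ is identical for both values of $b$, so $\mathcal{T}$'s advantage in $H_u$ is zero. Each transition $H_{j-1}\to H_j$ is then bounded by the weak PRF advantage: the distinguisher uses its oracle (whose inputs are uniformly chosen, matching the distribution of $T$) to simulate user $i_j$'s encryptions at exactly the time-steps that $\mathcal{T}$ queries, and simulates the remaining users from freshly sampled keys.

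The main obstacle I anticipate is the \emph{correlation} among keys imposed by $\bigast_i s_i=s^{-1}$, which at first glance prevents the reduction from treating one $s_{i_j}$ as an independent PRF-oracle key. I plan to circumvent this by re-parameterising the Setup distribution equivalently: sample $s\in_R S$ and $(s_i)_{i\neq i_j}\in_R S$ independently, and let $s_{i_j}$ be implicitly determined by the constraint. Because $S$ is a finite group, this induced marginal on $s_{i_j}$ is uniform and independent of the view generated from $s$ and $(s_i)_{i\neq i_j}$, so the PRF oracle's hidden key can be consistently identified with $s_{i_j}$ throughout the reduction. The distinguisher answers every encryption query involving $i_j$ by an oracle call and computes every other user's PRF values directly from known keys; summing the losses over the $u\leq n$ hybrids yields a negligible overall advantage for $\mathcal{T}$, completing the proof.
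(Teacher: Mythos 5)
Your overall architecture (correctness via key-homomorphism and the $mn$-isomorphic embedding; a hybrid over the non-compromised users whose final hybrid is information-theoretically secure because the two challenge tuples share an aggregate; each transition charged to the weak-PRF advantage) is the same as the paper's, which runs through a real-or-random game, a plaintext-free game, and then hybrids $\boldsymbol 3_1,\ldots,\boldsymbol 3_{u-1}$. The gap is in the one step you flagged as the main obstacle: your re-parameterisation does not work. If you sample $s$ and $(s_i)_{i\neq i_j}$ independently and let $s_{i_j}$ be determined by $s\ast\bigast_i s_i=e$, then $s_{i_j}$ is a deterministic function of data the adversary (partially) sees, not ``independent of the view.'' Concretely, the adversary knows $s$ and $(s_i)_{i\notin U}$, so after querying encryptions of chosen plaintexts for \emph{all} $i\in U$ at some time-step $t\neq t^*$ it can compute $\text{\upshape\sffamily F}_s(t)\cdot\prod_{i\in U}c_{i,t}\cdot\prod_{i\notin U}\text{\upshape\sffamily F}_{s_i}(t)$ and check that it equals $\varphi$ of the known sum. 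This forces $\prod_{i\in U}(\text{mask}_i(t))=(\text{\upshape\sffamily F}_s(t)\prod_{i\notin U}\text{\upshape\sffamily F}_{s_i}(t))^{-1}$ for every queried $t$. In your reduction user $i_j$ is answered with $\mathcal{O}(t)$, whose hidden key is independent of the constraint, while every other user in $U$ is answered with a genuine PRF value under a known key; the product check then fails except with negligible probability, even in the ``real'' oracle case, so neither hybrid is simulated and the adversary's behaviour is uncontrolled.

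The missing idea is that each hybrid transition must leave \emph{two} keys in $U$ unspecified: one user carries the oracle value $\mathcal{O}(t)$, and a second user carries a correction term computed by key-homomorphism as
\[
h_{l,t}=\Bigl(\text{\upshape\sffamily F}_s(t)\cdot\mathcal{O}(t)\cdot\prod_{j=1}^{l-1}h_{j,t}\cdot\prod_{i\in[n]\setminus\{i_1,\ldots,i_{l+1}\}}\text{\upshape\sffamily F}_{s_i}(t)\Bigr)^{-1},
\]
which restores the product constraint in both the pseudo-random and the truly-random branch (in the latter, the independent random value plus its correction is exactly ``one more random mask subject to the fixed product''). This also explains a cost you do not account for: because the weak-PRF oracle must be queried on uniformly random inputs before $U$ is revealed, the distinguisher guesses the two special positions in advance and succeeds with probability $1/n^2$, giving the $O(n^3)$ loss in the paper's final bound. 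Your correctness argument and the zero-advantage claim for the last hybrid are fine (modulo the tacit assumption, also made by the paper, that masking by a uniform element of $G'$ hides $\varphi(x_i^{[b]})\in G$, which holds in the DDH instantiation since $\varphi$ maps into $G'$ there).
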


Thus, we need a key-homomorphic weak PRF and a mapping which homomorphically aggregates all users' data. Since every data value is at most $m$, the scheme correctly retrieves the aggregate, which is at most $m\cdot n$, by the $mn$-isomorphic property of $\varphi$. Importantly, the product of all pseudo-random values $\text{\upshape\sffamily F}_s(t),\text{\upshape\sffamily F}_{s_1}(t),\ldots,\text{\upshape\sffamily F}_{s_n}(t)$ is the neutral element in the group $G$ for all $t\in T$. Since the values in $T$ are uniformly distributed in $M$, it is enough to require that $\mathcal{F}$ is a \textit{weak} PRF family. Thus, the statement of this theorem does not require a random oracle.\\
The proof of Theorem \ref{PSATHEOREM} works with a sequence of games and builds on the ideas of \cite{2}. The details of the proof are given in Appendix \ref{ptproof}. 
Here we just give the main ideas. In the first step (Lemma \ref{gameonetwo}) 
we construct a real-or-random version of the PSA security game, where encryption is performed equally likely with a weak PRF or a random function. We show that winning the PSA security game is at least as hard as winning its real-or-random version. The second step (Lemma \ref{gametwothree}) 
shows that the plaintext dependence of the ciphertexts generated in the game can be abolished. Since we are dealing with a non-adaptive security definition there is no need of simulating 
the random choice of time-steps by programming a random oracle as it is required in the proof of Aggregator Obliviousness by Shi et al. \cite{2}. Therefore, in contrast to \cite{2}, our result does not rely on a random oracle and the full proof works in the standard model. In the last step, (Lemma \ref{gamethreeprf})
using the hybrid argument, we show that winning the game is at least as hard as distinguishing the weak PRF from a random function. Here the adversary's specification of $U$ before making the first query allows the PRF distinguisher to be consistent with real random values or pseudo-random values in its replies to the queries. All in all, we get that winning the first game with our construction is at least as hard as distinguishing the weak PRF from a random function, completing the proof of Theorem \ref{PSATHEOREM}.

\subsection{An efficient Private Stream Aggregation scheme}

We give an instantiation of a secure PSA scheme consisting of efficient algorithms. Its security is based on the Decisional Diffie-Hellman (DDH) problem.

\begin{Exm}\label{DDHEXM} Let $q>m\cdot n$ and $p=2\cdot q+1$ be large primes. Let furthermore $G=\mathbb{Z}_{p^2}^*, S=\mathbb{Z}_{pq}, M=G^\prime=\mathcal{QR}_{p^2}$ and $g\in\mathbb{Z}_{p^2}^*$ with $\text{ord}(g)=pq$. Then $g$ generates the group $M=G^\prime=\mathcal{QR}_{p^2}$ of quadratic residues modulo $p^2$. In this group DDH is assumed to be hard. Then we define 
\begin{itemize}
\item $\{\mbox{\upshape\sffamily pp}\}=(g,p)$. Choose keys $s_1,\ldots,s_n\in_R\mathbb{Z}_{pq}$ and $s=-\sum_{i=1}^n s_i$ mod $pq$. Let $T\subset M$, i.e. $t$ is a power of $g$ for every $t\in T$.
\item $\text{\upshape\sffamily F}_{s_i}(t)=t^{s_i}\mbox{ mod } p^2$. This is a weak PRF under the DDH assumption, as can be proven using arguments similar to the ones in $\cite{18}$. 
\item $\varphi(x_i)=1+p\cdot x_i$ mod $p^2$, where $-m\leq x_i\leq m$. (It is easy to see that $\varphi$ is an $mn$-isomorphic embedding.)
\end{itemize}

\noindent For aggregation, we compute $X\in\{1-p\cdot mn,\ldots,1+p\cdot mn\}$ with
\begin{align*} X\equiv & \text{\upshape\sffamily F}_{s}(t)\cdot\prod_{i=1}^n \text{\upshape\sffamily F}_{s_i}(t)\cdot\varphi(x_i)\equiv\prod_{i=1}^n (1+p\cdot x_i)\\
 \equiv & 1+p\cdot\sum_{i=1}^n x_i+p^2\cdot\sum_{i,j\in[n],j\neq i} x_i x_j+\ldots+p^n\cdot\prod_{i=1}^n x_i\\
 \equiv & 1+p\cdot\sum_{i=1}^n x_i\mbox{ mod } p^2
\end{align*}
and decrypt $\sum_{i=1}^n x_i=\frac{1}{p}(X-1)$ over the integers.
\end{Exm}

The difference to the scheme introduced in \cite{2} lies on the map $\varphi$. Whereas the PRF in \cite{2} works similarly (the underlying group $G$ is $\mathbb{Z}_p^*$ rather than $\mathbb{Z}_{p^2}^*$), the aggregational function is defined by
\[\varphi(x_i)=g^{x_i}\mbox{ mod } p,\]
which requires to solve the discrete logarithm modulo $p$ for decrypting. In contrast, our efficient construction only requires a subtraction and a division over the integers.

\begin{Rem} In the random oracle model, the construction shown in Example \ref{DDHEXM} achieves the stronger notion of Aggregator Obliviousness, which is the adaptive version of our security definition (for details, see the proof in Appendix A of $\cite {2}$). The same proof can be applied to our instantiation by simply replacing the map $\varphi$ involved and using a strong version of the PRF $\text{\upshape\sffamily F}$.\footnote{For showing Aggregator Obliviousness we would have to substitute the choice of random values $t$ by a hash function $H:M\to\mathcal{QR}_{p^2}$ modeled as a random oracle for some domain $M$. Therefore the PRF would be $\text{\upshape\sffamily F}_{s_i}(t)=H(t)^{s_i}\mbox{ mod } p^2$ which is the strong version of the weak PRF in Example \ref{DDHEXM}. In this case, all $t$ may be chosen in a deterministic way.}
\end{Rem}

\begin{table}[t]
\centering
\begin{tabular}{l||c|c|c} Length of $p$ & $1024$-bit & $2048$-bit & $4096$-bit\\ \hline \cite{2} & $1.1$ ms & $7.5$ ms & $57.0$ ms\\ This work & $3.9$ ms & $29.4$ ms & $225.0$ ms
\end{tabular}
\captionof{table}{Time for encryption}\label{enctab}
\end{table}

\begin{table}[t]
\centering
\begin{tabular}{l||c|c|c|c|c} \,\,\,\,\,\,\,\,\,$m$ & $10^0$ & $10^1$ & $10^2$ & $10^3$ & $10^4$\\ \hline \cite{2} b.-f. & $0.04$ s & $0.24$ s & $2.67$ s & $28.97$ s & $381.05$ s\\ This work & $0.09$ s & $0.08$ s & $0.08$ s & $0.09$ s & $0.08$ s
\end{tabular}
\captionof{table}{Time for decryption ($2048$ bit, $n=1000$)}\label{dectab}
\end{table}

Note that for a given $p$, the running time of the decryption in our scheme does not depend on $m$, so it provides a small running time even if $m$ is exponentially large. At the same time, the decryption of the scheme in \cite{2} can also be performed efficiently, even if $m$ is superpolynomial in the security parameter: discretise the plaintext space into $\sqrt{n}$ equidistant values and let each user choose the value nearest to her original one as input for encryption. The aggregated value has the correct expectation, but the decryption algorithm has to search only in a range of $n^{1.5}$ values. However, this method causes a superlinear time-dependence on $n$ for the decryption and induces an additional aggregation error due to discretisation.\\
We compare the practical running times for encryption and decryption of the scheme in \cite{2} with the algorithms of our scheme in Table \ref{enctab} and Table \ref{dectab}, respectively. Here, let $m$ denote the size of the plaintext space. Encryption is compared at different security levels with $m=1$. For comparing the decryption time, we fix the security level and the number of users and let $m$ be variable.\\
All algorithms were executed on an Intel Core i$5$, $64$-bit CPU at $2.67$ GHz. We compared the schemes at the same security level, assuming that the DDH problem modulo $p$ is as hard as modulo $p^2$, i.e. we used the same value for $p$ in both schemes. For different bit-lengths of $p$, we observe that the encryption of our scheme is roughly $4$ times slower than the encryption in \cite{2}. The running time of our decryption algorithm is widely dominated by the aggregation phase. Therefore it is clear, that it linearly depends on $n$. 
Using a $2048$-bit prime and fixing $n=1000$, the running time of the decryption in our scheme is less than $0.1$ second for varying values of $m$. In contrast, the time for the brute-force decryption in \cite{2} grows roughly linearly in $m$.\\
As observed in \cite{2}, using Pollard's lambda method would reduce the running time for decryption of \cite{2} to about $\sqrt{mn}$. Nevertheless, our scheme provides a speed-up of $\sqrt{m/n}$ whenever $m$ is larger than $n$, while the encryption is decelerated only by a constant factor.

\section{Achieving Computational Differential Privacy}\label{cdp}

\begin{Ntn} Let $\kappa$ be a security parameter. If an expression $\omega=\omega(\kappa)$ is non-negligible in $\kappa$ (i.e. if $\omega> 1/\text{poly}(\kappa)$), then we write $\omega>\text{\upshape\sffamily neg}(\kappa)$.
\end{Ntn}

In this section, we describe how to preserve computational differential privacy using a PSA scheme. As described above, in the work by Chan et al. \cite{3} the polynomial-time reduction between an attacker against the security of a PSA scheme and an attacker against differential privacy is missing. In this section we provide an appropriate reduction. The content of this section is independent of Theorem \ref{PSATHEOREM}. Specifically, let $\mathcal{A}$ be a mechanism which, given some event $Good$, evaluates a statistical query $f:\mathcal{D}^n\to\mathcal{O}$ over a database $D\in\mathcal{D}^n$ preserving $\epsilon$-\mbox{\upshape\sffamily DP}. Furthermore, let $\Sigma$ be a secure PSA scheme for $f$. We show that $\mathcal{A}$ executed through $\Sigma$ preserves $\epsilon$-\mbox{\upshape\sffamily CDP} given $Good$. Let $Bad=\overline{Good}$ and assume $\Pr[Bad]\leq\delta$. In Section \ref{dpmech}, we will give instantiations of such a mechanism $\mathcal{A}$ 
and show that they preserve $(\epsilon,\delta)$-\mbox{\upshape\sffamily CDP} \textit{unconditionally} if executed through $\Sigma$. For simplicity, in this section we focus on sum queries, but our analysis can be easily extended to more general statistical queries. Our technique involves a reduction-based proof using a biased coin toss and is of independent interest.

\subsection{Redefining the security of Private Stream Aggregation}
Let us first modify the security game in Definition \ref{securitygame} in the following way. Let \textbf{game} $\boldsymbol 1$ be the original game from Definition \ref{securitygame}. Let $p\in(0,1)$ and $P=\max\{p,1-p\}$. The $\boldsymbol P$-\textbf{game} $\boldsymbol 1$ for a probabilistic polynomial-time adversary $\mathcal{T}_{1,P}$ is defined as \textbf{game} $\boldsymbol 1$ with the following changes:
\begin{itemize}
 \item Before the challenge phase, $\mathcal{T}_{1,P}$ sends $p$ to the challenger.
 \item In the challenge phase, the challenger chooses $b=0$ with probability $p$ and $b=1$ with probability $1-p$.
\end{itemize}
We call a PSA scheme $P$-secure if the probability of every probabilistic polynomial-time adversary $\mathcal{T}_{1,P}$ in winning the above game is $P+\mbox{\upshape\sffamily neg}(\kappa)$. Note that \textbf{game} $\boldsymbol 1$ is a special case of $\boldsymbol P$-\textbf{game} $\boldsymbol 1$, where $P=1/2$. We refer to this case as the unbiased version (rather the biased version if $P>1/2$) of $\boldsymbol P$-\textbf{game} $\boldsymbol 1$. In the unbiased case, we just drop the dependence on $P$ and the adversary is not required to send $p$ to its challenger.

\subsection{Constructing a PSA adversary using a CDP adversary}

\subsubsection{Security game for adjacent databases}
For showing that a $P$-secure PSA scheme is suitable for preserving {\upshape\sffamily CDP}, we have to construct a successful adversary in $\boldsymbol P$-\textbf{game} $\boldsymbol 1$ (with a proper choice of $p$) using a successful distinguisher for adjacent databases. We define the following \textbf{game} $\boldsymbol 0$ for a probabilistic polynomial-time adversary $\mathcal{T}_0$ which is identical to \textbf{game} $\boldsymbol 1$ with a changed challenge-phase:
\begin{description}
\item\textbf{Challenge.} $\mathcal{T}_0$ chooses $t^*\in T$ such that no encryption query at $t^*$ was made. $\mathcal{T}_0$ queries two \textit{adjacent} tuples $(d_i^{[0]})_{i\in U},(d_i^{[1]})_{i\in U}$. For all $i\in U$ the challenger returns 
\[c_{i,t^*}=\mbox{\upshape\sffamily PSAEnc}_{s_i}(t^*,x_i^{[b]}),\]
where $x_i^{[b]}\in\widehat{\mathcal{D}}$ is a noisy version of $d_i^{[b]}\in\mathcal{D}$ for all $i\in U$ obtained by some randomised perturbation process.
\end{description}
Now consider the following experiment which we call $\mbox{\upshape \sffamily Exp}_1$. Let $f_{\mathcal{D}}:\mathcal{D}^n\to\mathcal{O}$ be a sum query and $Q:\widehat{\mathcal{D}}^n\to[0,1]$ a probability distribution function on $\widehat{\mathcal{D}}^n$. For simplicity we consider only the case where $\mathcal{D}\subseteq\widehat{\mathcal{D}}=\mathbb{Z}$. Let $D_0,D_1\in\mathcal{D}^n$. Then $\mbox{\upshape \sffamily Exp}_1$ is performed as follows:
\begin{itemize} 
\item Let $B_{1/2}$ be a Bernoulli variable with $\Pr[B_{1/2}=0]=1/2$.
\item Let $\widehat{X}_1$ be a random vector with probability distribution function $Q(x-D_b)$, where $b$ is a realisation of $B_{1/2}$.
\item Let $Y=f_{\widehat{\mathcal{D}}}(\widehat{X}_1)$.
\end{itemize}
We now define an experiment $\mbox{\upshape \sffamily Exp}_2$ and afterwards show that it is statistically equivalent to $\mbox{\upshape \sffamily Exp}_1$.
\begin{itemize} 
\item Let $Y$ be a random variable as in $\mbox{\upshape \sffamily Exp}_1$.
\item Let $p=\Pr[B_{1/2}=0|Y=y]$. Let $B_{p}$ be a Bernoulli variable with $\Pr[B_{p}=0]=p$.
\item Let $\widehat{X}_2$ be a random vector with conditional probability function
\begin{align*} 
   & \Pr[\widehat{X}_2=\widehat{D}| B_{p}=b, Y=y]\\
 = & \begin{cases} \frac{\chi(y)\cdot Q(\widehat{D}-D_0)}{2\cdot p\cdot\Pr[Y=y]}, &\mbox{ if } b=0\\ \frac{\chi(y)\cdot Q(\widehat{D}-D_1)}{2\cdot (1-p)\cdot\Pr[Y=y]}, &\mbox{ if } b=1.\end{cases}
\end{align*}
Here $b$ is a realisation of $B_p$ and $\chi(y)=\chi_{\{z\in\mathcal{O} | z=f_{\widehat{\mathcal{D}}}(\widehat{D})\}}(y)$ denotes the characteristic function of $\{z\in\mathcal{O} | z=f_{\widehat{\mathcal{D}}}(\widehat{D})\}\subset\mathcal{O}$, which is $1$ if $y=f_{\widehat{\mathcal{D}}}(\widehat{D})$ and $0$ otherwise. Note that the values $\chi(y),Q(x),p, \Pr[Y=y]$ for the computation of the conditional probability of $\widehat{X}_2$ are known.
\end{itemize}

For showing that $\mbox{\upshape \sffamily Exp}_1$ and $\mbox{\upshape \sffamily Exp}_2$ are statistically equivalent it suffices to show that the joint distributions $\Pr[B_{1/2},\widehat{X}_1,Y]$ and $\Pr[B_{p},\widehat{X}_2,Y]$ are equal.

\begin{Lem}\label{EXPEQUIV} $\Pr[B_{1/2},\widehat{X}_1,Y]=\Pr[B_{p},\widehat{X}_2,Y]$.
\end{Lem}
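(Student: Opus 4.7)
\noindent\textbf{Proof plan for Lemma \ref{EXPEQUIV}.} The plan is to verify the equality of the three-dimensional joint mass functions by factoring each side via the chain rule and then simplifying. Concretely, for every $b\in\{0,1\}$, $\widehat{D}\in\widehat{\mathcal{D}}^n$ and $y\in\mathcal{O}$, I would write the joint probability in $\mbox{\upshape \sffamily Exp}_1$ as
\[\Pr[B_{1/2}=b,\widehat{X}_1=\widehat{D},Y=y]=\Pr[B_{1/2}=b]\cdot\Pr[\widehat{X}_1=\widehat{D}\mid B_{1/2}=b]\cdot\Pr[Y=y\mid\widehat{X}_1=\widehat{D}],\]
and the analogous factorisation for $\mbox{\upshape \sffamily Exp}_2$, conditioning on $(B_p,Y)$ last.

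First, I would plug in the definitions in $\mbox{\upshape \sffamily Exp}_1$: $\Pr[B_{1/2}=b]=1/2$, $\Pr[\widehat{X}_1=\widehat{D}\mid B_{1/2}=b]=Q(\widehat{D}-D_b)$, and -- since $Y=f_{\widehat{\mathcal{D}}}(\widehat{X}_1)$ is a deterministic function of $\widehat{X}_1$ -- the factor $\Pr[Y=y\mid\widehat{X}_1=\widehat{D}]$ equals the characteristic function $\chi(y)=\chi_{\{z\in\mathcal{O}\,|\,z=f_{\widehat{\mathcal{D}}}(\widehat{D})\}}(y)$ appearing in the statement. This yields
\[\Pr[B_{1/2}=b,\widehat{X}_1=\widehat{D},Y=y]=\tfrac{1}{2}\,\chi(y)\,Q(\widehat{D}-D_b).\]

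Next, for $\mbox{\upshape \sffamily Exp}_2$ I would write
\[\Pr[B_p=b,\widehat{X}_2=\widehat{D},Y=y]=\Pr[Y=y]\cdot\Pr[B_p=b\mid Y=y]\cdot\Pr[\widehat{X}_2=\widehat{D}\mid B_p=b,Y=y].\]
Here $\Pr[B_p=b\mid Y=y]$ equals $p$ for $b=0$ and $1-p$ for $b=1$, and by the very definition of $p$ the quantity $\Pr[Y=y]\cdot p$ coincides with $\Pr[B_{1/2}=0,Y=y]$ computed in $\mbox{\upshape \sffamily Exp}_1$ (and similarly for $b=1$ with $1-p$). Substituting the explicit conditional law of $\widehat{X}_2$ from the definition of $\mbox{\upshape \sffamily Exp}_2$ makes the factors $\Pr[Y=y]$ and $p$ (resp.\ $1-p$) cancel against the matching factors in the denominator, leaving exactly $\tfrac{1}{2}\chi(y)Q(\widehat{D}-D_b)$, which matches the expression obtained for $\mbox{\upshape \sffamily Exp}_1$.

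Finally, I would address the degenerate case $\Pr[Y=y]=0$: here the $Y$-marginal assigns zero mass in both experiments, so the conditional law of $\widehat{X}_2$ is irrelevant and both joint masses are zero, making the equality trivial. The main obstacle, such as it is, is purely bookkeeping -- keeping careful track of the characteristic function $\chi(y)$ and confirming that the cancellations in $\mbox{\upshape \sffamily Exp}_2$ go through term-by-term for the two values of $b$ -- rather than any nontrivial probabilistic argument; once the chain-rule factorisations are written out, the lemma reduces to algebra.
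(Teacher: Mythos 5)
Your proposal is correct and takes essentially the same route as the paper: both arguments rest on the observation that $\Pr[Y=y\mid\widehat{X}_1=\widehat{D}]=\chi(y)$ (since $Y$ is a deterministic function of $\widehat{X}_1$) together with the definition $p=\Pr[B_{1/2}=0\mid Y=y]$, followed by a chain-rule factorisation of the joint law. The only cosmetic difference is that you evaluate both joint masses down to the closed form $\tfrac{1}{2}\,\chi(y)\,Q(\widehat{D}-D_b)$, whereas the paper matches the conditional laws of $\widehat{X}_1$ and $\widehat{X}_2$ given $(B,Y)$ and leaves the common factors $\Pr[Y]\cdot\Pr[B\mid Y]$ unexpanded.
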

\begin{proof}
We observe that in $\mbox{\upshape \sffamily Exp}_1$, $\Pr[Y=y| \widehat{X}_1=\widehat{D}, B_{1/2}=b]=\Pr[Y=y| \widehat{X}_1=\widehat{D}]$. Therefore we have
\begin{align*} 
& \Pr[\widehat{X}_1=\widehat{D}| B_{1/2}=b, Y=y]\\
 = & \frac{\Pr[Y=y| \widehat{X}_1=\widehat{D}]\cdot Q(\widehat{D}-D_b)}{\Pr[Y=y| B_{1/2}=b]}\\
 = & \frac{\chi(y)\cdot Q(\widehat{D}-D_b)}{\Pr[Y=y| B_{1/2}=b]},
\end{align*}
which exactly corresponds to the conditional probability of $\widehat{X}_2$ in $\mbox{\upshape \sffamily Exp}_2$. Thus, we have
\begin{align*}
& \Pr[B_{p},\widehat{X}_2,Y]\\
= & \Pr[Y]\cdot\Pr[B_p]\cdot\Pr[\widehat{X}_2| B_{p},Y]\\
= & \Pr[Y]\cdot\Pr[B_{1/2}| Y]\cdot\Pr[\widehat{X}_1| B_{1/2},Y]\\
= & \Pr[B_{1/2},\widehat{X}_1,Y]. 
\end{align*}
\end{proof}

\noindent 
Note that Lemma \ref{EXPEQUIV} also applies to the marginals of the triples $(B_{1/2},\widehat{X}_1,Y)$ and $(B_{p},\widehat{X}_2,Y)$.

\subsubsection{The Reduction}

With Lemma \ref{EXPEQUIV} in mind, we can show that a successful adversary in \textbf{game} $\boldsymbol 0$ yields a successful adversary in $\boldsymbol P$-\textbf{game} $\boldsymbol 1$ for a particular $P\in[1/2,1)$. Afterwards we show that a successful adversary in $\boldsymbol P$-\textbf{game} $\boldsymbol 1$ for \textit{any} $P\in[1/2,1)$ yields a successful adversary in \textbf{game} $\boldsymbol 1$.

\begin{Lem}\label{gamezeroPone} Let $\kappa$ be a security parameter. Let $\mathcal{T}_0$ be an adversary in \textbf{\upshape game} $\boldsymbol 0$ with advantage $\mu_{0}(\kappa)>\text{\upshape\sffamily neg}(\kappa)$. Let $B_{1/2}$ denote the random variable describing the challenge bit $b$ in \textbf{\upshape game} $\boldsymbol 0$ and let $Y$ denote the random variable describing the aggregate of $(x_i^{[b]})_{i\in U}$. Let $p$ be the probability of $B_{1/2}=0$ given the choice of $Y$ and let $P=\max\{p,1-p\}$. Then there exists an adversary $\mathcal{T}_{1,P}$ in $\boldsymbol P$-\textbf{\upshape game} $\boldsymbol 1$ with advantage $\mu_{1,P}(\kappa)>\text{\upshape\sffamily neg}(\kappa)$.
\end{Lem}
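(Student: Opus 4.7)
The plan is to construct $\mathcal{T}_{1,P}$ as a wrapper around $\mathcal{T}_0$ that plays $P$-game 1 externally while internally simulating game 0 for $\mathcal{T}_0$. The challenge phases of the two games differ in exactly one respect: game 0 expects adjacent plaintext databases $D_0, D_1$, while $P$-game 1 requires two plaintext collections with identical aggregate. Lemma \ref{EXPEQUIV} supplies exactly the gadget needed to bridge this gap: experiment \mbox{\upshape\sffamily Exp}$_2$ draws the aggregate $y$ first and then samples the plaintext vectors conditionally on $y$, which forces both candidate vectors to share the aggregate (hence form a valid $P$-game 1 challenge) while keeping the joint law of (challenge bit, encrypted vector, aggregate) identical to that of \mbox{\upshape\sffamily Exp}$_1$.

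\textbf{Simulation.} In the setup and query phases of $P$-game 1, $\mathcal{T}_{1,P}$ relays every message, namely the public parameters, the choice of $U$, the compromised keys $(s_i)_{i\notin U}$, and all encryption queries $(i,t,x_i)$ for $i\in U$ and $t\neq t^\ast$. These phases are identical in games 0 and 1, so this relaying is perfect. In the challenge phase $\mathcal{T}_0$ commits to adjacent $D_0,D_1\in\mathcal{D}^{|U|}$ and a time-step $t^\ast$. The reduction now executes the first two steps of \mbox{\upshape\sffamily Exp}$_2$ internally: it samples $y$ according to the marginal of $Y$ in \mbox{\upshape\sffamily Exp}$_1$ (a quantity computable from $D_0,D_1$ and the publicly known noise distribution $Q$), computes $p=\Pr[B_{1/2}=0\mid Y=y]$ via Bayes' rule, and draws $\widehat{X}^{[0]}$ and $\widehat{X}^{[1]}$ from the conditional distributions $\Pr[\widehat{X}_2=\cdot\mid B_{p}=0,Y=y]$ and $\Pr[\widehat{X}_2=\cdot\mid B_{p}=1,Y=y]$, respectively. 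The factor $\chi(y)$ in those conditional laws guarantees $f_{\widehat{\mathcal{D}}}(\widehat{X}^{[0]})=y=f_{\widehat{\mathcal{D}}}(\widehat{X}^{[1]})$, so the pair is a legal $P$-game 1 challenge. The reduction sends $p$ to its external challenger (as required before the challenge phase), submits $(\widehat{X}^{[0]},\widehat{X}^{[1]})$, forwards the returned ciphertexts to $\mathcal{T}_0$, and finally outputs $\mathcal{T}_0$'s guess bit verbatim.

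\textbf{Correctness.} Lemma \ref{EXPEQUIV} states $\Pr[B_{1/2},\widehat{X}_1,Y]=\Pr[B_{p},\widehat{X}_2,Y]$, so from $\mathcal{T}_0$'s perspective the encrypted plaintext in the simulated challenge is distributed exactly as in a true execution of game 0, jointly with the challenger's hidden bit and the aggregate. Thus $\Pr[b^\ast=B_p]$ in the simulated $P$-game 1 coincides with $\Pr[b^\ast=B_{1/2}]$ in game 0, which by hypothesis equals $\tfrac12+\mu_0(\kappa)$. Since $\mathcal{T}_{1,P}$ outputs exactly $b^\ast$, its $P$-game 1 winning probability is also $\tfrac12+\mu_0(\kappa)$, and its advantage over the biased-coin baseline $P$ is $\mu_{1,P}(\kappa)=\tfrac12+\mu_0(\kappa)-P$.

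\textbf{Main obstacle.} Establishing that $\mu_{1,P}(\kappa)$ is non-negligible is the delicate step because $P=\max\{p,1-p\}$ is itself a function of the realised aggregate $y$ and could in principle absorb much of $\mu_0$. I anticipate the argument will proceed by an averaging or pigeonhole step on $y$: since $\mu_0$ is non-negligible by assumption while $P(y)$ is controlled by the structure of the noise distribution $Q$ used by $\mathcal{A}$, one can pick out (at least polynomially many) realisations of $y$ for which $\tfrac12+\mu_0-P(y)$ remains non-negligible, and parameterise $\mathcal{T}_{1,P}$ accordingly. Making this bookkeeping precise under the realisation-dependent baseline $P$ is the only non-routine element of the proof; the rest follows mechanically from Lemma \ref{EXPEQUIV}.
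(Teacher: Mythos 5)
Your construction is exactly the one in the paper: relay the setup and query phases, and in the challenge phase run $\mbox{\upshape\sffamily Exp}_2$ internally --- sample $y$, set $p=\Pr[B_{1/2}=0\mid Y=y]$, draw the two plaintext vectors from the conditional laws given $Y=y$ and the bit value, and submit them together with $p$ as the $\boldsymbol P$-\textbf{game} $\boldsymbol 1$ challenge; the appeal to Lemma~\ref{EXPEQUIV} for perfect simulation is also the paper's argument. Where you diverge is the ``main obstacle'': the paper does \emph{not} perform the averaging or pigeonhole step over $y$ that you anticipate. Its proof ends by asserting that $\mathcal{T}_{1,P}$ ``perfectly simulates \textbf{game} $\boldsymbol 0$ and has the same advantage as $\mathcal{T}_0$,'' i.e.\ it identifies the winning probability $\tfrac12+\mu_0(\kappa)$ with $P+\mu_{1,P}(\kappa)$ without further bookkeeping, so your worry that $\mu_{1,P}=\tfrac12+\mu_0-P$ could be absorbed by the biased baseline $P\geq\tfrac12$ is a real looseness in the lemma's literal statement that the paper simply does not engage with. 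You should also note, however, that the way the lemma is actually consumed in the proof of Theorem~\ref{cdptheorem} (Equations~\eqref{eqcdp1} and~\eqref{eqcdp2}) uses only the distributional identity $\Pr[\mathcal{D}_{\mbox{\scriptsize\upshape\sffamily CDP}}=a,B_{1/2}=b]=\Pr[\mathcal{T}_{1,P}=a,B_p=b]$ furnished by the perfect simulation, which your argument fully establishes; the comparison of the winning probability against the realisation-dependent baseline $P$ is not needed there. So you have reproduced the substantive content of the paper's proof, and the extra step you were planning to supply is neither present in the paper nor required for the lemma's downstream application --- though if you wanted the advantage claim to hold verbatim, some such argument (or a restatement of the lemma in terms of the joint distributions) would indeed be needed.
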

\begin{proof} We construct a successful adversary $\mathcal{T}_{1,P}$ in $\boldsymbol P$-\textbf{game} $\boldsymbol 1$ using $\mathcal{T}_0$ as follows:
\noindent\begin{description}
 \item\textbf{Setup.} Receive $\kappa, \mbox{\upshape\sffamily pp},T,s$ from the $\boldsymbol P$-\textbf{game} $\boldsymbol 1$-challenger and send it to $\mathcal{T}_0$.
\item\textbf{Queries.} Receive $U=\{i_1,\ldots,i_u\}\subseteq[n]$ from $\mathcal{T}_0$ and send it to the challenger. Forward the obtained response $(s_i)_{i\in[n]\setminus U}$ to $\mathcal{T}_0$. Forward $\mathcal{T}_0$'s queries $(i,t,d_i)$ with $i\in U, t\in T, d_i\in\mathcal{D}$ to the challenger and forward the obtained response $c_{i,t}$ to $\mathcal{T}_0$.
\item\textbf{Challenge.} $\mathcal{T}_0$ chooses $t^*\in T$ such that no encryption query at $t^*$ was made and queries two adjacent tuples $(d_i^{[0]})_{i\in U},(d_i^{[1]})_{i\in U}$. Choose a realisation $y$ of $Y$ according to $\mbox{\upshape \sffamily Exp}_2$. Set $p=\Pr[B_{1/2}=0| Y=y]$ and choose $(x_i^{[a]})_{i\in U}$ with probability $\Pr[\widehat{X}_2=(x_i^{[a]})_{i\in U}| B_{1/2}=a, Y=y]$ for $a=0,1$ according to $\mbox{\upshape \sffamily Exp}_2$. Send $p,t^*,\linebreak(x_i^{[0]},x_i^{[1]})_{i\in U}$ to the challenger. Obtain the response $(c_{i,t^*})_{i\in U}$ and forward it to $\mathcal{T}_0$. 
\item\textbf{Queries.} $\mathcal{T}_0$ can make the same type of queries as before with the restriction that no encryption query at $t^*$ can be made.
\item\textbf{Guess.} $\mathcal{T}_0$ gives a guess about which database was encrypted. Output the same guess.
\end{description}
The rules of $\boldsymbol P$-\textbf{game} $\boldsymbol 1$ are preserved since $\mathcal{T}_{1,P}$ sends two tuples of the same aggregate $y$ to its challenger. On the other hand, since the ciphertexts generated by the challenger are determined by the challenge bit and the collection $(x_i^{[b]})_{i\in U}$, the rules of \textbf{game} $\boldsymbol 0$ are preserved by Lemma~\ref{EXPEQUIV} (the triple $(b,(x_i^{[b]})_{i\in U},y)$ is chosen according to $\mbox{\upshape \sffamily Exp}_2$). Therefore $\mathcal{T}_{1,P}$ perfectly simulates \textbf{game} $\boldsymbol 0$ and has the same advantage as $\mathcal{T}_0$.
\end{proof}

We now show that a secure PSA scheme is also $P$-secure for every $p\in(0,1)$, where $P=\max\{p,1-p\}$.

\begin{Lem}\label{gamePoneone} Let $\kappa$ be a security parameter. For any $p\in(0,1)$ let $\mathcal{T}_{1,P}$ be an adversary in $\boldsymbol P$-\textbf{\upshape game $\boldsymbol 1$} with advantage $\mu_{1,P}(\kappa)>\text{\upshape\sffamily neg}(\kappa)$. Then there exists an adversary $\mathcal{T}_1$ in \textbf{\upshape game $\boldsymbol 1$} with advantage $\mu_{1}(\kappa)>\text{\upshape\sffamily neg}(\kappa)$.
\end{Lem}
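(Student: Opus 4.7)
The plan is to have $\mathcal{T}_1$ run $\mathcal{T}_{1,P}$ as a black-box subroutine against the unbiased challenger of \textbf{game} $\boldsymbol 1$ and to argue that, although that challenger's coin is fair rather than biased with parameter $p$, the advantage of $\mathcal{T}_{1,P}$ still shows through up to at worst a constant multiplicative factor.

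Concretely, $\mathcal{T}_1$ will faithfully relay every message (setup, pre- and post-challenge encryption queries, challenge plaintexts, and the final guess) between $\mathcal{T}_{1,P}$ and its own \textbf{game} $\boldsymbol 1$ challenger, silently discarding the parameter $p$ that $\mathcal{T}_{1,P}$ transmits before the challenge (since \textbf{game} $\boldsymbol 1$ expects no such input). Because $\mathcal{T}_{1,P}$ plays within the rules of $\boldsymbol P$-\textbf{game} $\boldsymbol 1$, its two challenge tuples are different and have equal aggregate, so the \textbf{game} $\boldsymbol 1$ constraints are inherited automatically. The key structural observation is that the conditional distribution of $\mathcal{T}_{1,P}$'s view given the challenger's bit is the same in both games; only the marginal distribution of that bit differs.

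For the quantitative analysis I would split on whether $p\geq 1/2$ (so $P=p$) or $p<1/2$ (which is symmetric after swapping the roles of $0$ and $1$). Fixing the first case and writing $\alpha$ and $\beta$ for the conditional probabilities that $\mathcal{T}_{1,P}$ guesses correctly given that the challenge ciphertext encrypts $(x_i^{[0]})_{i\in U}$ and $(x_i^{[1]})_{i\in U}$ respectively, the assumption on $\mathcal{T}_{1,P}$ reads $p\alpha+(1-p)\beta=p+\mu_{1,P}(\kappa)$, while $\mathcal{T}_1$'s winning probability in \textbf{game} $\boldsymbol 1$ is simply $(\alpha+\beta)/2$. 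A short manipulation using $\alpha\leq 1$ together with $p\geq 1/2$ yields a lower bound of $\mu_{1,P}(\kappa)/(2(1-p))$ on $\mathcal{T}_1$'s advantage; since $2(1-p)\leq 1$, this in turn is bounded below by $\mu_{1,P}(\kappa)$, which is non-negligible by assumption.

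The delicate step is precisely this algebraic inequality: one must check that an imbalance of $\mathcal{T}_{1,P}$ sufficient to beat the biased baseline $P$ also beats the fair baseline $1/2$. This is where the biased and unbiased regimes interact favourably -- the penalty factor $1/(2(1-p))$ grows in exactly the range where the baseline $P$ itself is large, and the constraint $\mu_{1,P}\leq 1-p$ keeps everything well-defined. A secondary technicality, easily resolved by conditioning on the pre-challenge view, is that $\mathcal{T}_{1,P}$ may choose $p$ adaptively as a randomised function of its view; since that view is simulated perfectly by $\mathcal{T}_1$, the pointwise bound averages to the same non-negligible overall advantage.
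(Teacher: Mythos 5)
Your proposal is correct, but it proves the lemma by a genuinely different reduction than the paper's. The paper has $\mathcal{T}_1$ act as a real-or-random distinguisher: it flips the biased coin $a$ itself, submits to the unbiased challenger the pair consisting of $(x_i^{[a]})_{i\in U}$ and a fresh uniformly random tuple with the same aggregate, and outputs $0$ exactly when $\mathcal{T}_{1,P}$ guesses $a$ correctly; the two branches give success probabilities $P+\mu_{1,P}(\kappa)$ and at most $P$ respectively, yielding $\mu_1(\kappa)\geq\frac{1}{2}\mu_{1,P}(\kappa)$. You instead run $\mathcal{T}_{1,P}$ unchanged against the fair challenger and reason only about the conditional success probabilities $\alpha,\beta$, which indeed do not depend on the marginal law of the challenge bit. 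Your algebra checks out: substituting $\beta=\bigl(p(1-\alpha)+\mu_{1,P}(\kappa)\bigr)/(1-p)$ gives
\[
\frac{\alpha+\beta}{2}-\frac{1}{2}=\frac{(1-\alpha)(2p-1)}{2(1-p)}+\frac{\mu_{1,P}(\kappa)}{2(1-p)}\geq\frac{\mu_{1,P}(\kappa)}{2(1-p)}\geq\mu_{1,P}(\kappa)
\]
for $p\geq\frac{1}{2}$, and the case $p<\frac{1}{2}$ is symmetric. Your route buys a factor-two sharper bound, avoids having to sample a uniformly random tuple with a prescribed aggregate (a step the paper's reduction needs and which tacitly assumes such sampling is efficient for the query at hand), and dispenses with justifying that random same-aggregate ciphertexts cap $\mathcal{T}_{1,P}$'s success at $P$. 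It also matches more directly how the lemma is actually invoked in Theorem \ref{cdptheorem}, where Equations \eqref{eqone1} and \eqref{eqone2} are precisely the assertion that the conditional output distributions given the bit coincide across the two games. One caveat on your closing remark: if $p$ is chosen adaptively as a function of the view, the pointwise bound $\mu(v)/(2(1-p(v)))$ lies \emph{below} $\mu(v)$ on views with negative conditional advantage, so the claimed averaging is not quite automatic; for the lemma as stated, with $p$ fixed, this does not arise, and the paper's own treatment is no more careful on this point.
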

\begin{proof} Given a successful adversary $\mathcal{T}_{1,P}$ in $\boldsymbol P$-\textbf{game} $\boldsymbol 1$, we construct a successful adversary $\mathcal{T}_1$ in \textbf{game} $\boldsymbol 1$ as follows:
\noindent\begin{description}
 \item\textbf{Setup.} Receive $\kappa, \mbox{\upshape\sffamily pp},T,s$ from the \textbf{game} $\boldsymbol 1$-challenger and send it to $\mathcal{T}_{1,P}$.
\item\textbf{Queries.} Receive $U\subseteq[n]$ from $\mathcal{T}_{1,P}$ and send it to the challenger. Forward the obtained response $(s_i)_{i\in[n]\setminus U}$ to $\mathcal{T}_{1,P}$. Forward $\mathcal{T}_{1,P}$'s queries $(i,t,x_i)$ with $i\in U, t\in T, x_i\in\widehat{\mathcal{D}}$ to the challenger and forward the obtained response $c_{i,t}$ to $\mathcal{T}_{1,P}$.
\item\textbf{Challenge.} $\mathcal{T}_{1,P}$ chooses $t^*\in T$ such that no encryption query at $t^*$ was made, sends $p\in(0,1)$ and queries two different tuples $(x_i^{[0]})_{i\in U},(x_i^{[1]})_{i\in U}$ with $f_{\widehat{\mathcal{D}}|_U}((x_i^{[0]})_{i\in U})\linebreak =f_{\widehat{\mathcal{D}}|_U}((x_i^{[1]})_{i\in U})$. Choose a bit $a$ with $\Pr[a=0]=p, \Pr[a=1]=1-p$ and query $(x_i^{[a]})_{i\in U},(x_i)_{i\in U}$ to the challenger, where the $x_i$ are chosen uniformly at random from $\widehat{\mathcal{D}}$ such that $f_{\widehat{\mathcal{D}}|_U}((x_i)_{i\in U})=f_{\widehat{\mathcal{D}}|_U}((x_i^{[a]})_{i\in U})$. Obtain the response $(c_{i,t^*})_{i\in U}$ and forward it to $\mathcal{T}_{1,P}$. 
\item\textbf{Queries.} $\mathcal{T}_{1,P}$ can make the same type of queries as before with the restriction that no encryption query at $t^*$ can be made.
\item\textbf{Guess.} $\mathcal{T}_{1,P}$ gives a guess about $a$. If the guess is correct, then output $0$; if not, output $1$.
\end{description}
If $\mathcal{T}_{1,P}$ has output the correct guess about $a$ then $\mathcal{T}_1$ can say with high confidence that the challenge ciphertexts are the encryptions of $(x_i^{[a]})_{i\in U}$ and therefore outputs $0$. On the other hand, if $\mathcal{T}_{1,P}$'s guess was not correct, then $\mathcal{T}_1$ can say with high confidence that the challenge ciphertexts are the encryptions of the random collection $(x_i)_{i\in U}$ and it outputs $1$. Formally:\par\medskip

\noindent\textbf{Case $\boldsymbol 1$.} Let $(c_{i,t^*})_{i\in U}=(\text{\upshape\sffamily PSAEnc}_{s_i}(x_i^{[a]}))_{i\in U}$. Then $\mathcal{T}_1$ perfectly simulates $\boldsymbol P$-\textbf{game} $\boldsymbol 1$ for $\mathcal{T}_{1,P}$ and the distribution of ciphertexts is the same as in $\boldsymbol P$-\textbf{game} $\boldsymbol 1$:
\begin{align*} & \Pr[\mathcal{T}_1 \text{ outputs } 0]\\
 = & p\cdot \Pr[\mathcal{T}_{1,P} \text{ outputs } 0 \,|\, a=0]+(1-p)\cdot\Pr[\mathcal{T}_{1,P} \text{ outputs } 1 \,|\, a=1]\\
 = & \Pr[\mathcal{T}_{1,P} \text{ wins } \boldsymbol P\text{-\textbf{game} } \boldsymbol 1]\\
 = & P + \mu_{1,P}(\kappa).
\end{align*}\par\medskip

\noindent\textbf{Case $\boldsymbol 2$.} Let $(c_{i,t^*})_{i\in U}=(\text{\upshape\sffamily PSAEnc}_{s_i}(x_i))_{i\in U}$. Then the ciphertexts are random with the constraint 
that their product is the same as in the first case. The probability that $\mathcal{T}_{1,P}$ wins \textbf{game} $\boldsymbol 1$ is at most $P$ and
\begin{align*} & \Pr[\mathcal{T}_1 \text{ outputs } 1]\\
 = & p\cdot\Pr[\mathcal{T}_{1,P} \text{ outputs } 1 \,|\, a=0]+(1-p)\cdot\Pr[\mathcal{T}_{1,P} \text{ outputs } 0 \,|\, a=1]\cdot (1-p)\\
 = & \Pr[\mathcal{T}_{1,P} \text{ loses } \boldsymbol P\text{-\textbf{game} } \boldsymbol 1]\\
 \geq & 1-P.
\end{align*}
Finally we obtain that the advantage of $\mathcal{T}_1$ in winning \textbf{game} $\boldsymbol 1$ is
\[\mu_{1}(\kappa)\geq\frac{1}{2}\mu_{1,P}(\kappa)>\text{\upshape\sffamily neg}(\kappa).\]
\end{proof}

\subsection{Proof of Computational Differential Privacy}

We have shown that no probabilistic polynomial-time adversary can win \textbf{game} $\boldsymbol 0$ if the underlying PSA scheme is secure. If the perturbation process in \textbf{game} $\boldsymbol 0$ is $\epsilon$-{\sffamily DP}-preserv\-ing, then the whole construction provides $\epsilon$-{\sffamily CDP}, as we show now.

\begin{Thm}\label{cdptheorem} Let $\mathcal{A}$ be a mechanism for a query $f_{\mathcal{D}}:\mathcal{D}^n\to\mathcal{O}$ which preserves $\epsilon$-\mbox{\upshape\sffamily DP} and let $\Sigma$ be a secure PSA scheme for $f_{\mathcal{D}}$. Then $\mathcal{A}$ preserves $\epsilon$-\mbox{\upshape\sffamily CDP} if it is used for the perturbation process in \textbf{\upshape game $\boldsymbol 0$} instantiated with $\Sigma$.
\end{Thm}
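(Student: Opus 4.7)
The plan is to argue by contrapositive, using the reductions already established in Lemmas \ref{gamezeroPone} and \ref{gamePoneone}. Informally, the security of $\Sigma$ tells us that the only information the aggregator can extract (beyond what comes from the compromised coalition) is the aggregate of the non-compromised users' values, so if $\mathcal{A}$ is $\epsilon$-\mbox{\upshape\sffamily DP}, the aggregate is already $e^{\epsilon}$-indistinguishable on adjacent inputs and no efficient distinguisher acting on the ciphertexts can do better, up to a negligible slack.

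Concretely, I would assume that $\mathcal{A}$ executed through $\Sigma$ does \emph{not} preserve $\epsilon$-\mbox{\upshape\sffamily CDP}, so that there exist adjacent databases $D_0, D_1 \in \mathcal{D}^n$, a PPT distinguisher $\mathcal{D}_{\mbox{\scriptsize\upshape\sffamily CDP}}$ and a function $\nu(\kappa)>\text{\upshape\sffamily neg}(\kappa)$ with $p_0 > e^{\epsilon}\, p_1 + \nu(\kappa)$, where $p_b$ denotes the probability that $\mathcal{D}_{\mbox{\scriptsize\upshape\sffamily CDP}}$ outputs $1$ on the ciphertext vector obtained by perturbing $D_b$ through $\mathcal{A}$ and encrypting the result with $\Sigma$. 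Out of this I would build a \textbf{game $\boldsymbol 0$} adversary $\mathcal{T}_0$ that forwards all setup and query messages between $\mathcal{D}_{\mbox{\scriptsize\upshape\sffamily CDP}}$ and the challenger (the aggregator key $s$ and the coalition keys $(s_i)_{i\in[n]\setminus U}$ received during setup allow $\mathcal{T}_0$ to simulate any view the distinguisher expects), submits the adjacent pair $D_0, D_1$ in the challenge step, hands the resulting challenge ciphertexts to $\mathcal{D}_{\mbox{\scriptsize\upshape\sffamily CDP}}$, and finally outputs $0$ when $\mathcal{D}_{\mbox{\scriptsize\upshape\sffamily CDP}}$ returns $1$ and $1$ otherwise.

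By the definition of \textbf{game $\boldsymbol 0$}, the challenge ciphertexts are exactly encryptions of $\mathcal{A}(D_b)$ for a uniformly random $b \in \{0,1\}$, so $\mathcal{T}_0$ feeds $\mathcal{D}_{\mbox{\scriptsize\upshape\sffamily CDP}}$ the very input distribution it was designed to distinguish. A direct computation then yields
\[ \Pr[\mathcal{T}_0 \text{ wins}] \;=\; \tfrac{1}{2}\, p_0 + \tfrac{1}{2}(1 - p_1) \;=\; \tfrac{1}{2} + \tfrac{1}{2}(p_0 - p_1) \;>\; \tfrac{1}{2} + \tfrac{\nu(\kappa)}{2}, \]
since $p_0 - p_1 > (e^{\epsilon} - 1)\, p_1 + \nu(\kappa) \geq \nu(\kappa)$. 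Hence $\mathcal{T}_0$ has non-negligible advantage in \textbf{game $\boldsymbol 0$}, and applying Lemma \ref{gamezeroPone} followed by Lemma \ref{gamePoneone} produces a PPT adversary $\mathcal{T}_1$ that wins \textbf{game $\boldsymbol 1$} with non-negligible advantage, contradicting the assumed security of $\Sigma$.

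The main obstacle I anticipate lies not in the advantage calculation, which is essentially arithmetic, but in verifying that $\mathcal{T}_0$ is a \emph{legitimate} \textbf{game $\boldsymbol 0$} adversary, i.e.\ that the view it provides to $\mathcal{D}_{\mbox{\scriptsize\upshape\sffamily CDP}}$ is distributionally identical to the view the distinguisher obtains when the mechanism is actually executed through $\Sigma$ in the real system, including any side information derived from $U$, the aggregator key $s$ and the coalition keys. Once this bookkeeping is pinned down, the chain Lemma \ref{gamezeroPone} $\to$ Lemma \ref{gamePoneone} $\to$ security of $\Sigma$ is mechanical.
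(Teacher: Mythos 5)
Your reduction has a genuine gap at the step where you conclude that a \textbf{game $\boldsymbol 0$} adversary with winning probability $\tfrac12+\tfrac{\nu(\kappa)}{2}$ contradicts the security of $\Sigma$. It does not: \textbf{game $\boldsymbol 0$} is \emph{legitimately winnable} with non-negligible --- in fact constant --- advantage over $1/2$. The adversary holds the decryption key $s$ and all challenge ciphertexts, so it can decrypt the aggregate $y=f_{\widehat{\mathcal{D}}}((x_i^{[b]})_{i\in U})$; since $\mathcal{A}$ is only $\epsilon$-\mbox{\upshape\sffamily DP} and not perfectly private, the distribution of $y$ under $b=0$ differs from that under $b=1$, and outputting the maximum-a-posteriori guess for $b$ given $y$ already succeeds with probability $\mathbb{E}_y[\max\{p,1-p\}]$, which can be as large as roughly $e^\epsilon/(1+e^\epsilon)>1/2$. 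This adversary never inspects the individual ciphertexts, so it cannot be turned into a \textbf{game $\boldsymbol 1$} winner; correspondingly, Lemma \ref{gamezeroPone} maps your $\mathcal{T}_0$ to an adversary in $\boldsymbol P$-\textbf{game} $\boldsymbol 1$ whose advantage is measured \emph{relative to the biased baseline} $P=\max\{p,1-p\}$, and a winning probability of $\tfrac12+\tfrac{\nu(\kappa)}{2}$ may well lie below $P$. Note also that your argument, if it went through, would yield $p_0-p_1\leq\text{\upshape\sffamily neg}(\kappa)$, i.e.\ $0$-\mbox{\upshape\sffamily CDP}, which is false for any mechanism whose decrypted aggregate actually depends on the data --- a sign that the contrapositive cannot be driven through in this form.

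This is precisely the difficulty the biased-coin machinery is designed to handle, and it is also where the factor $e^\epsilon$ must come from. The paper's proof does not argue by contradiction from a single advantage bound. Instead it regards one Turing machine simultaneously as $\mathcal{D}_{\mbox{\scriptsize\upshape\sffamily CDP}}=\mathcal{T}_0$, as $\mathcal{T}_{1,P}$ and as $\mathcal{T}_1$; it uses Lemma \ref{gamezeroPone} to identify $\Pr[\mathcal{D}_{\mbox{\scriptsize\upshape\sffamily CDP}}=a,B_{1/2}=b]$ with $\Pr[\mathcal{T}_{1,P}=a,B_p=b]$; it uses Lemma \ref{gamePoneone} together with the indistinguishability of equal-aggregate collections in \textbf{game $\boldsymbol 1$} to obtain $\Pr[\mathcal{T}_1=a\,|\,B_{1/2}=1]\leq\Pr[\mathcal{T}_1=a\,|\,B_{1/2}=0]+\text{\upshape\sffamily neg}(\kappa)$; and it extracts the multiplicative factor from the \mbox{\upshape\sffamily DP} bound $p\in[1/(1+e^\epsilon),\,e^\epsilon/(1+e^\epsilon)]$ on the posterior of the challenge bit given the aggregate, via $p_{max}/p_{min}=e^\epsilon$. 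Your write-up never produces this factor, and producing it is the substance of the theorem; the issue you flag (legitimacy of the simulated view) is real but secondary.
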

\begin{proof} Consider again \textbf{game} $\boldsymbol 1$, $\boldsymbol P$-\textbf{game} $\boldsymbol 1$ and \textbf{game} $\boldsymbol 0$. We first bound the probability $p=\Pr[B_{1/2}=0| Y=y]$ for the biased coin in $\boldsymbol P$-\textbf{game} $\boldsymbol 1$. Since the perturbation process was performed by $\mathcal{A}$, the random variable $Y$ corresponds to the output of $\mathcal{A}$ and we have
\begin{align*} e^{-\epsilon}\cdot\Pr[Y=y| B_{1/2}=0]\leq & \phantom{e^{-\epsilon}\cdot}\,\Pr[Y=y| B_{1/2}=1]\\
\leq & e^\epsilon\,\,\,\cdot\Pr[Y=y| B_{1/2}=0].
\end{align*}
By the Bayes-formula we get
\[p_{min}:=\frac{1}{1+e^\epsilon}\leq p\leq\frac{e^\epsilon}{1+e^\epsilon}=:p_{max}.\]
Now let $\mathcal{T}$ be a probabilistic polynomial-time Turing machine. Let $\mathcal{T}_{1,P}$ denote this Turing machine as adversary in the $\boldsymbol P$-\textbf{game} $\boldsymbol 1$ for any $P=\max\{p,1-p\}$ with\linebreak $p\in[p_{min},p_{max}]$ and let $\mathcal{T}_1$ denote the same Turing machine as adversary in \textbf{game} $\boldsymbol 1$. Let finally $\mathcal{D}_{\mbox{\scriptsize\upshape\sffamily CDP}}=\mathcal{T}_0$ denote the same machine as adversary in \textbf{game} $\boldsymbol 0$. Then for $a=0,1$:
\begin{align*} 
 \Pr[\mathcal{D}_{\mbox{\scriptsize\upshape\sffamily CDP}}=a, B_{1/2}=1]= & \phantom{p_{max}\cdot}\,\,\,\Pr[\mathcal{T}_{1,P}=a, B_p=1]\numberthis\label{eqcdp1}\\
 \leq & p_{max}\,\cdot\Pr[\mathcal{T}_{1,P}=a| B_p=1]\\
 = & p_{max}\,\cdot\Pr[\mathcal{T}_1=a| B_{1/2}=1]\numberthis\label{eqone1}\\
 \leq & p_{max}\,\cdot\Pr[\mathcal{T}_1=a| B_{1/2}=0]+\mbox{\upshape\sffamily neg}(\kappa)\\
 = & p_{max}\,\cdot\Pr[\mathcal{T}_{1,P}=a| B_p=0]+\mbox{\upshape\sffamily neg}(\kappa)\numberthis\label{eqone2}\\
 \leq & \frac{p_{max}}{p_{min}}\cdot\Pr[\mathcal{T}_{1,P}=a, B_p=0]+\mbox{\upshape\sffamily neg}(\kappa)\\
 = & \,\,\,\,\,\,\,e^\epsilon\,\,\cdot\Pr[\mathcal{T}_{1,P}=a, B_p=0]+\mbox{\upshape\sffamily neg}(\kappa)\\
 = & \,\,\,\,\,\,\,e^\epsilon\,\,\cdot\Pr[\mathcal{D}_{\mbox{\scriptsize\upshape\sffamily CDP}}=a, B_{1/2}=0]+\mbox{\upshape\sffamily neg}(\kappa).\numberthis\label{eqcdp2}
\end{align*}
Equations \eqref{eqcdp1} and \eqref{eqcdp2} hold because of Lemma \ref{gamezeroPone} and Equations \eqref{eqone1} and \eqref{eqone2} hold because of Lemma \ref{gamePoneone}. 
It follows that
\[\Pr[\mathcal{D}_{\mbox{\scriptsize\upshape\sffamily CDP}}=a| B_{1/2}=1]\leq e^\epsilon\cdot \Pr[\mathcal{D}_{\mbox{\scriptsize\upshape\sffamily CDP}}=a| B_{1/2}=0]+\mbox{\upshape\sffamily neg}(\kappa).\]
\end{proof}

As mentioned at the beginning of this section, we are considering a mechanism which preserves $\epsilon$-\mbox{\upshape\sffamily DP} given some event $Good$. Therefore, also Theorem \ref{cdptheorem} applies to this mechanism given $Good$. Accordingly, the mechanism \textit{unconditionally} preserves $(\epsilon,\delta)$-\mbox{\upshape\sffamily CDP}, where $\delta$ is a bound on the probability that $Good$ does not occur.

\section{Mechanisms for Differential Privacy}\label{dpmech}

\begin{figure*}\centering
\includegraphics[scale=0.4]{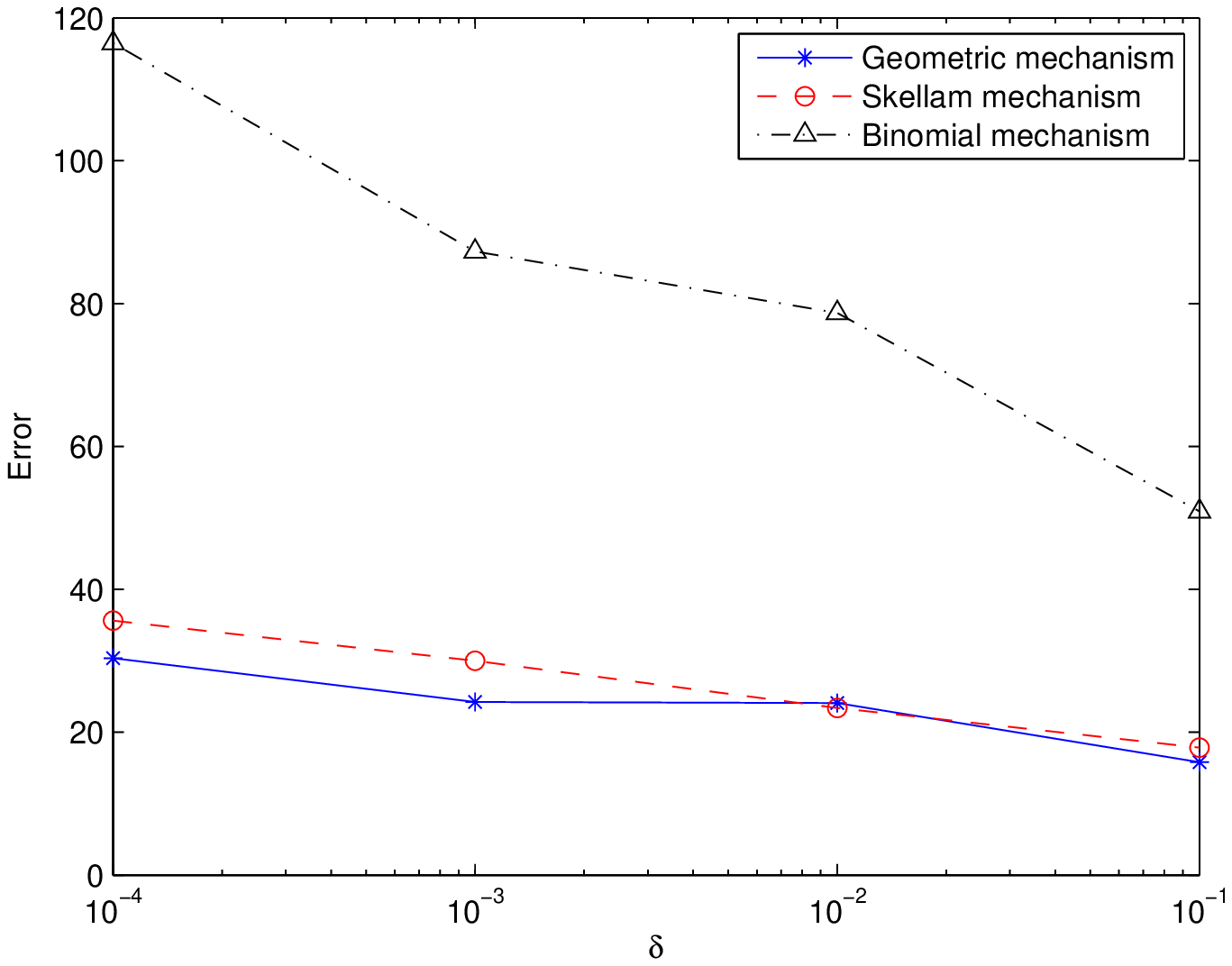}\,\,\,\,\,\,\,\,\,\,\,\,\,\,\,\,\,\,\,\,\,\includegraphics[scale=0.4]{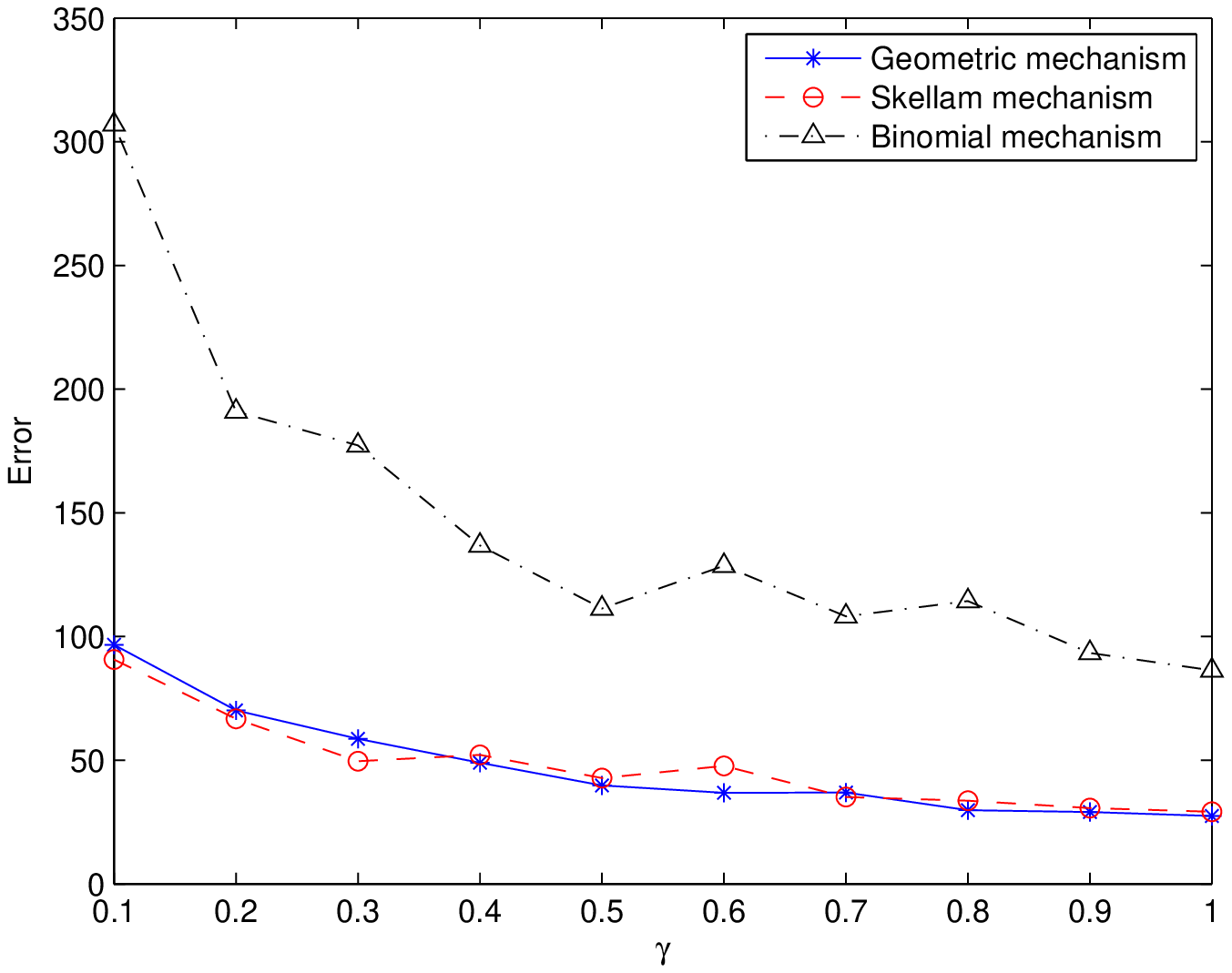}
\caption{Empirical error of the Geometric, Skellam and Binomial mechanisms. The fixed parameters are $\epsilon=0.1, S(f)=1, \beta=0.001$. The left graph shows the mean of the error in absolute value for variable $\delta$ and $\gamma=1$ over $100$ runs, the right graph is for variable $\gamma$ and $\delta=0.001$.}\label{accuracycomp}
\end{figure*}

In this section we recall the Geometric mechanism from \cite{2} and the Binomial mechanism from \cite{14} and introduce the Skellam mechanism. Since these mechanisms make use of a discrete probability distribution, they are well-suited for an execution through a secure PSA scheme, thereby preserving computational differential privacy as shown in the last section.

\begin{Thm}\label{privthm} Let $\epsilon>0$. For all databases $D\in\mathcal{D}^n$ the randomised mechanism \[\mathcal{A}(D):=f(D)+Y\]
preserves $(\epsilon,\delta)$-\mbox{\upshape\sffamily DP} with respect to any query $f$ with sensitivity $S(f)$, if $Y$ is distributed according to one of the following probability distributions:
\begin{enumerate}
\item $Y\sim\text{\upshape Geom}(\lambda)$ with $\lambda=\exp(\epsilon/S(f))$ (and $\delta=0$) \text{\upshape \cite{2}},
\item $Y\sim \text{\upshape Bin}(n',1/2)$ with $n'=64\cdot S(f)^2\cdot\log(2/\delta)/\epsilon^2$ \text{\upshape \cite{14}},
\item $Y\sim\text{\upshape Sk}(\mu)$\footnote{$\text{Sk}(\mu)$ denotes the symmetric Skellam distribution with mean $0$ and variance $\mu$. For details, see Appendix \ref{skellamsec}.} with 
\[\mu=\frac{\log(1/\delta)}{1-\cosh(\epsilon/S(f))+(\epsilon/S(f))\cdot\sinh(\epsilon/S(f))}.\]
\end{enumerate}
\end{Thm}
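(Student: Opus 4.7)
The three items require separate treatment; items~1 and~2 are essentially classical, while item~3 is the genuinely new contribution.

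For item~1 (Geometric) the standard symmetric-geometric argument from \cite{2} applies: the PMF satisfies $p_\lambda(k)\propto\lambda^{-|k|}$, so for adjacent databases $D_0,D_1$ and any output $y$, the ratio $p_\lambda(y-f(D_0))/p_\lambda(y-f(D_1)) = \lambda^{|y-f(D_1)|-|y-f(D_0)|}\leq\lambda^{|f(D_0)-f(D_1)|}\leq \lambda^{S(f)} = e^\epsilon$ by the triangle inequality, yielding pure $\epsilon$-\mbox{\upshape\sffamily DP} (i.e.\ $\delta=0$). For item~2 (Binomial), I would invoke the analysis of \cite{14}: after centering, $\text{\upshape Bin}(n',1/2)$ is a lattice approximation of a Gaussian with variance $n'/4$, and the constant $64$ is tuned so that, after splitting outputs into a ``good'' set of mass $\geq 1-\delta$ on which the privacy loss is at most $\epsilon$ (bounded by Stirling/tail estimates on the Binomial PMF) and a ``bad'' tail of mass $\leq\delta$, the $(\epsilon,\delta)$-\mbox{\upshape\sffamily DP} condition is satisfied.

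For item~3 (Skellam) the plan is a Chernoff bound on the privacy-loss random variable, exploiting the Poisson decomposition $Y = X_1 - X_2$ with $X_1,X_2\sim \text{\upshape Poi}(\mu/2)$ independent, which gives the clean moment generating function $M_Y(t) = \exp(\mu(\cosh(t)-1))$. Writing $\Delta := S(f)$ and using the standard tail characterisation of $(\epsilon,\delta)$-\mbox{\upshape\sffamily DP}, it suffices to show that the privacy loss $L = \log\bigl(p(Y)/p(Y-\Delta)\bigr)$ satisfies $\Pr[L>\epsilon]\leq\delta$ under the worst-case shift $\Delta$. Applying Markov/Chernoff to $L$ and substituting the MGF of $Y$ gives a bound of the form $\exp\bigl(-\sup_{t>0}\{t\epsilon - \mu\,\psi(t,\Delta)\}\bigr)$, where $\psi$ inherits the hyperbolic shape from $M_Y$. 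The Legendre transform at rate $\epsilon/\Delta$ produces exactly the expression $1 - \cosh(\epsilon/\Delta) + (\epsilon/\Delta)\sinh(\epsilon/\Delta)$ that appears in the denominator of the stated~$\mu$; setting the resulting upper bound equal to $\delta$ and solving for $\mu$ recovers the claim.

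The main obstacle is obtaining a tractable MGF bound for $L$: a direct approach through the Bessel-function representation $p(k)=e^{-\mu}I_{|k|}(\mu)$ is awkward because of the $|k|$ in the Bessel index. My preferred route is to bypass Bessel functions altogether and work in the two-dimensional Poisson coordinates $(X_1,X_2)$, where a shift of $Y$ by $\Delta\geq 0$ can be absorbed into a shift of $X_1$, reducing the analysis to log-ratios of integer-shifted Poisson PMFs whose moment generating functions are standard. A subtlety is that the shifted Poisson is not itself Poisson, so an intermediate inequality is needed to recast the exponent back into the clean $\cosh/\sinh$ form of the Skellam MGF before taking the Legendre transform.
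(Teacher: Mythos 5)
Items 1 and 2 you handle exactly as the paper does (a direct likelihood-ratio computation for the two-sided geometric, and an appeal to \cite{14} for the Binomial), so the only part that needs scrutiny is item 3, and there your plan has a genuine gap. The paper's proof (Theorem \ref{skmech}) does \emph{not} run a Chernoff bound on the privacy-loss random variable $L$; it splits the argument into two steps. First, it controls the \emph{pointwise} likelihood ratio by telescoping
\[
\frac{\Pr[Y=k]}{\Pr[Y=k+S(f)]}=\prod_{j=1}^{S(f)}\frac{I_{k+j-1}(\mu)}{I_{k+j}(\mu)}
\]
and invoking a sharp bound on ratios of consecutive modified Bessel functions (Lemma \ref{modbesrat}, imported from \cite{27}), which shows that the loss is at most $\epsilon$ whenever $k\leq\sinh(\epsilon/S(f))\cdot\mu-S(f)$. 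Second, it bounds the probability of the complementary event by a Chernoff bound on $Y$ itself (Lemma \ref{SKELLAMBOUND}), using $\E[e^{tY}]=e^{-\mu(1-\cosh t)}$ at $t=\ln(\sigma+\sqrt{1+\sigma^2})=\epsilon/S(f)$ for $\sigma=\sinh(\epsilon/S(f))$. The expression $1-\cosh(\epsilon/S(f))+(\epsilon/S(f))\sinh(\epsilon/S(f))$ is indeed a Legendre-type transform of $\cosh(t)-1$, but evaluated at the rate $\sinh(\epsilon/S(f))$, i.e.\ at a threshold on $Y$, not on $L$. Your sentence ``applying Markov/Chernoff to $L$ and substituting the MGF of $Y$'' conflates the two: the MGF of $Y$ says nothing about the tail of $L$ until you have translated the event $\{L>\epsilon\}$ into an event about $Y$, and that translation is exactly the Bessel-ratio bound you propose to discard.

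Your substitute --- passing to Poisson coordinates $(X_1,X_2)$ and absorbing the shift into $X_1$ --- does not recover the claim. Bounding the privacy loss of the joint $(X_1,X_2)$ against $(X_1+\Delta,X_2)$ and appealing to post-processing amounts to analysing a shifted Poisson, whose likelihood ratio $(\mu/2)^{\Delta}\,(j-\Delta)!/j!$ forgoes the cancellation between the two Poisson components that makes the symmetric Skellam loss small; already in the regime $\epsilon/S(f)\to 0$ this route forces a variance about twice the stated $\mu$, so it cannot yield the theorem as written. The ``intermediate inequality'' you defer to is precisely the hard part of the proof. To complete item 3 you must either import Lemma \ref{modbesrat} (or prove an equivalent bound $I_k(\mu)/I_{k+1}(\mu)\leq e^{\epsilon/S(f)}$ for all $k\leq\sinh(\epsilon/S(f))\cdot\mu-1$ from scratch) and then finish with the tail bound on $Y$, as the paper does.
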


We provide the proof of the third claim in Appendix \ref{skellamsec}. Executing these mechanisms through a PSA scheme requires the use of the known constant $\gamma$ which denotes the a priori estimate of the lower bound on the fraction of non-compromised users. For this case, we provide the accuracy bounds for the aforementioned mechanisms.

\begin{Thm}\label{errorthm} Let $\epsilon>0, 0<\delta<1, S(f)>0$ and let $0<\gamma<1$ be the a priori estimate of the lower bound on the fraction of non-compromised users in the network. By distributing the execution of a perturbation mechanism as described above and using the parameters from Theorem \ref{privthm}, we obtain $(\alpha,\beta)$-accuracy with the following parameters:
\begin{enumerate}
\item $\alpha=\frac{4\cdot S(f)}{\epsilon}\cdot\sqrt{\frac{1}{\gamma}\cdot\log\left(\frac{1}{\delta}\right)\cdot\log\left(\frac{2}{\beta}\right)}$ for the Geometric mechanism, where $\delta$ bounds the probability that no user has added noise \text{\upshape \cite{2}},
\item $\alpha=\frac{8\sqrt{2}\cdot S(f)}{\epsilon}\cdot\sqrt{\frac{1}{\gamma}\cdot\log\left(\frac{2}{\delta}\right)\cdot\log\left(\frac{2}{\beta}\right)}$ for the Binomial mechanism,
\item $\alpha=\frac{S(f)}{\epsilon}\cdot\left(\frac{1}{\gamma}\cdot\log\left(\frac{1}{\delta}\right)+\log\left(\frac{2}{\beta}\right)\right)$ for the Skellam mechanism.
\end{enumerate}
\end{Thm}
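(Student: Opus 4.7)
The overall strategy is uniform across the three mechanisms. For each one, my plan is to (i) specify the distribution from which every honest user draws her local share $r_i$, scaled with the a priori fraction $\gamma$; (ii) invoke the relevant closure or coin-flip property so that $\sum_{i\in U} r_i$ has (at least) the distribution of the noise $Y$ prescribed in Theorem \ref{privthm}, which delegates the $(\epsilon,\delta)$-\mbox{\upshape\sffamily DP} guarantee to that theorem; and (iii) bound the deviation $|\sum_{i\in U} r_i|$ via a concentration inequality tailored to the distribution. Only step (iii) is distribution-specific.

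For the Geometric mechanism (claim 1), I would follow the construction of Shi et al.\ \cite{2}: each honest user flips an independent Bernoulli coin with success probability $q=\log(1/\delta)/(\gamma n)$ and, on success, draws a fresh symmetric Geometric$(\lambda)$ sample with $\lambda=\exp(\epsilon/S(f))$, contributing $0$ otherwise. The event ``at least one honest user contributes noise'' has probability $\geq 1-(1-q)^{\gamma n}\geq 1-\delta$ and plays the role of $Good$; on this event Theorem \ref{privthm}(1) yields $\epsilon$-\mbox{\upshape\sffamily DP}. For accuracy, the number $K\sim \text{Bin}(\gamma n,q)$ of active contributors concentrates around its mean $\log(1/\delta)$ by a multiplicative Chernoff bound, and, conditional on $K$, the total noise is a sum of $K$ i.i.d.\ symmetric Geometric variables whose sub-exponential tail scales as $S(f)/\epsilon$; composing the two tails via a union bound yields the stated $\alpha$.

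For the Binomial mechanism (claim 2), each honest user draws a centered Binomial sample with $\lceil n'/(\gamma n)\rceil$ trials and bias $1/2$, where $n'=64\,S(f)^2\log(2/\delta)/\epsilon^2$. By the additive property of the Binomial family, the sum over the $\gamma n$ honest users is a centered Binomial with $N:=\gamma n\lceil n'/(\gamma n)\rceil$ trials, and $N$ is at most a constant multiple of $n'/\gamma$ after absorbing the ceiling. Theorem \ref{privthm}(2) delivers $(\epsilon,\delta)$-\mbox{\upshape\sffamily DP}, while Hoeffding's inequality gives $\Pr[|Z|>\alpha]\leq 2\exp(-2\alpha^2/N)$; equating the right-hand side with $\beta$ and substituting the bound on $N$ produces the claimed expression, up to the numerical constant.

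For the Skellam mechanism (claim 3), each honest user draws $\text{Sk}(\mu/(\gamma n))$, so that by the convolution property of the Skellam distribution (Appendix \ref{skellamsec}) the aggregate noise is distributed as $\text{Sk}(\mu)$, with $\mu$ as in Theorem \ref{privthm}(3); privacy then follows directly. The main obstacle is producing an accuracy bound that is \emph{linear} in $\log(1/\delta)$ rather than square-root, reflecting the sub-exponential (rather than sub-Gaussian) tails of the Skellam distribution. For this I would use the moment generating function $\mathbb{E}[e^{tZ}]=\exp(\mu(\cosh t-1))$ in a Chernoff-type argument: optimising $\exp(\mu(\cosh t-1)-t\alpha)$ over $t>0$ at $t^{*}=\mathrm{arcsinh}(\alpha/\mu)$ yields a Bennett-type tail that interpolates between a sub-Gaussian and a sub-exponential regime. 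The Taylor expansion $1-\cosh x+x\sinh x=x^{2}/2+O(x^{4})$ at $x=\epsilon/S(f)$ shows $\mu=\Theta(S(f)^{2}\log(1/\delta)/\epsilon^{2})$; applying the AM-GM-style inequality $2\sqrt{ab/\gamma}\leq a/\gamma+b$ to the resulting sub-Gaussian term $\sqrt{2\mu\log(2/\beta)/\gamma}$ then converts it into the linear-plus-linear form claimed in the statement. Matching the exact numerical constants in all three bounds simultaneously is likely to be the most delicate part of the computation.
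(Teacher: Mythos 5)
Your overall architecture (scale each user's noise by $1/(\gamma n)$, use closure under convolution to recover the distribution of Theorem \ref{privthm}, then apply a tail bound) matches the paper's, and the paper in fact only proves claim 3 in detail (claim 1 is cited from \cite{2} and claim 2 is dismissed as a standard Binomial tail bound, which your Hoeffding argument covers up to constants). However, there is a genuine accounting gap that runs through your write-up: you repeatedly sum the noise over exactly the $\gamma n$ honest users, so that the aggregate is distributed exactly as the centralised noise $Y$ (e.g.\ ``the aggregate noise is distributed as $\text{Sk}(\mu)$,'' and $K\sim\text{Bin}(\gamma n,q)$ with mean $\log(1/\delta)$). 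That reasoning yields accuracy bounds \emph{without} the $1/\gamma$ factor. Since $\gamma$ is only a lower bound on the honest fraction, the correct worst case for accuracy is that \emph{all} $n$ users add their share $\mu_{user}=\mu/(\gamma n)$, so the total noise is $\text{Sk}(\mu/\gamma)$ (respectively, the expected number of Geometric contributors is $\log(1/\delta)/\gamma$, and the Binomial trial count is $n\lceil n'/(\gamma n)\rceil\approx n'/\gamma$). This is precisely where the $\frac{1}{\gamma}\log(1/\delta)$ terms in all three stated bounds come from; your version would prove a different (too strong for the setup, too weak as an analysis) statement.

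For the Skellam bound specifically, your route through the optimised Chernoff exponent at $t^{*}=\mathrm{arcsinh}(\alpha/\mu)$, a Taylor expansion of $\mu$, and an AM--GM step is both more complicated than necessary and unlikely to land on the stated constants: the Bennett-type bound leaves a sub-exponential correction term in addition to the sub-Gaussian term, so after AM--GM you would obtain a coefficient strictly larger than $1$ on $\log(2/\beta)$. The paper instead plugs the \emph{non-optimal} value $t=\epsilon/S(f)$ into the Chernoff bound, getting
\[
\Pr[|N|>\alpha^\prime]\leq 2\,e^{-\mu(1-\cosh(\epsilon/S(f)))-(\epsilon/S(f))\alpha^\prime},
\qquad
\alpha^\prime=\frac{S(f)}{\epsilon}\Bigl(\log\tfrac{2}{\beta}+\bigl(\cosh\tfrac{\epsilon}{S(f)}-1\bigr)\mu\Bigr),
\]
and then bounds $(\cosh x-1)\mu\leq\log(1/\delta)$ directly from the definition of $\mu$ together with the elementary inequality $2(\cosh x-1)\leq x\sinh x$ --- no expansion of $\mu$ and no AM--GM are needed, and the constants come out exactly. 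I recommend adopting that shortcut and fixing the $1/\gamma$ accounting; with those two changes your argument becomes the paper's.
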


The second claim can be easily shown using a standard tail bound for the Binomial distribution. The proof of the third claim is provided in Appendix \ref{skellamsec}.\\
Theorem \ref{errorthm} shows that for constant $\delta,\beta,\gamma$ the errors of the three mechanisms are bounded by $O(S(f)/\epsilon)$ and therefore do not exceed known bounds in the centralised model. As pointed out in Section \ref{mechov}, the execution of the Geometric mechanism through a PSA scheme requires each user to generate full noise with a small probability. Complementary, the other two mechanisms allow all users to simply generate noise of small variance. While the accuracy bound of the Geometric is roughly a constant factor smaller than the bound of the Binomial, we obtain a better bound for this second approach using the Skellam mechanism. Specifically, the ratio between the factor $\log(2/\beta)+\log(1/\delta)/\gamma$ in the accuracy of the Skellam mechanism and the factor $\sqrt{\log(2/\beta)\cdot\log(1/\delta)/\gamma}$ in the accuracy of the Geometric mechanism goes to $0$ when $\delta$ and $\beta$ go to $0$. For example, fix $S(f)=1, \delta=0.01, \alpha=50, \beta=0.1, \gamma=1$. Then the Geometric mechanism preserves $(\epsilon,\delta)$-\mbox{\upshape\sffamily CDP} with $\epsilon\approx 0.30$, while the Skellam mechanism preserves $(\epsilon,\delta)$-\mbox{\upshape\sffamily CDP} with $\epsilon\approx 0.15$. An empirical accuracy comparison between the mechanisms is shown in Figure \ref{accuracycomp}. We observe that the error of the Geometric and the Skellam mechanisms have a similar behaviour for both variables $\delta$ and $\gamma$, while the error of the Binomial mechanism is roughly three times larger. 
Finally, we are able to prove our main result, Theorem \ref{mainthm}, which follows from the preceding analyses.

\begin{proof}[Proof of Theorem $\ref{mainthm}$] The claim follows from Theorem \ref{cdptheorem} together with Theorem \ref{PSATHEOREM} (instantiated with the efficient construction in Example \ref{DDHEXM}) and from Theorem \ref{privthm} together with Theorem \ref{errorthm}.
\end{proof}

\section{Conclusions}

In this work we continued a line of research opened by the work of Shi et al. \cite{2}. By lowering the security definition of a PSA scheme, we were able to prove that a secure scheme (in this sense) can be built upon key-homomorphic weak PRFs. Based on the DDH assumption, we gave an instantiation of a secure PSA scheme. If the plaintext space is large enough, it has a substantially more efficient decryption algorithm than the scheme in \cite{2} at the cost of a slightly less efficient encryption algorithm, and achieves non-adaptive security in the standard model. Using the notion of computational differential privacy, we provided a connection between a secure PSA scheme and a mechanism preserving differential privacy by showing that a differentially private mechanism preserves computational differential privacy if it is executed through a secure PSA scheme. Moreover, we compared the accuracy of the Geometric, the Binomial and the Skellam mechanisms which preserve differential privacy and are suitable for an execution through a PSA scheme. While the practical performances of the Geometric and the Skellam mechanisms are equally better than the performance of the Binomial mechanism, we were able to provide a slightly better bound for the Skellam mechanism at high privacy levels.

\appendix




\section{Proof of Theorem \ref{PSATHEOREM}}\label{ptproof}

Let \textbf{game} $\boldsymbol 1$ be the security game from Definition \ref{securitygame} instantiated for the PSA scheme of Theorem \ref{PSATHEOREM}. We need to show that the advantage $\mu_{1}(\kappa)$ of a probabilistic polynomial-time adversary $\mathcal{T}_1$ in winning this game is negligible in the security parameter $\kappa$. We define the following intermediate \textbf{game} $\boldsymbol 2$ for a probabilistic polynomial-time adversary $\mathcal{T}_2$ and then show that winning \textbf{game} $\boldsymbol 1$ is at least as hard as winning \textbf{game} $\boldsymbol 2$.\\

\noindent\begin{description}
 \item\textbf{Setup.} The challenger runs the \text{\sffamily Setup} algorithm on input security parameter $\kappa$ and returns public parameters $\mbox{\upshape\sffamily pp}$, time-steps $T$ and secret keys $s,s_1,\ldots,s_n$ with $s=(\bigast_{i=1}^n s_i)^{-1}$. It sends $\kappa, \mbox{\upshape\sffamily pp},T,s$ to $\mathcal{T}_2$.
\item\textbf{Queries.} The challenger flips a random bit $b\in_R\{0,1\}$. $\mathcal{T}_2$ chooses $U=\{i_1,\ldots,i_u\}\subseteq[n]$ and sends it to the challenger which returns $(s_i)_{i\in[n]\setminus U}$. $\mathcal{T}_2$ is allowed to query $(i,t,x_i)$ with $i\in U, t\in T, x_i\in\widehat{\mathcal{D}}$ and the challenger returns the following: if $b=0$ it sends $\text{\upshape\sffamily F}_{s_i}(t)\cdot \varphi(x_i)$ to $\mathcal{T}_2$; if $b=1$ it chooses 
\begin{align*} & h_{1,t},\ldots,h_{u-1,t}\in_R G^\prime,\\
 & h_{u,t}:= \prod_{j=1}^u \text{\upshape\sffamily F}_{s_{i_j}}(t)\cdot\left(\prod_{j=1}^{u-1}h_{j,t}\right)^{-1}
\end{align*}
and sends $h_{i,t}\cdot \varphi(x_i)$ to $\mathcal{T}_2$.
\item\textbf{Challenge.} $\mathcal{T}_2$ chooses $t^*\in T$ such that no encryption query at $t^*$ was made and queries a tuple $(x_i)_{i\in U}$. If $b=0$ the challenger sends $(\text{\upshape\sffamily F}_{s_i}(t^*)\cdot \varphi(x_i))_{i\in U}$ to $\mathcal{T}_2$; if $b=1$ it chooses 
\begin{align*} & h_{1,t^*},\ldots,h_{u-1,t^*}\in_R G^\prime,\\
 & h_{u,t^*}:= \prod_{j=1}^u \text{\upshape\sffamily F}_{s_{i_j}}(t^*)\cdot\left(\prod_{j=1}^{u-1}h_{j,t^*}\right)^{-1}
\end{align*}
and sends $(h_{i,t^*}\cdot \varphi(x_i))_{i\in U}$ to $\mathcal{T}_2$.
\item\textbf{Queries.} $\mathcal{T}_2$ is allowed to make the same type of queries as before with the restriction that no encryption query at $t^*$ can be made.
\item\textbf{Guess.} $\mathcal{T}_2$ outputs a guess about $b$.
\end{description}
The adversary wins the game if it correctly guesses $b$.\\

\begin{Lem}\label{gameonetwo} Let $\kappa$ be a security parameter. Let $\mathcal{T}_1$ be an adversary in \textbf{\upshape game $\boldsymbol 1$} with advantage $\mu_{1}(\kappa)>\text{\upshape\sffamily neg}(\kappa)$. Then there exists an adversary $\mathcal{T}_2$ in \textbf{\upshape game $\boldsymbol 2$} with advantage $\mu_{2}(\kappa)>\text{\upshape\sffamily neg}(\kappa)$.
\end{Lem}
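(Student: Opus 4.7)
The plan is to build a probabilistic polynomial-time adversary $\mathcal{T}_2$ for \textbf{game} $\boldsymbol 2$ that uses $\mathcal{T}_1$ as a subroutine and achieves essentially half the advantage of $\mathcal{T}_1$. The reduction is a simple forwarding strategy: $\mathcal{T}_2$ relays $\mbox{\upshape\sffamily pp},T,s$ and the keys $(s_i)_{i\in[n]\setminus U}$ from its own challenger to $\mathcal{T}_1$, and forwards every encryption query of $\mathcal{T}_1$ to the \textbf{game} $\boldsymbol 2$-challenger and the reply back. When $\mathcal{T}_1$ delivers the challenge time-step $t^*$ and two plaintext tuples $(x_i^{[0]})_{i\in U},(x_i^{[1]})_{i\in U}$ (with equal aggregate, by the rules of \textbf{game} $\boldsymbol 1$), $\mathcal{T}_2$ internally tosses a fair coin $b'\in\{0,1\}$, sends the single tuple $(x_i^{[b']})_{i\in U}$ to the challenger, and hands the received challenge ciphertexts to $\mathcal{T}_1$. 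After $\mathcal{T}_1$ returns a guess $g$, $\mathcal{T}_2$ outputs $0$ if $g=b'$ and $1$ otherwise.

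The analysis splits on the challenger's hidden bit $b$. If $b=0$, every ciphertext $\mathcal{T}_1$ receives (both during the query phases and in the challenge) has the form $\text{\upshape\sffamily F}_{s_i}(t)\cdot\varphi(\cdot)$, which is exactly the distribution of \textbf{game} $\boldsymbol 1$ played with challenge bit $b'$. Hence $\mathcal{T}_1$'s view is a perfect simulation of \textbf{game} $\boldsymbol 1$ and $\Pr[g=b'\mid b=0]=\tfrac{1}{2}+\mu_1(\kappa)$. If $b=1$, the responses to encryption queries at times $t\neq t^*$ do not involve $b'$ at all; at $t^*$ the ciphertexts are $(h_{i,t^*}\cdot\varphi(x_i^{[b']}))_{i\in U}$ with $h_{1,t^*},\ldots,h_{u-1,t^*}$ uniform in $G^\prime$ and $h_{u,t^*}$ chosen so that $\prod_{i\in U}h_{i,t^*}=\prod_{j=1}^u\text{\upshape\sffamily F}_{s_{i_j}}(t^*)$. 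The joint distribution of the challenge ciphertexts is therefore uniform over a coset whose representative is $\prod_{j=1}^u\text{\upshape\sffamily F}_{s_{i_j}}(t^*)\cdot\varphi(\sum_{i\in U} x_i^{[b']})$, and since the two tuples share the same aggregate this coset is identical for $b'=0$ and $b'=1$. Consequently $\mathcal{T}_1$'s view is statistically independent of $b'$, giving $\Pr[g=b'\mid b=1]=\tfrac{1}{2}$.

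Putting the two cases together,
\[\Pr[\mathcal{T}_2\text{ wins \textbf{game} $\boldsymbol 2$}]=\tfrac{1}{2}\bigl(\tfrac{1}{2}+\mu_1(\kappa)\bigr)+\tfrac{1}{2}\cdot\tfrac{1}{2}=\tfrac{1}{2}+\tfrac{\mu_1(\kappa)}{2},\]
so $\mu_2(\kappa)=\mu_1(\kappa)/2$, which is non-negligible in $\kappa$ whenever $\mu_1(\kappa)$ is. The principal obstacle lies in the independence argument in the $b=1$ branch: one has to verify carefully that the specific way $h_{u,t^*}$ is determined from $h_{1,t^*},\ldots,h_{u-1,t^*}$ still yields a uniformly random point of the correct coset, so that the equality of aggregates truly cancels out any dependence on $b'$. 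Beyond this, one must check that the reduction respects the rules of \textbf{game} $\boldsymbol 2$ — in particular, that $\mathcal{T}_2$ never forwards an encryption query at $t^*$, which is inherited directly from $\mathcal{T}_1$'s compliance with the rules of \textbf{game} $\boldsymbol 1$.
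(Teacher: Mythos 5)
Your reduction is exactly the one in the paper: $\mathcal{T}_2$ forwards all setup data and queries, flips its own fair coin to decide which of $\mathcal{T}_1$'s two challenge tuples to submit, and outputs $0$ precisely when $\mathcal{T}_1$ guesses that coin correctly, yielding the same case analysis (perfect simulation of \textbf{game} $\boldsymbol 1$ when $b=0$; one-time-pad--style independence from the coin when $b=1$, using the equal-aggregate constraint) and the same conclusion $\mu_2(\kappa)=\mu_1(\kappa)/2$. The proposal is correct and takes essentially the same approach as the paper.
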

\begin{proof} Given a successful adversary $\mathcal{T}_1$ in \textbf{game} $\boldsymbol 1$ we construct a successful adversary $\mathcal{T}_2$ in \textbf{game} $\boldsymbol 2$ as follows:
\noindent\begin{description}
 \item\textbf{Setup.} Receive $\kappa, \mbox{\upshape\sffamily pp},T,s$ from the \textbf{game} $\boldsymbol 2$-challenger and send it to $\mathcal{T}_1$.
\item\textbf{Queries.} Flip a random bit $b\in_R\{0,1\}$. Receive $U=\{i_1,\ldots,i_u\}\subseteq[n]$ from $\mathcal{T}_1$ and send it to the challenger. Forward the obtained response $(s_i)_{i\in[n]\setminus U}$ to $\mathcal{T}_1$. Forward $\mathcal{T}_1$'s queries $(i,t,x_i)$ with $i\in U, t\in T, x_i\in\widehat{\mathcal{D}}$ to the challenger and forward the obtained response $c_{i,t}$ to $\mathcal{T}_1$.
\item\textbf{Challenge.} $\mathcal{T}_1$ chooses $t^*\in T$ such that no encryption query at $t^*$ was made and queries two different tuples $(x_i^{[0]})_{i\in U},(x_i^{[1]})_{i\in U}$ with $\sum_{i\in U}x_i^{[0]}=\sum_{i\in U}x_i^{[1]}$. Query $(x_i^{[b]})_{i\in U}$ to the challenger. Obtain the response $(c_{i,t^*})_{i\in U}$ and forward it to $\mathcal{T}_1$. 
\item\textbf{Queries.} $\mathcal{T}_1$ can make the same type of queries as before with the restriction that no encryption query at $t^*$ can be made.
\item\textbf{Guess.} $\mathcal{T}_1$ gives a guess about $b$. If the guess is correct, then output $0$; if not, output $1$.
\end{description}
If $\mathcal{T}_1$ has output the correct guess about $b$ then $\mathcal{T}_2$ can say with high confidence that the challenge ciphertexts were generated using a weak PRF and therefore outputs $0$. On the other hand, if $\mathcal{T}_1$'s guess was not correct, then $\mathcal{T}_2$ can say with high confidence that the challenge ciphertexts were generated using random values and it outputs $1$. Formally:\par\medskip

\noindent\textbf{Case $\boldsymbol 1$.} Let $(c_{i,t^*})_{i\in U}=(\text{\upshape\sffamily F}_{s_i}(t^*)\cdot \varphi(x_i^{[b]}))_{i\in U}$. Then $\mathcal{T}_2$ perfectly simulates \textbf{game} $\boldsymbol 1$ for $\mathcal{T}_1$ and the distribution of the ciphertexts is the same as in \textbf{game} $\boldsymbol 1$:
\begin{align*} \Pr[\mathcal{T}_2 \text{ outputs } 0] = & \frac{1}{2}(\Pr[\mathcal{T}_1 \text{ outputs } 0 \,|\, b=0]+\Pr[\mathcal{T}_1 \text{ outputs } 1 \,|\, b=1])\\
 = & \Pr[\mathcal{T}_1 \text{ wins \textbf{game} } \boldsymbol 1]\\
 = & \frac{1}{2} + \mu_{1}(\kappa).
\end{align*}\par\medskip

\noindent \textbf{Case $\boldsymbol 2$.} Let $(c_{i,t^*})_{i\in U}=(h_{i,t^*}\cdot \varphi(x_i^{[b]}))_{i\in U}$. Then the ciphertexts are random with the constraint 
\[\prod_{i\in U}c_{i,t^*} =\prod_{i\in U}\text{\upshape\sffamily F}_{s_i}(t^*)\cdot \varphi(x_i^{[b]})\]
such that decryption yields the same sum as in case $1$.
Because of the perfect security of the one-time pad the probability that $\mathcal{T}_1$ wins \textbf{game} $\boldsymbol 1$ is $1/2$ and
\begin{align*} \Pr[\mathcal{T}_2 \text{ outputs } 1]= & \frac{1}{2}(\Pr[\mathcal{T}_1 \text{ outputs } 1 \,|\, b=0]+\Pr[\mathcal{T}_1 \text{ outputs } 0 \,|\, b=1])\\
 = & \Pr[\mathcal{T}_1 \text{ loses \textbf{game} } \boldsymbol 1]\\
 = & \frac{1}{2}.
\end{align*}
Finally we obtain that the advantage of $\mathcal{T}_2$ in winning \textbf{game} $\boldsymbol 2$ is
\[\mu_{2}(\kappa)=\frac{1}{2}\mu_{1}(\kappa)>\text{\upshape\sffamily neg}(\kappa).\]
\end{proof}

\noindent For a probabilistic polynomial-time adversary $\mathcal{T}_3$, we define a new intermediate \textbf{game} $\boldsymbol 3$ out of \textbf{game} $\boldsymbol 2$ by just cancelling the plaintext dependence in each step of \textbf{game} $\boldsymbol 2$, i.e. in the encryption queries and in the challenge, instead of $(i,t,x_i)$ the adversary $\mathcal{T}_3$ now just queries $(i,t)$ and the challenger in \textbf{game} $\boldsymbol 3$ sends
\begin{align*} \text{\upshape\sffamily F}_{s_i}(t), & \mbox{ if } b=0,\\
h_{i,t}, & \mbox{ if } b=1
\end{align*}
to the adversary $\mathcal{T}_3$. The rest remains the same as in \textbf{game}~$\boldsymbol 2$.\\
It is easy to see that if there exists a successful adversary in \textbf{game}~$\boldsymbol 2$ then there is also a successful adversary in \textbf{game}~$\boldsymbol 3$.

\begin{Lem}\label{gametwothree} Let $\kappa$ be a security parameter. Let $\mathcal{T}_2$ be an adversary in \textbf{\upshape game $\boldsymbol 2$} with advantage $\mu_{2}(\kappa)>\text{\upshape\sffamily neg}(\kappa)$. Then there exists an adversary $\mathcal{T}_3$ in \textbf{\upshape game $\boldsymbol 3$} with advantage $\mu_{3}(\kappa)>\text{\upshape\sffamily neg}(\kappa)$.
\end{Lem}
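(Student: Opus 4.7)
The plan is to show that a successful adversary $\mathcal{T}_2$ against \textbf{game $\boldsymbol 2$} can be used as a subroutine by an adversary $\mathcal{T}_3$ playing \textbf{game $\boldsymbol 3$}, with $\mathcal{T}_3$ simulating \textbf{game $\boldsymbol 2$} perfectly for $\mathcal{T}_2$. The only difference between the two games is that the game-3 oracle responses no longer contain the multiplicative factor $\varphi(x_i)$, and since $\varphi$ is deterministic and publicly computable, $\mathcal{T}_3$ can insert these plaintext-dependent factors on its own.

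Concretely, $\mathcal{T}_3$ forwards the setup output $\kappa,\mbox{\upshape\sffamily pp},T,s$ received from its challenger to $\mathcal{T}_2$, relays $\mathcal{T}_2$'s chosen subset $U\subseteq[n]$ to its own challenger, and hands back the returned keys $(s_i)_{i\in[n]\setminus U}$. For every encryption query $(i,t,x_i)$ issued by $\mathcal{T}_2$, the reduction submits $(i,t)$ to its own challenger, receives a value $v\in G^\prime$, and returns $v\cdot\varphi(x_i)\in G$ to $\mathcal{T}_2$. The challenge phase is handled identically: on the single tuple $(x_i)_{i\in U}$ that $\mathcal{T}_2$ submits at the chosen time-step $t^*$, $\mathcal{T}_3$ queries $(i,t^*)$ for every $i\in U$, obtains values $v_i$, and hands $(v_i\cdot\varphi(x_i))_{i\in U}$ back to $\mathcal{T}_2$. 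Finally, $\mathcal{T}_3$ outputs whatever guess $\mathcal{T}_2$ produces.

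The key observation is that this simulation is perfect in both branches of the game-3 bit. If the bit equals $0$, then $v=\text{\upshape\sffamily F}_{s_i}(t)$ and $v\cdot\varphi(x_i)=\text{\upshape\sffamily F}_{s_i}(t)\cdot\varphi(x_i)$, which is exactly the response $\mathcal{T}_2$ would receive in \textbf{game $\boldsymbol 2$} with bit $0$. If the bit equals $1$, then $v=h_{i,t}$ is drawn with the same joint distribution as the $h_{i,t}$ constructed by the \textbf{game $\boldsymbol 2$} challenger, because both challengers use the same recipe: uniform random elements of $G^\prime$ for the first $u-1$ responses at each time-step and a fixed corrective factor that forces the product to equal $\prod_{j=1}^u \text{\upshape\sffamily F}_{s_{i_j}}(t)$. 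Hence $v\cdot\varphi(x_i)=h_{i,t}\cdot\varphi(x_i)$ is exactly the \textbf{game $\boldsymbol 2$} response with bit $1$. Conditional on the value of the challenger's bit, the transcript presented to $\mathcal{T}_2$ is therefore distributed exactly as in \textbf{game $\boldsymbol 2$}, and $\mu_3(\kappa)=\mu_2(\kappa)>\mbox{\upshape\sffamily neg}(\kappa)$.

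I do not expect any substantive obstacle in this step: the argument is purely syntactic and exploits only the deterministic structure of $\varphi$ as a factor in the ciphertexts. The point of the step is to strip the transcript of its dependence on the queried plaintexts, so that the subsequent hybrid/weak-PRF argument (Lemma \ref{gamethreeprf}) can treat the game-3 challenger simply as a machine that answers $(i,t)$-queries with either $\text{\upshape\sffamily F}_{s_i}(t)$ or group elements of the prescribed random distribution.
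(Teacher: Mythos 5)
Your reduction is correct and is exactly the argument the paper has in mind: the paper omits the proof of this lemma entirely (stating only that it "is easy to see"), and the intended justification is precisely that the plaintext-dependent factor $\varphi(x_i)$ is deterministic and publicly computable from $\mbox{\upshape\sffamily pp}$, so $\mathcal{T}_3$ can graft it onto the game-3 oracle responses and simulate \textbf{game $\boldsymbol 2$} perfectly in both branches of the bit, giving $\mu_3(\kappa)=\mu_2(\kappa)$. No gaps.
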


\begin{Rem} For comparison to the proof of adaptive security by Shi et al. \text{\upshape \cite{2}} we emphasise that in the reduction from Aggregator Obliviousness to an intermediate problem (Proof of Theorem $1$ in \text{\upshape \cite{2}}) an adversary $\mathcal{B}$ has to compute the ciphertexts $c_i=g^{x_i}H(t)^{s_i}$ for all users $i\in[n]$ and for all (!) time-steps $t$, since $\mathcal{B}$ does not know in advance for which $i\in[n]$ it will have to use the PRF $H(t)^{s_i}$ and for which $i\in[n]$ it will have to use real random values. Thus, $\mathcal{B}$ has to program the random oracle $H$ in order to know for all $t$ the corresponding random number $z$ with $H(t)=g^z$ (where $g$ is a generator) for simulating the original Aggregator Obliviousness game. In contrast, in the reduction for our non-adaptive version of Aggregator Obliviousness, it is not necessary to program such an oracle, since the simulating adversary $\mathcal{T}_2$ knows in advance the set of non-compromised users and, for all (!) $t$, it can already decide for which $i\in[n]$ it will use the PRF (which in our case is $t^{s_i}$ instead of $H(t)^{s_i}$) and for which $i\in[n]$ it will use a real random value.
\end{Rem}

\noindent In the next step, the problem of distinguishing the weak PRF family 
\[\mathcal{F}=\{\text{\upshape\sffamily F}_s:M\to G^\prime\}_{s\in S}\] 
from a random function family has to be reduced to the problem of winning \textbf{\upshape game $\boldsymbol 3$}. We use a hybrid argument.

\begin{Lem}\label{gamethreeprf} Let $\kappa$ be a security parameter. Let $\mathcal{T}_3$ be an adversary in \textbf{\upshape game $\boldsymbol 3$} with advantage $\mu_{3}(\kappa)$. Then $\mu_{3}(\kappa)\leq\text{\upshape\sffamily neg}(\kappa)$ if 
\[\mathcal{F}=\{\text{\upshape\sffamily F}_s\,|\,\text{\upshape\sffamily F}_s:M\to G^\prime\}_{s\in S}\] 
is a weak PRF family.
\end{Lem}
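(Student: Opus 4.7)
The plan is to prove the lemma by a hybrid argument over the non-compromised user set $U=\{i_1,\ldots,i_u\}$ (with $u\leq n=\text{poly}(\kappa)$) declared by $\mathcal{T}_3$, reducing each single hybrid transition to the weak PRF indistinguishability of $\mathcal{F}$. For $l\in\{0,1,\ldots,u-1\}$ I would define the hybrid experiment $H_l$ to answer every query $(i_k,t)$ as follows: return a freshly sampled uniform $h_{i_k,t}\in G^\prime$ if $k\leq l$; return $\text{\upshape\sffamily F}_{s_{i_k}}(t)$ if $l+1\leq k\leq u-1$; and, for $k=u$, return the unique element of $G^\prime$ forcing $\prod_{k=1}^u y_{i_k,t}$ to equal $\prod_{k=1}^u \text{\upshape\sffamily F}_{s_{i_k}}(t)$ (with values generated for earlier queries at the same $t$ cached and reused across the subsequent queries for that $t$). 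A direct check shows that $H_0$ collapses to the $b=0$ branch of \textbf{game} $\boldsymbol 3$ (the $k=u$ rule then reduces to $\text{\upshape\sffamily F}_{s_{i_u}}(t)$), while $H_{u-1}$ matches the $b=1$ branch, so $\mu_3(\kappa)\leq\sum_{l=0}^{u-2}|\Pr[\mathcal{T}_3=1\mid H_l]-\Pr[\mathcal{T}_3=1\mid H_{l+1}]|$.

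To bound a single gap $|\Pr[\mathcal{T}_3=1\mid H_l]-\Pr[\mathcal{T}_3=1\mid H_{l+1}]|$ I would construct a weak PRF distinguisher $\mathcal{D}_{\mbox{\scriptsize PRF}}$ with oracle access to either $\text{\upshape\sffamily F}_{s^*}$ or a uniform random function on $M\to G^\prime$, where the oracle is probed on uniformly sampled inputs that the reduction recycles as the time-step set $T\subset M$. The reduction first samples $s\in_R S$ together with the public parameters and hands $(\kappa,\mbox{\upshape\sffamily pp},T,s)$ to $\mathcal{T}_3$. Once $\mathcal{T}_3$ commits to $U$, the reduction draws fresh independent uniform keys $s_i\in_R S$ for every $i\neq i_{l+1}$ (so that it can release $(s_i)_{i\in[n]\setminus U}$ as required), leaving $s_{i_{l+1}}$ implicitly defined by the setup relation on $S$; independence of $s$ from the remaining $s_i$'s makes this implicit key uniformly distributed on $S$. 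Each encryption query $(i_k,t)$ is then answered using the $H_l$-rule, with the oracle response $v_t$ substituted for the unknown $\text{\upshape\sffamily F}_{s_{i_{l+1}}}(t)$ both as the direct answer when $k=l+1$ and inside the product-constraint adjustment for $k=u$; the target of that adjustment equals $\text{\upshape\sffamily F}_s(t)^{-1}\cdot\prod_{i\in[n]\setminus U}\text{\upshape\sffamily F}_{s_i}(t)^{-1}$ by the setup relation and is fully computable from the quantities the reduction already possesses.

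A routine case analysis then shows that when the oracle is $\text{\upshape\sffamily F}_{s^*}$ the simulated view has exactly the joint distribution of $H_l$ (because $(\text{\upshape\sffamily F}_{s^*}(t))_{t\in T}$ and $(\text{\upshape\sffamily F}_{s_{i_{l+1}}}(t))_{t\in T}$ coincide in distribution for a uniform implicit key), while when the oracle is uniform the $(l+1)$-th slot becomes an independent uniform element and the simulated view matches $H_{l+1}$. Forwarding $\mathcal{T}_3$'s guess therefore transfers the $H_l$-vs-$H_{l+1}$ distinguishing advantage into a weak PRF advantage, which is $\text{\upshape\sffamily neg}(\kappa)$ by assumption; telescoping over the at most $u-1\leq n-1$ hybrid steps gives $\mu_3(\kappa)\leq\text{\upshape\sffamily neg}(\kappa)$.

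The main obstacle is handling the decryption key $s$: the reduction must commit to a concrete $s$ during Setup, well before it knows $U$ (and hence $i_{l+1}$), yet $s_{i_{l+1}}$ is precisely the key into which it plans to embed the PRF challenge. This is where the non-adaptivity of Definition \ref{securitygame} pays off: by sampling $s$ first and postponing the choice of the other $s_i$'s until after $U$ is declared, the implicit $s_{i_{l+1}}$ remains uniform in $S$ and the reduction can answer every query consistently, without any need to program a random oracle as in the adaptive proof of Shi et al.~\cite{2}.
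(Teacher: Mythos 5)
Your proof is correct and follows the same overall strategy as the paper's: a hybrid argument over the $u$ slots of $U$, with each hybrid gap reduced to the weak-PRF game by recycling the oracle's uniformly chosen inputs as the time-step set $T$ and exploiting the non-adaptive commitment to $U$. There are, however, two genuine differences. First, your hybrids keep the ``adjustment'' slot fixed at position $u$, whereas the paper's \textbf{game} $\boldsymbol 3_l$ places it at the moving position $l+b$; both work, and yours is arguably cleaner since consecutive hybrids then differ in exactly one slot. Second, and more substantively, the paper's distinguisher guesses \emph{in advance} the two indices $k_1,k_2$ that will occupy positions $i_l,i_{l+1}$ of $U$, aborting if wrong and paying a factor $n^2$ per hybrid (hence the overall $n^3$ loss the authors mention they ``cannot avoid''). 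You instead sample $s$ up front and defer sampling the individual $s_i$ until after $\mathcal{T}_3$ commits to $U$ --- which the game permits, since $(s_i)_{i\in[n]\setminus U}$ is only revealed after $U$ is declared --- so no guessing is needed and your reduction loses only the factor $u-1\leq n$ from telescoping. That is a real tightening of this step. One spot in your write-up deserves more care: the claim that the view is correct because $(\text{\upshape\sffamily F}_{s^*}(t))_{t}$ and $(\text{\upshape\sffamily F}_{s_{i_{l+1}}}(t))_{t}$ ``coincide in distribution for a uniform implicit key'' is not by itself sufficient, since your implicit $s_{i_{l+1}}$ is \emph{determined} by $s$ and the other sampled keys while the oracle key $s^*$ is independent of them. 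The argument does go through, but the right justification is that, conditioned on the adversary's view $(s,(s_i)_{i\in[n]\setminus U})$, the setup constraint can be absorbed into the slot-$u$ key, which never appears explicitly (slot $u$ is always computed from the constraint via $\prod_{k=1}^{u}\text{\upshape\sffamily F}_{s_{i_k}}(t)=\text{\upshape\sffamily F}_{s}(t)^{-1}\prod_{i\in[n]\setminus U}\text{\upshape\sffamily F}_{s_i}(t)^{-1}$); hence the keys in slots $1,\ldots,u-1$ are conditionally i.i.d.\ uniform and an independent $s^*$ may be planted in slot $l+1$. This is exactly the role the paper assigns to its second unspecified key $s_{i_l}$ in Case $1$ of its proof; you should make the corresponding slack variable explicit rather than leaning on equality of marginals.
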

\begin{proof} We define the following sequence of hybrid games, \textbf{game} $\boldsymbol 3_l$ with $l=1,\ldots,u-1$, for a probabilistic polynomial-time adversary $\mathcal{T}_{3_l}$.
\noindent\begin{description}
 \item\textbf{Setup.} As in \textbf{game} $\boldsymbol 3$.
\item\textbf{Queries.} The challenger flips a random bit $b\in_R\{0,1\}$. $\mathcal{T}_{3_l}$ chooses $U=\{i_1,\ldots,i_u\}\subseteq[n]$ and sends it to the challenger which returns $(s_i)_{i\in[n]\setminus U}$. $\mathcal{T}_{3_l}$ is allowed to query $(i,t)$ with $i\in U, t\in T$ and the challenger returns the following: if $i\notin \{i_1,\ldots,i_{l+b}\}$ it sends $\text{\upshape\sffamily F}_{s_i}(t)$ to $\mathcal{T}_{3_l}$; if $i\in \{i_1,\ldots,i_{l+b}\}$ it chooses 
\begin{align*} & h_{1,t},\ldots,h_{l-(1-b),t}\in_R G^\prime,\\
  & h_{l+b,t}:= \prod_{j=1}^{l+b} \text{\upshape\sffamily F}_{s_{i_j}}(t)\cdot\left(\prod_{j=1}^{l-(1-b)}h_{j,t}\right)^{-1}
\end{align*}
and sends $h_{i,t}$ to $\mathcal{T}_{3_l}$.
\item\textbf{Challenge.} $\mathcal{T}_{3_l}$ chooses $t^*\in T$ such that no encryption query at $t^*$ was made. The challenger chooses 
\begin{align*} & h_{1,t^*},\ldots,h_{l-(1-b),t^*}\in_R G^\prime,\\
 & h_{l+b,t^*}:= \prod_{j=1}^{l+b} \text{\upshape\sffamily F}_{s_{i_j}}(t^*)\cdot\left(\prod_{j=1}^{l-(1-b)}h_{j,t^*}\right)^{-1}
\end{align*}
and sends the following sequence to $\mathcal{T}_{3_l}$: 
\[(h_{1,t^*},\ldots,h_{l+b,t^*},\text{\upshape\sffamily F}_{s_{i_{l+b+1}}}(t^*),\ldots,\text{\upshape\sffamily F}_{s_{i_u}}(t^*)).\]
\item\textbf{Queries.} $\mathcal{T}_{3_l}$ can make the same type of queries as before with the restriction that no encryption query at $t^*$ can be made.
\item\textbf{Guess.} $\mathcal{T}_{3_l}$ outputs a guess about $b$.
\end{description}
The adversary wins the game if it correctly guesses $b$.\\

\noindent It is easy to see that \textbf{game} $\boldsymbol 3_1$ with $b=0$ corresponds to the case $b=0$ in \textbf{game} $\boldsymbol 3$ and \textbf{game} $\boldsymbol 3_{u-1}$ with $b=1$ corresponds to the case $b=1$ in \textbf{game} $\boldsymbol 3$. Moreover the ciphertexts in \textbf{game} $\boldsymbol 3_l$ with $b=1$ have the same distribution as the ciphertexts in \textbf{game} $\boldsymbol 3_{l+1}$ with $b=0$. Therefore
\[\Pr[\mathcal{T}_{3_{l+1}}\mbox{ wins \textbf{game }} \boldsymbol 3_{l+1}\,|\,b=0]=\Pr[\mathcal{T}_{3_l}\mbox{ loses \textbf{game }} \boldsymbol 3_l\,|\,b=1].\]
Using a successful adversary $\mathcal{T}_{3_l}$ in \textbf{\upshape game $\boldsymbol 3_l$} we construct a successful probabilistic polynomial-time distinguisher $\mathcal{D}_{\mbox{\scriptsize PRF}}$ which has access to an oracle 
\[\mathcal{O}(\cdot)\in_R\{\text{\upshape\sffamily F}_{s^\prime}(\cdot),\mbox{\upshape\sffamily rand}(\cdot)\},\mbox{ where }\] 
\[\text{\upshape\sffamily F}_{s^\prime}:M\to G^\prime\] 
is a weak PRF and 
\[\mbox{\upshape\sffamily rand}:M\to G^\prime\] 
is a real random function. $\mathcal{D}_{\mbox{\scriptsize PRF}}$ gets $\kappa$ as input and proceeds as follows.
\noindent\begin{enumerate}
\item Choose two indices $k_1,k_2\in[n]$ and guess that $k_1, k_2$ will be the $i_l^{\text{th}},i_{l+1}^{\text{th}}$ indices in $U$ specified by the adversary $\mathcal{T}_{3_l}$. This guess will be correct with probability $1/n^2$.
\item Choose $s\in_R S$,$s_i\in_R S$ for all $i\in[n]\setminus\{k_1,k_2\}$, generate $\mbox{\upshape\sffamily pp}$ and $T$ with $t\in_R M$ for all $t\in T$. Compute $\text{\upshape\sffamily F}_s(t)$ for all $t\in T$.
\item Make oracle queries for $t$ and receive $\mathcal{O}(t)$ for all $t\in T$.
\item Send $\kappa, \mbox{\upshape\sffamily pp},T, s$ to $\mathcal{T}_{3_l}$.
\item\textbf{Queries.} Receive $U=\{i_1,\ldots,i_u\}\subseteq[n]$ from $\mathcal{T}_{3_l}$. If $i_l\neq k_1$ or $i_{l+1}\neq k_2$ then abort. Else send $(s_i)_{i\in[n]\setminus U}$ to $\mathcal{T}_{3_l}$.
 If $\mathcal{T}_{3_l}$ queries $(i,t)$ with $i\in U, t\in T$ then return the following: if $i\notin \{i_1,\ldots,i_{l+1}\}$ send $\text{\upshape\sffamily F}_{s_i}(t)$ to $\mathcal{T}_{3_l}$; if $i=i_{l+1}=i_{k_2}$ send $\mathcal{O}(t)$ to $\mathcal{T}_{3_l}$; if $i\in \{i_1,\ldots,i_l\}$ choose 
\begin{align*} & h_{1,t},\ldots,h_{l-1,t}\in_R G^\prime,\\ 
 & h_{l,t}:=\left(\text{\upshape\sffamily F}_s(t)\cdot\mathcal{O}(t)\cdot\prod_{j=1}^{l-1}h_{j,t}\cdot\prod_{i\in[n]\setminus\{i_1,\ldots,i_{l+1}\}} \text{\upshape\sffamily F}_{s_i}(t)\right)^{-1}
\end{align*}
and send $h_{i,t}$ to $\mathcal{T}_{3_l}$.
\item\textbf{Challenge.} $\mathcal{T}_{3_l}$ chooses $t^*\in T$ such that no encryption query at $t^*$ was made. Choose 
\begin{align*} & h_{1,t^*},\ldots,h_{l-1,t^*}\in_R G^\prime,\\ 
 & h_{l,t^*}:=\left(\text{\upshape\sffamily F}_s(t^*)\cdot\mathcal{O}(t^*)\cdot\prod_{j=1}^{l-1}h_{j,t^*}
\cdot\prod_{i\in[n]\setminus\{i_1,\ldots,i_{l+1}\}} \text{\upshape\sffamily F}_{s_i}(t^*)\right)^{-1}
\end{align*}
and send the following sequence to $\mathcal{T}_{3_l}$: 
\[(h_{1,t^*},\ldots,h_{l,t^*},\mathcal{O}(t^*),\text{\upshape\sffamily F}_{s_{i_{l+2}}}(t^*),\ldots,\text{\upshape\sffamily F}_{s_{i_u}}(t^*)).\]
\item\textbf{Queries.} $\mathcal{T}_{3_l}$ can make the same type of queries as before with the restriction that no encryption query at $t^*$ can be made.
\item\textbf{Guess.} $\mathcal{T}_{3_l}$ outputs a guess about whether the $i_{l+1}^{\text{th}}$ element is random or pseudo-random. Output the same guess.\footnote{Essentially, here the specification of the set of non-compromised users before making any query allows $\mathcal{D}_{\mbox{\scriptsize PRF}}$ to be consistent with pseudo-random values or real random values in its replies to the queries.} 
\end{enumerate}
If $\mathcal{T}_{3_l}$ has output the correct guess about whether the $i_{l+1}^{\text{th}}$ element is random or pseudo-random then $\mathcal{D}_{\mbox{\scriptsize PRF}}$ can distinguish between $\text{\upshape\sffamily F}_{s^\prime}(\cdot)$ and $\mbox{\upshape\sffamily rand}(\cdot)$. Now we prove this result formally and show that, in this way, \textbf{game} $\boldsymbol 3_l$ is perfectly simulated by $\mathcal{T}_{3_l}$.\par\medskip

\noindent\textbf{Case $\boldsymbol 1$.} Let $\mathcal{O}(\cdot)=\text{\upshape\sffamily F}_{s^\prime}(\cdot)$. Define $s_{i_{l+1}}:=s^\prime$. Since $S,M$ are groups, there exists an element $s_{i_l}$ with 
\[s_{i_l}=(s\ast\bigast_{i\in[n]\setminus\{i_l\}} s_i)^{-1}\]
and for all $t\in T$:
\[\left(\text{\upshape\sffamily F}_s(t)\cdot \text{\upshape\sffamily F}_{s^\prime}(t)\cdot\prod_{i\in[n]\setminus\{i_1,\ldots,i_{l+1}\}} \text{\upshape\sffamily F}_{s_i}(t)\right)^{-1} = \prod_{j=1}^l \text{\upshape\sffamily F}_{s_{i_j}}(t).\]
Then for all $t\in T$ the value $h_{l,t}$ is equal to
\[\left(\text{\upshape\sffamily F}_s(t)\cdot\prod_{j=1}^{l-1}h_{j,t}\cdot \text{\upshape\sffamily F}_{s^\prime}(t)\cdot\prod_{i\in[n]\setminus\{i_1,\ldots,i_{l+1}\}} \text{\upshape\sffamily F}_{s_i}(t)\right)^{-1}=\prod_{j=1}^l \text{\upshape\sffamily F}_{s_{i_j}}(t)\cdot\left(\prod_{j=1}^{l-1}h_{j,t}\right)^{-1}.\]
Therefore the distribution of the ciphertexts corresponds exactly to the case in \textbf{game} $\boldsymbol 3_l$ with $b=0$.\\

\noindent\textbf{Case $\boldsymbol 2$.} Let $\mathcal{O}(\cdot)=\mbox{\upshape\sffamily rand}(\cdot)$. Define the random elements 
\[h_{l+1,t}:=\mbox{\upshape\sffamily rand}(t)\] 
for all $t\in T$. Since $S,M$ are groups, there exists an element $s^\prime\in S$ with 
\[s^\prime=(s\ast\bigast_{i\in[n]\setminus\{i_l,i_{l+1}\}} s_i)^{-1}.\] 
Let $s_{i_l}\in_R S$ and $s_{i_{l+1}}:=s^\prime * s_{i_l}^{-1}$. Then for all $t\in T$:
\[\left(\text{\upshape\sffamily F}_s(t)\cdot\prod_{i\in[n]\setminus\{i_1,\ldots,i_{l+1}\}} \text{\upshape\sffamily F}_{s_i}(t)\right)^{-1} = \prod_{j=1}^{l+1} \text{\upshape\sffamily F}_{s_{i_j}}(t)\] 
and the value $h_{l,t}$ is equal to
\[\left(\text{\upshape\sffamily F}_s(t)\cdot h_{l+1,t}\cdot\prod_{j=1}^{l-1}h_{j,t}\cdot\prod_{i\in[n]\setminus\{i_1,\ldots,i_{l+1}\}} \text{\upshape\sffamily F}_{s_i}(t)\right)^{-1}\]
and equivalently 
\[h_{l+1,t} = \prod_{j=1}^{l+1} \text{\upshape\sffamily F}_{s_{i_j}}(t)\cdot\left(\prod_{j=1}^{l}h_{j,t}\right)^{-1}.\]
Therefore the distribution of the ciphertexts corresponds exactly to the case in \textbf{game} $\boldsymbol 3_l$ with $b=1$.\par\medskip

\noindent Without loss of generality, let 
\[\Pr[\mathcal{T}_{3_l}\mbox{ wins \textbf{game }} \boldsymbol 3_l\,|\,b=0]\geq \Pr[\mathcal{T}_{3_l}\mbox{ loses \textbf{game }} \boldsymbol 3_l\,|\,b=1].\]
All in all, we obtain
\begin{align*} & \Pr[\mathcal{T}_{3_l}\mbox{ wins \textbf{game }} \boldsymbol 3_l\,|\,b=0]-\Pr[\mathcal{T}_{3_l}\mbox{ loses \textbf{game }} \boldsymbol 3_l\,|\,b=1]\\
 = & \Pr[\mathcal{D}_{\mbox{\scriptsize PRF}}^{\text{\upshape\sffamily F}_{s^\prime}(\cdot)}(\kappa)=1\,|\,i_l=k_1,i_{l+1}=k_2]-\Pr[\mathcal{D}_{\mbox{\scriptsize PRF}}^{\mbox{\scriptsize\upshape\sffamily rand}(\cdot)}(\kappa)\,\,\,=1\,|\,i_l=k_1,i_{l+1}=k_2]\\
 \leq & \,n^2\cdot (\Pr[\mathcal{D}_{\mbox{\scriptsize PRF}}^{\text{\upshape\sffamily F}_{s^\prime}(\cdot)}(\kappa)=1]-\Pr[\mathcal{D}_{\mbox{\scriptsize PRF}}^{\mbox{\scriptsize\upshape\sffamily rand}(\cdot)}(\kappa)=1])
\end{align*}
and since $n$ is polynomial in $\kappa$, this expression is negligible by the pseudo-randomness of $\text{\upshape\sffamily F}_{s^\prime}(\cdot)$ on uniformly chosen input. Therefore, the advantage of $\mathcal{T}_{3_l}$ in winning \textbf{game} $\boldsymbol 3_l$ is negligible.\\ 
Finally, by a hybrid argument we have:
\begin{align*} & \Pr[\mathcal{T}_{3}\mbox{ wins \textbf{game }} \boldsymbol 3]\\
 = & \frac{1}{2}(\Pr[\mathcal{T}_{3}\mbox{ wins \textbf{game }} \boldsymbol 3\,|\,b=0]+\Pr[\mathcal{T}_{3}\mbox{ wins \textbf{game }} \boldsymbol 3\,|\,b=1])\\
 = & \frac{1}{2}(\Pr[\mathcal{T}_{3_1}\mbox{ wins \textbf{game }} \boldsymbol 3_1\,|\,b=0]+\Pr[\mathcal{T}_{3_{u-1}}\mbox{ wins \textbf{game }} \boldsymbol 3_{u-1}\,|\,b=1])\\
 = & \frac{1}{2}+\frac{1}{2}(\Pr[\mathcal{T}_{3_1}\mbox{ wins \textbf{game }} \boldsymbol 3_1\,|\,b=0]-\Pr[\mathcal{T}_{3_{u-1}}\mbox{ loses \textbf{game }} \boldsymbol 3_{u-1}\,|\,b=1])\\
 = & \frac{1}{2}+\frac{1}{2}\sum_{l=1}^{u-1}\Pr[\mathcal{T}_{3_l}\mbox{ wins \textbf{game }} \boldsymbol 3_l\,|\,b=0]-\Pr[\mathcal{T}_{3_l}\mbox{ loses \textbf{game }} \boldsymbol 3_l\,|\,b=1]\\
 = & \frac{1}{2}+\text{\upshape\sffamily neg}(\kappa).
\end{align*}
\end{proof}\par\medskip

\noindent We can now complete the proof of Theorem \ref{PSATHEOREM}.

\begin{proof}[Proof of Theorem $\ref{PSATHEOREM}$]  By Lemma \ref{gameonetwo} - \ref{gamethreeprf}:\\
$\mu_1(\kappa)=2\cdot\mu_2(\kappa)=2\cdot\mu_3(\kappa)=2\cdot(u-1)\cdot n^2\cdot\text{\upshape\sffamily neg}(\kappa)<2\cdot n^3\cdot\text{\upshape\sffamily neg}(\kappa)=\text{\upshape\sffamily neg}(\kappa)$.
\end{proof}



\section{The Skellam mechanism}\label{skellamsec}

\subsection{Preliminaries}

As observed before, the distributed noise generation is feasible with a probability distribution function closed under convolution. For this purpose, we recall the Skellam distribution.

\begin{Def}[Skellam Distribution \cite{29}]\label{skellam} Let $\mu_1$, $\mu_2> 0$. A discrete random variable $X$ is drawn according to the Skellam distribution with parameters $\mu_1,\mu_2$ (short: $X\sim \text{\upshape Sk}(\mu_1,\mu_2)$) if it has the following probability distribution function $\psi_{\mu_1,\mu_2}\colon\mathbb{Z}\mapsto\mathbb{R}$:
\[\psi_{\mu_1,\mu_2}(k)=e^{-(\mu_1+\mu_2)}\left(\frac{\mu_1}{\mu_2}\right)^{k/2}I_k(2\sqrt{\mu_1\mu_2}),\]
where $I_k$ is the modified Bessel function of the first kind (see pages $374$--$378$ in $\cite{28}$).
\end{Def}

A random variable $X\sim \text{\upshape Sk}(\mu_1,\mu_2)$ has variance $\mu_1+\mu_2$ and can be generated as the difference of two random variables drawn according to the Poisson distribution of mean $\mu_1$ and $\mu_2$, respectively \cite{29}. Note that the Skellam distribution is not generally symmetric. However, we mainly consider the particular case $\mu_1=\mu_2=\mu/2$ and refer to this symmetric distribution as $\text{Sk}(\mu) = \text{Sk}(\mu/2,\mu/2)$.

\begin{Lem}[\cite{29}]\label{sksum} Let $X\sim \text{\upshape Sk}(\mu_1,\mu_2)$ and $Y\sim \text{\upshape Sk}(\mu_3,\mu_4)$ be independent random variables. Then $Z:=X+Y$ is distributed according to $\text{\upshape Sk}(\mu_1+\mu_3,\mu_2+\mu_4)$.
\end{Lem}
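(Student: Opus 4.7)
The plan is to exploit the well-known representation of the Skellam distribution as a difference of two independent Poisson random variables, mentioned right after Definition~\ref{skellam} (citing \cite{29}). Concretely, I would write $X = X_1 - X_2$ where $X_1 \sim \text{Poisson}(\mu_1)$ and $X_2 \sim \text{Poisson}(\mu_2)$ are independent, and similarly $Y = Y_1 - Y_2$ with $Y_1 \sim \text{Poisson}(\mu_3)$ and $Y_2 \sim \text{Poisson}(\mu_4)$ independent; by the independence of $X$ and $Y$, the four variables $X_1, X_2, Y_1, Y_2$ can be chosen mutually independent.

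Then the key step is to regroup:
\[ Z = X + Y = (X_1 + Y_1) - (X_2 + Y_2). \]
By the standard additivity property of the Poisson distribution, $X_1 + Y_1 \sim \text{Poisson}(\mu_1 + \mu_3)$ and $X_2 + Y_2 \sim \text{Poisson}(\mu_2 + \mu_4)$, and these two sums remain independent since they involve disjoint subsets of the mutually independent family $\{X_1,X_2,Y_1,Y_2\}$. Applying Definition~\ref{skellam} backwards, $Z$ is the difference of two independent Poissons with these parameters, which is exactly $\text{Sk}(\mu_1+\mu_3,\mu_2+\mu_4)$.

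An equivalent route, which I would mention as an alternative, is via moment generating functions: the MGF of $\text{Sk}(\mu_1,\mu_2)$ is $M(t) = \exp(-(\mu_1+\mu_2) + \mu_1 e^t + \mu_2 e^{-t})$, and independence of $X,Y$ gives $M_{X+Y}(t) = M_X(t) M_Y(t)$, which immediately matches the MGF of $\text{Sk}(\mu_1+\mu_3,\mu_2+\mu_4)$. This avoids any appeal to the Poisson representation but requires verifying the MGF formula from the series defining $\psi_{\mu_1,\mu_2}$ using the generating-function identity for $I_k$.

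There is no real obstacle here: the only nontrivial ingredient is Poisson additivity (or, in the MGF variant, the closed-form expression for the Skellam MGF), both of which are classical. I would favour the Poisson-difference route because it is shortest and uses only what the paper has already stated right after Definition~\ref{skellam}.
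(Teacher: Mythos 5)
Your proof is correct: the Poisson-difference representation plus Poisson additivity (with the regrouping $Z=(X_1+Y_1)-(X_2+Y_2)$ on a product space where the four Poisson variables are mutually independent) is the standard argument, and the MGF alternative is equally valid. The paper itself gives no proof of Lemma~\ref{sksum} — it simply cites \cite{29} — so there is nothing to deviate from; your argument uses exactly the representation the paper states immediately after Definition~\ref{skellam}.
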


An induction step shows that the sum of $n$ i.i.d. symmetric Skellam random variables with variance $\mu$ is a symmetric Skellam random variable with variance $n\mu$. Suppose that adding symmetric Skellam noise with variance $\mu$ preserves $(\epsilon,\delta)$-\mbox{\upshape\sffamily DP}. Recall that the network is given an a priori known estimate $\gamma$ of the lower bound on the fraction of non-compromised users. We define $\mu_{user}=\mu/(\gamma n)$ and instruct the users to add symmetric Skellam noise with variance $\mu_{user}$ to their own data. If compromised users will not add noise, the total noise will be still sufficient to preserve $(\epsilon,\delta)$-\mbox{\upshape\sffamily DP}.\\ 
For our analysis, we will use the following bound on the ratio of modified Bessel functions of the first kind.

\begin{Lem}[\cite{27}]\label{modbesrat} For real $k>0$ let $I_k(\mu)$ be the modified Bessel function of the first kind and order $k$. Then
 \[\frac{I_k(\mu)}{I_{k+1}(\mu)}<\frac{\mu}{-(k+1)+\sqrt{(k+1)^2+\mu^2}}.\]
\end{Lem}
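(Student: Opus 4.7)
The plan is to reduce the inequality to a statement about a fixed point of the decreasing map naturally associated with the three-term recurrence of modified Bessel functions. First, rationalising the right-hand side gives
\[\frac{\mu}{-(k+1)+\sqrt{(k+1)^2+\mu^2}}=\frac{(k+1)+\sqrt{(k+1)^2+\mu^2}}{\mu},\]
so the inequality is equivalent to the lower bound $\rho_k(\mu)>x^*_k$ for the ratio $\rho_k(\mu):=I_{k+1}(\mu)/I_k(\mu)$, where
\[x^*_k=\frac{\mu}{(k+1)+\sqrt{(k+1)^2+\mu^2}}\]
is the unique positive root of the quadratic $\mu x^2+2(k+1)x-\mu=0$.

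Next I would derive a one-step recurrence for $\rho_k$. Dividing the standard identity $I_{k-1}(\mu)-I_{k+1}(\mu)=(2k/\mu)\,I_k(\mu)$ by $I_k(\mu)$ and shifting the index gives
\[\rho_k(\mu)=\frac{\mu}{2(k+1)+\mu\,\rho_{k+1}(\mu)}=:\Phi_k\bigl(\rho_{k+1}(\mu)\bigr).\]
A direct computation shows that $x^*_k$ is the unique positive fixed point of the strictly decreasing map $\Phi_k$, and that $k\mapsto x^*_k$ is itself strictly decreasing in $k$.

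The final step is a monotone iteration. From the Taylor series $I_k(\mu)=\sum_{m\geq 0}(\mu/2)^{k+2m}/(m!\,(k+m)!)$ one reads off $\rho_k(\mu)\sim \mu/(2(k+1))$ as $k\to\infty$, and a comparison with $x^*_k\sim \mu/(2(k+1))$ shows that $\rho_k(\mu)-x^*_k\to 0^+$; in particular $\rho_k(\mu)>x^*_k$ holds for all sufficiently large $k$. I would then propagate this strict inequality backwards by carrying the two-sided invariant $x^*_{j+1}<\rho_{j+1}(\mu)<x^*_j$: under this invariant, the decreasing map $\Phi_k$ delivers
\[x^*_k=\Phi_k(x^*_k)<\Phi_k\bigl(\rho_{k+1}(\mu)\bigr)=\rho_k(\mu)<\Phi_k(x^*_{k+1}),\]
and the remaining verification $\Phi_k(x^*_{k+1})\leq x^*_{k-1}$ follows from the fixed-point defining equations for $x^*_{k-1}$ and $x^*_{k+1}$ together with the monotonicity of $k\mapsto x^*_k$.

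The hard part will be exactly this sign-flipping back-propagation: because $\Phi_k$ is decreasing, a one-sided bound on $\rho_{k+1}$ reverses direction and does not suffice, so one is forced to maintain the two-sided invariant above and to verify both halves are preserved by a single step of the recurrence. An alternative, arguably cleaner route is to unfold the recurrence into the convergent continued fraction
\[\rho_k(\mu)=\cfrac{\mu}{2(k+1)+\cfrac{\mu^2}{2(k+2)+\cfrac{\mu^2}{2(k+3)+\cdots}}},\]
whose partial denominators strictly dominate those of the constant-denominator continued fraction that converges to the fixed point $x^*_k$, and to obtain the strict inequality by a direct tail comparison; convergence of both continued fractions is standard, and the monotonicity of the approximants in the denominators furnishes the sign.
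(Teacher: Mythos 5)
The paper does not prove this lemma at all: it is imported verbatim from \cite{27}, so there is no in-paper argument to compare yours against and your proposal must stand on its own. Most of it does. The reduction to $\rho_k:=I_{k+1}(\mu)/I_k(\mu)>x^*_k$, the recurrence $\rho_k=\Phi_k(\rho_{k+1})$ with $\Phi_k(x)=\mu/(2(k+1)+\mu x)$, and the backward propagation of the two-sided invariant all check out; in particular $\Phi_k(x^*_{k+1})=\mu/\bigl(k+\sqrt{(k+2)^2+\mu^2}\bigr)<\mu/\bigl(k+\sqrt{k^2+\mu^2}\bigr)=x^*_{k-1}$, so the inductive step is sound. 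The genuine gap is the base case. You claim that the leading-order asymptotics $\rho_k\sim\mu/(2(k+1))\sim x^*_k$ show that $\rho_k-x^*_k\to 0^+$, but agreement of leading terms cannot determine the sign of the difference: one needs the next-order terms, $\rho_k=\tfrac{\mu}{2(k+1)}\bigl(1-\tfrac{\mu^2/4}{(k+1)(k+2)}+O(k^{-4})\bigr)$ versus $x^*_k=\tfrac{\mu}{2(k+1)}\bigl(1-\tfrac{\mu^2/4}{(k+1)^2}+O(k^{-4})\bigr)$, and comparing them is exactly the inequality to be proved in the large-$k$ regime, so it does not come for free; moreover the upper half $\rho_{k+1}<x^*_k$ of the invariant also has to be verified at the base level, which you never address. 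The proposed ``cleaner'' continued-fraction route does not repair this: the value of a positive continued fraction is \emph{not} monotone in its partial denominators (increasing $b_2$ increases the value, increasing $b_3$ decreases it, and so on with alternating sign), so ``the partial denominators strictly dominate'' does not by itself furnish the sign, and any correct version of that comparison ends up re-deriving the two-sided invariant you are trying to seed.

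A way to close the gap that avoids the base case entirely: the Tur\'an-type inequality $I_{k+1}(\mu)^2>I_k(\mu)I_{k+2}(\mu)$ (a Cauchy--Schwarz consequence of the series representation) gives $\rho_{k+1}<\rho_k$; substituting this into your own recurrence $1/\rho_k=2(k+1)/\mu+\rho_{k+1}$ yields $\mu\rho_k^2+2(k+1)\rho_k-\mu>0$, hence $\rho_k>x^*_k$ in one line, with no induction and no asymptotics.
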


For the privacy analysis of the Skellam mechanism, we need a tail bound on the symmetric Skellam distribution.

\begin{Lem}\label{SKELLAMBOUND} Let $X\sim \text{\upshape Sk}(\mu)$ and let $\sigma>0$. Then, for all $\tau\geq -\sigma\mu$, 
\begin{align*}
  & \Pr[X>\sigma\mu + \tau] \\
  \leq & \ e^{-\mu\left(1-\sqrt{1+\sigma^2}+\sigma\ln(\sigma + \sqrt{1+\sigma^2})\right)-\tau\ln(\sigma + \sqrt{1+\sigma^2})}.   
\end{align*}
\end{Lem}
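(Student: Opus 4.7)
\textbf{Proof plan for Lemma \ref{SKELLAMBOUND}.} The approach will be a Chernoff-style bound with an explicit choice of the dual parameter. I start from the observation that a symmetric Skellam variable $X \sim \text{Sk}(\mu)$ has the representation $X = Y_1 - Y_2$, where $Y_1, Y_2 \sim \text{Poisson}(\mu/2)$ are independent. Using independence and the Poisson moment generating function, I get, for every $s \in \mathbb{R}$,
\[
\mathbb{E}[e^{sX}] \;=\; e^{(\mu/2)(e^{s}-1)} \cdot e^{(\mu/2)(e^{-s}-1)} \;=\; e^{\mu(\cosh(s)-1)}.
\]
(If one prefers, the same identity follows from Definition \ref{skellam} by summing the generating series of $I_k$, but the Poisson route is more direct.) Markov's inequality applied to $e^{sX}$ with $s>0$ then yields, for any $a\in\mathbb{R}$,
\[
\Pr[X > a] \;\leq\; e^{-sa}\,\mathbb{E}[e^{sX}] \;=\; \exp\bigl(\mu(\cosh(s)-1) - sa\bigr).
\]

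The second step is to choose $s$ cleverly. I take $s \coloneqq \ln(\sigma + \sqrt{1+\sigma^2}) = \operatorname{arcsinh}(\sigma)$, which is strictly positive for $\sigma>0$ so that the Markov step above is valid. This is exactly the Chernoff-optimal parameter for the threshold $a = \sigma\mu$ (it is the unique root of $\mu\sinh(s) = \sigma\mu$), but I will use the same $s$ for the shifted threshold $a = \sigma\mu + \tau$ and simply accept the looseness that comes from not re-optimising in $\tau$. By construction, $\sinh(s) = \sigma$ and therefore $\cosh(s) = \sqrt{1+\sigma^2}$.

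The final step is substitution. Plugging $a = \sigma\mu + \tau$ and the computed values of $s$ and $\cosh(s)$ into the Chernoff bound, the exponent becomes
\[
\mu\bigl(\sqrt{1+\sigma^2}-1\bigr) - s\sigma\mu - s\tau \;=\; -\mu\bigl(1 - \sqrt{1+\sigma^2} + \sigma s\bigr) - \tau s,
\]
which, after reintroducing $s = \ln(\sigma+\sqrt{1+\sigma^2})$, matches the exponent stated in the lemma verbatim. The hypothesis $\tau \geq -\sigma\mu$ is only needed so that the bound remains informative (i.e.\ the threshold $\sigma\mu+\tau$ is nonnegative); the derivation itself is valid for any real $\tau$, it just becomes vacuous when $\tau < -\sigma\mu$.

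There is no real obstacle in this proof: the only non-routine step is guessing the right $s$, and this guess is uniquely forced by the target expression (comparing the $\tau$-coefficient in the stated exponent to the $-s\tau$ from Markov pins down $s$ immediately). Lemma \ref{modbesrat} plays no role here; it is presumably reserved for the $(\epsilon,\delta)$-\mbox{\upshape\sffamily DP} analysis of the Skellam mechanism that establishes Theorem \ref{privthm}(3).
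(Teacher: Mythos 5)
Your proposal is correct and follows essentially the same route as the paper: Markov's inequality applied to $e^{tX}$, the moment generating function $\E[e^{tX}]=e^{-\mu(1-\cosh(t))}$ (which the paper cites rather than rederives from the Poisson-difference representation), and the same choice $t=\ln(\sigma+\sqrt{1+\sigma^2})$ giving $\cosh(t)=\sqrt{1+\sigma^2}$. The substitution yielding the stated exponent is identical, so there is nothing substantive to distinguish the two arguments.
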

\begin{proof}
We use standard techniques from probability theory. Applying Markov's inequality, for any $t>0$,
\begin{align*}\Pr[X>\sigma\mu + \tau]= & \Pr[e^{tX}>e^{t(\sigma\mu + \tau)}]\\
\leq & \frac{\E[e^{tX}]}{e^{t(\sigma\mu + \tau)}}.
\end{align*}
As shown in \cite{30}, for $X\sim \text{Sk}(\mu)$, the moment generating function of $X$ is 
\[\E[e^{tX}]=e^{-\mu(1-\cosh(t))},\]
where $\cosh(t)= (e^t + e^{-t})/2$. Hence, we have
\[\Pr[X>\sigma\mu + \tau]\leq e^{-\mu(1-\cosh(t)+t\sigma)-t\tau}.\]
Fix $t=\ln(\sigma + \sqrt{1+\sigma^2})$. In order to conclude the proof, we observe that $\cosh(\ln(\sigma + \sqrt{1+\sigma^2}))=\sqrt{1+\sigma^2}$.
\end{proof}

\noindent One can easily verify that, for $\sigma>0$,
\[1-\sqrt{1+\sigma^2}+\sigma\ln(\sigma + \sqrt{1+\sigma^2})> 0.\]

\subsection{Analysis of the Skellam mechanism}

In this section, we provide a bound on the variance $\mu$ of the symmetric Skellam distribution (as stated in Theorem \ref{privthm}) that is needed in order to preserve $(\epsilon,\delta)$-differential privacy and we compute the error that is thus introduced.\par\bigskip\bigskip

\noindent\textbf{Privacy analysis}


\begin{Thm}\label{skmech} Let $\epsilon>0$ and let $0<\delta<1$. For all databases $D\in\mathcal{D}^n$ the randomised mechanism \[\mathcal{A}_{Sk}(D):=f(D)+Y\]
preserves $(\epsilon,\delta)$-\mbox{\upshape\sffamily DP} with respect to any query $f$ of sensitivity $S(f)$, where $Y\sim\text{\upshape Sk}(\mu)$ with 
\[\mu\geq\frac{\log(1/\delta)}{1-\cosh(\epsilon/S(f))+(\epsilon/S(f))\cdot\sinh(\epsilon/S(f))}.\]
\end{Thm}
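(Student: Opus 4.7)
The plan is to reduce the privacy claim to a tail bound on the Skellam distribution, exploiting the known expressions for the pmf in terms of modified Bessel functions and the two lemmas stated earlier in this appendix.

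First, I would set $\Delta := f(D_0) - f(D_1)$, which satisfies $|\Delta| \le S(f)$. By the symmetry of $\text{Sk}(\mu)$ under negation, and by monotonicity of the forthcoming bound in $|\Delta|$, it suffices to prove the claim for $\Delta = S(f) > 0$. Then I would invoke the standard privacy-loss characterization of $(\epsilon,\delta)$-differential privacy: it is enough to show that
\[\Pr_{k \sim \text{Sk}(\mu)}\!\bigl[L(k) > \epsilon\bigr] \le \delta, \qquad L(k) := \ln\frac{\psi_\mu(k)}{\psi_\mu(k+\Delta)},\]
since for $y = f(D_0)+k$ the ratio of densities of $\mathcal{A}_{Sk}(D_0)$ and $\mathcal{A}_{Sk}(D_1)$ at $y$ is exactly $e^{L(k)}$.

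Second, using $\psi_\mu(j) = e^{-\mu} I_{|j|}(\mu)$, the privacy loss becomes $L(k) = \ln(I_{|k|}(\mu)/I_{|k+\Delta|}(\mu))$. For $k \ge 0$ I would telescope this as $\sum_{j=0}^{\Delta-1}\ln(I_{k+j}(\mu)/I_{k+j+1}(\mu))$ and apply Lemma \ref{modbesrat} to each factor. Since the right-hand side of that lemma is increasing in its index, all factors are bounded by the one at $j=\Delta-1$, giving
\[L(k) \le \Delta\,\ln\!\Bigl(\tfrac{k+\Delta}{\mu} + \sqrt{\bigl(\tfrac{k+\Delta}{\mu}\bigr)^2 + 1}\Bigr).\]
Writing $\sigma = (k+\Delta)/\mu$ and using the identity $\sigma + \sqrt{1+\sigma^2} = e^{\operatorname{arsinh}\sigma}$, the condition $L(k) \le \epsilon$ is equivalent to $\sigma \le \sinh(\epsilon/\Delta)$, i.e.\ $k \le \sinh(\epsilon/\Delta)\,\mu - \Delta$. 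For $k \le 0$ a brief separate check (splitting into $-\Delta \le k \le 0$ and $k < -\Delta$, where $L(k) \le 0$) shows that $L(k) > \epsilon$ cannot occur when $\mu$ is large enough, so the bad event reduces to $\{k > \sinh(\epsilon/\Delta)\mu - \Delta\}$.

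Third, I would apply Lemma \ref{SKELLAMBOUND} with $\sigma = \sinh(\epsilon/\Delta)$ and $\tau = -\Delta$. Using $\sqrt{1+\sinh^2(x)} = \cosh(x)$ and $\ln(\sinh(x)+\cosh(x)) = x$, the exponent collapses to
\[-\mu\bigl(1 - \cosh(\epsilon/\Delta) + (\epsilon/\Delta)\sinh(\epsilon/\Delta)\bigr) + \Delta\cdot(\epsilon/\Delta),\]
and, up to an absorbable $e^\epsilon$ factor (or by slightly enlarging $\mu$), requiring this to be at most $\log\delta$ yields precisely the stated lower bound on $\mu$ with $\Delta = S(f)$. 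Finally, the symmetric direction $\Pr[\mathcal{A}_{Sk}(D_1)\in R] \le e^\epsilon \Pr[\mathcal{A}_{Sk}(D_0)\in R] + \delta$ follows identically from the symmetry of the Skellam distribution.

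The main obstacle I anticipate is the careful algebraic bookkeeping in the step where Lemma \ref{modbesrat} is iterated $\Delta$ times: the lemma's bound is strict and its index is shifted, so one has to be precise about monotonicity in $j$ to justify replacing every factor by its largest, and then track the residual $-\Delta$ shift through Lemma \ref{SKELLAMBOUND}. Handling the short intermediate range $-\Delta \le k \le 0$ of the support, where $|k|$ and $|k+\Delta|$ are both small and Lemma \ref{modbesrat} does not directly apply, is a minor but non-automatic side case that must be verified by hand.
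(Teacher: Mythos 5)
Your proposal follows essentially the same route as the paper's own proof: telescope the Skellam pmf ratio into $S(f)$ Bessel-function quotients, bound each via Lemma \ref{modbesrat}, reduce the bad event to $\{k > \sinh(\epsilon/S(f))\cdot\mu - S(f)\}$, and apply Lemma \ref{SKELLAMBOUND} with $\sigma = \sinh(\epsilon/S(f))$ and $\tau = -S(f)$. The two points where you add extra care --- the boundary range $-S(f)\le k\le 0$ and the residual $e^{\epsilon}$ factor left in the tail bound --- are both glossed over in the paper's write-up, so nothing essential is missing from your argument.
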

\begin{proof} Let $D_0, D_1\in\mathcal{D}^n$ be adjacent databases with $|f(D_0)-f(D_1)|\leq S(f)$. The largest ratio between $\Pr[\mathcal{A}_{Sk}(D_0)=R]$ and $\Pr[\mathcal{A}_{Sk}(D_1)=R]$ is reached when $k:=R-f(D_0)=R-f(D_1)-S(f)\geq 0$, where $R$ is any possible output of $\mathcal{A}_{Sk}$.
Then, by Lemma \ref{modbesrat}, for all possible outputs $R$ of $\mathcal{A}_{Sk}$:
\begin{align*} \frac{\Pr[\mathcal{A}_{Sk}(D_0)=R]}{\Pr[\mathcal{A}_{Sk}(D_1)=R]}= & \frac{\Pr[Y=k]}{\Pr[Y=k+S(f)]}\\
 = & \prod_{j=1}^{S(f)}\frac{\Pr[Y=k+j-1]}{\Pr[Y=k+j]}\\
 < & \prod_{j=1}^{S(f)}\frac{\mu}{-(k+j)+\sqrt{(k+j)^2+\mu^2}}\\
 \leq & e^\epsilon.\numberthis\label{skmechdp}
\end{align*}
Inequality \eqref{skmechdp} holds if $k\leq\sinh(\epsilon/S(f))\cdot\mu-S(f)$, since it implies $k\leq\sinh(\epsilon/S(f))\cdot\mu-j$ for all $j=1,\ldots,S(f)$ and
\[\frac{\mu}{-(k+j)+\sqrt{(k+j)^2+\mu^2}}\leq e^{\epsilon/S(f)}.\] 
Applying Lemma \ref{SKELLAMBOUND} with $\sigma=\sinh(\epsilon/S(f))$ and $\tau=-S(f)$, we get
\begin{align*} & \Pr[k>\sinh(\epsilon/S(f))\cdot\mu-S(f)]\\
\leq & e^{-\mu\cdot(1-\cosh(\epsilon/S(f))+(\epsilon/S(f))\cdot\sinh(\epsilon/S(f)))+\epsilon}
\end{align*}
and this expression is set to be smaller or equal than $\delta$. This inequality is satisfied if
\[\mu\geq\frac{\log(1/\delta)}{1-\cosh(\epsilon/S(f))+(\epsilon/S(f))\cdot\sinh(\epsilon/S(f))}.\]
\end{proof}

\begin{Rem} The bound on $\mu$ from Theorem \text{\upshape \ref{skmech}} is smaller than $2\cdot (S(f)/\epsilon)^2\cdot\log(1/\delta)$, thus the standard deviation of $Y\sim\text{\upshape Sk}(\mu)$ is linear in $S(f)/\epsilon$ (for constant $\delta$).
\end{Rem}\par\bigskip\bigskip

\noindent\textbf{Accuracy analysis}

\begin{Thm}\label{erroranalysis} Let $\epsilon>0$ and $0<\delta<1$. Then for all $0<\beta<1$ the mechanism specified in Theorem $\ref{skmech}$ has $(\alpha,\beta)$-accuracy, where
\[\alpha=\frac{S(f)}{\epsilon}\cdot\left(\log\left(\frac{2}{\beta}\right)+\log\left(\frac{1}{\delta}\right)\right).\]
\end{Thm}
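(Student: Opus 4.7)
The plan is to exploit the symmetry of the symmetric Skellam distribution and then apply Lemma~\ref{SKELLAMBOUND} with exactly the same choice of parameter that was made in the privacy proof of Theorem~\ref{skmech}, so that the quantity $\mu$ enters cleanly and cancels against the definition from Theorem~\ref{skmech}.

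First I would observe that $Y\sim\text{Sk}(\mu)$ is symmetric about $0$, hence $\Pr[|Y|>\alpha] = 2\Pr[Y>\alpha]$, and it suffices to show $\Pr[Y>\alpha]\leq \beta/2$. Next I would invoke Lemma~\ref{SKELLAMBOUND} with the choice $\sigma=\sinh(\epsilon/S(f))$, for which $\sqrt{1+\sigma^2}=\cosh(\epsilon/S(f))$ and $\ln(\sigma+\sqrt{1+\sigma^2})=\epsilon/S(f)$, together with $\tau := \alpha - \sigma\mu$ (trivially satisfying $\tau\geq -\sigma\mu$ since $\alpha\geq 0$). Writing $x:=\epsilon/S(f)$ for brevity, the lemma yields
\[
\Pr[Y>\alpha] \;\leq\; \exp\!\Bigl(-\mu\bigl(1-\cosh(x)+x\sinh(x)\bigr)-(\alpha-\sinh(x)\mu)\cdot x\Bigr),
\]
and after expanding and cancelling the two $\sinh(x)\mu\cdot x$ terms of opposite sign, the exponent simplifies to $\mu(\cosh(x)-1) - \alpha x$.

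Substituting $\alpha x = \log(2/\beta)+\log(1/\delta)$, the bound becomes
\[
\Pr[Y>\alpha] \;\leq\; \tfrac{\beta}{2}\cdot\delta\cdot\exp\!\bigl(\mu(\cosh(x)-1)\bigr),
\]
so it suffices to establish $\mu(\cosh(x)-1)\leq\log(1/\delta)$. By the definition of $\mu$ from Theorem~\ref{skmech}, this is equivalent to $\cosh(x)-1 \leq 1-\cosh(x)+x\sinh(x)$, i.e.\ to the purely analytic inequality
\[
2(\cosh(x)-1) \;\leq\; x\sinh(x), \qquad x>0.
\]
This is the main technical content of the proof and can be verified by term-by-term comparison of the Taylor series: $2(\cosh(x)-1)=\sum_{k\geq 1} 2x^{2k}/(2k)!$ while $x\sinh(x)=\sum_{k\geq 1} x^{2k}/(2k-1)!$, and $2/(2k)! \leq 1/(2k-1)!$ holds for every $k\geq 1$.

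Putting the pieces together, $\Pr[Y>\alpha]\leq\beta/2$, and therefore $\Pr[|Y|>\alpha]\leq\beta$, which is exactly the claimed $(\alpha,\beta)$-accuracy. I do not foresee any real obstacle: the calculation is essentially bookkeeping, and the only non-trivial step is the elementary hyperbolic inequality above. The apparently delicate matching between $\alpha$ and $\mu$ is forced by the choice $\sigma=\sinh(\epsilon/S(f))$, which is precisely the value used to define $\mu$, so the two expressions are designed to telescope.
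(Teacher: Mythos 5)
Your proof is correct and follows essentially the same route as the paper: both apply the moment-generating-function bound behind Lemma~\ref{SKELLAMBOUND} at $t=\epsilon/S(f)$ (equivalently, the lemma with $\sigma=\sinh(\epsilon/S(f))$), use symmetry to pass to $2\Pr[Y>\alpha]$, and reduce the claim to $\mu(\cosh(x)-1)\leq\log(1/\delta)$, i.e.\ to $2(\cosh(x)-1)\leq x\sinh(x)$. The only difference is presentational: you verify this hyperbolic inequality explicitly by comparing Taylor coefficients, whereas the paper leaves it implicit in the step bounding $\alpha^\prime$ by $\alpha$.
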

\begin{proof} Let $\mu=\frac{\log(1/\delta)}{1-\cosh(\epsilon/S(f))+(\epsilon/S(f))\cdot\sinh(\epsilon/S(f))}$ be the bound on the variance for the Skellam mechanism provided in Theorem \ref{skmech}. Now, as in the proof of Lemma \ref{SKELLAMBOUND}, for $\alpha^\prime>0$,
\begin{align*} \Pr[|N|>\alpha^\prime]= & 2\cdot\Pr[N>\alpha^\prime]\\
\leq & 2\cdot e^{-\mu\cdot(1-\cosh(\epsilon/S(f)))-(\epsilon/S(f))\cdot\alpha^\prime}
\end{align*}
and this expression is set to be equal to $\beta$. Solving this equality for $\alpha^\prime$ yields
\begin{align*} \alpha^\prime= & \frac{S(f)}{\epsilon}\cdot\left(\log\left(\frac{2}{\beta}\right)+\left(\cosh\left(\frac{\epsilon}{S(f)}\right)-1\right)\cdot\mu\right)\\
 \leq & \frac{S(f)}{\epsilon}\cdot\left(\log\left(\frac{2}{\beta}\right)+\log\left(\frac{1}{\delta}\right)\right)\\
= & \alpha.
\end{align*}
\end{proof}\par\medskip

\noindent For the distributed noise generation, each single user adds symmetric Skellam noise with variance $\mu_{user}=\mu/(\gamma n)$ to her data. The worst case for accuracy is when all $n$ users add noise, thus the total noise $N$ is a symmetric Skellam variable with variance $\mu/\gamma$ and the accuracy becomes
\[\alpha=\frac{S(f)}{\epsilon}\cdot\left(\log\left(\frac{2}{\beta}\right)+\frac{1}{\gamma}\cdot\log\left(\frac{1}{\delta}\right)\right),\]
proving the third claim of Theorem \ref{errorthm}.

\addcontentsline{toc}{chapter}{Literatur}
\bibliography{Literatur}

\end{document}